\documentclass[journal,twoside]{IEEEtran}

\usepackage[T1]{fontenc}
\usepackage{times}
\usepackage[pdftex]{hyperref} %pagebackref
\usepackage{amsmath, amsthm, amssymb}
\usepackage{graphicx,amsfonts, color, multirow, marvosym, latexsym,array,booktabs}
\usepackage[table]{xcolor}% http://ctan.org/pkg/xcolor
%    Q-circuit version 2
%    Copyright (C) 2004  Steve Flammia & Bryan Eastin
%    Last modified on: 9/16/2011
%
%    This program is free software; you can redistribute it and/or modify
%    it under the terms of the GNU General Public License as published by
%    the Free Software Foundation; either version 2 of the License, or
%    (at your option) any later version.
%
%    This program is distributed in the hope that it will be useful,
%    but WITHOUT ANY WARRANTY; without even the implied warranty of
%    MERCHANTABILITY or FITNESS FOR A PARTICULAR PURPOSE.  See the
%    GNU General Public License for more details.
%
%    You should have received a copy of the GNU General Public License
%    along with this program; if not, write to the Free Software
%    Foundation, Inc., 59 Temple Place, Suite 330, Boston, MA  02111-1307  USA

% Thanks to the Xy-pic guys, Kristoffer H Rose, Ross Moore, and Daniel Müllner,
% for their help in making Qcircuit work with Xy-pic version 3.8.  
% Thanks also to Dave Clader, Andrew Childs, Rafael Possignolo, Tyson Williams,
% Sergio Boixo, Cris Moore, Jonas Anderson, and Stephan Mertens for helping us test 
% and/or develop the new version.

\usepackage{xy}
\xyoption{matrix}
\xyoption{frame}
\xyoption{arrow}
\xyoption{arc}

\usepackage{ifpdf}
\ifpdf
\else
\PackageWarningNoLine{Qcircuit}{Qcircuit is loading in Postscript mode.  The Xy-pic options ps and dvips will be loaded.  If you wish to use other Postscript drivers for Xy-pic, you must modify the code in Qcircuit.tex}
%    The following options load the drivers most commonly required to
%    get proper Postscript output from Xy-pic.  Should these fail to work,
%    try replacing the following two lines with some of the other options
%    given in the Xy-pic reference manual.
\xyoption{ps}
\xyoption{dvips}
\fi

% The following resets Xy-pic matrix alignment to the pre-3.8 default, as
% required by Qcircuit.
\entrymodifiers={!C\entrybox}

\newcommand{\bra}[1]{{\left\langle{#1}\right\vert}}
\newcommand{\ket}[1]{{\left\vert{#1}\right\rangle}}
    % Defines Dirac notation. %7/5/07 added extra braces so that the commands will work in subscripts.
\newcommand{\qw}[1][-1]{\ar @{-} [0,#1]}
    % Defines a wire that connects horizontally.  By default it connects to the object on the left of the current object.
    % WARNING: Wire commands must appear after the gate in any given entry.
\newcommand{\qwx}[1][-1]{\ar @{-} [#1,0]}
    % Defines a wire that connects vertically.  By default it connects to the object above the current object.
    % WARNING: Wire commands must appear after the gate in any given entry.

    % Defines a classical wire that connects horizontally.  By default it connects to the object on the left of the current object.
    % WARNING: Wire commands must appear after the gate in any given entry.

    % Defines a classical wire that connects vertically.  By default it connects to the object above the current object.
    % WARNING: Wire commands must appear after the gate in any given entry.

    % Boxes the argument, making a gate.

    % Inserts a measurement meter.
    % In case you're wondering, the constants .778em and .322em specify
    % one quarter of a circle with radius 1.1em.
    % The points added at + and - <2.2em,2.2em> are there to strech the
    % canvas, ensuring that the size is unaffected by erratic spacing issues
    % with the arc.

    % Inserts a measurement bubble with user defined text.

    % Inserts a measurement tab with user defined text.

    % Inserts a D-shaped measurement gate with user defined text.

    % Draws a multiple qubit measurement bubble starting at the current position and spanning #1 additional gates below.
    % #2 gives the label for the gate.
    % You must use an argument of the same width as #2 in \ghost for the wires to connect properly on the lower lines.

    % Draws a multiple qubit D-shaped measurement gate starting at the current position and spanning #1 additional gates below.
    % #2 gives the label for the gate.
    % You must use an argument of the same width as #2 in \ghost for the wires to connect properly on the lower lines.
\newcommand{\control}{*!<0em,.025em>-=-<.2em>{\bullet}}
    % Inserts an unconnected control.

    % Inserts a unconnected control-on-0.

    % Inserts a control and connects it to the object #1 wires below.

    % Inserts a control-on-0 and connects it to the object #1 wires below.

    % Inserts a CNOT target.

    % Inserts half a swap gate.
    % Must be connected to the other swap with \qwx.
\newcommand{\multigate}[2]{*+<1em,.9em>{\hphantom{#2}} \POS [0,0]="i",[0,0].[#1,0]="e",!C *{#2},"e"+UR;"e"+UL **\dir{-};"e"+DL **\dir{-};"e"+DR **\dir{-};"e"+UR **\dir{-},"i" \qw}
    % Draws a multiple qubit gate starting at the current position and spanning #1 additional gates below.
    % #2 gives the label for the gate.
    % You must use an argument of the same width as #2 in \ghost for the wires to connect properly on the lower lines.
\newcommand{\ghost}[1]{*+<1em,.9em>{\hphantom{#1}} \qw}
    % Leaves space for \multigate on wires other than the one on which \multigate appears.  Without this command wires will cross your gate.
    % #1 should match the second argument in the corresponding \multigate.

    % Inserts #1, overriding the default that causes entries to have zero size.  This command takes the place of a gate.
    % Like a gate, it must precede any wire commands.
    % \push is useful for forcing columns apart.
    % NOTE: It might be useful to know that a gate is about 1.3 times the height of its contents.  I.e. \gate{M} is 1.3em tall.
    % WARNING: \push must appear before any wire commands and may not appear in an entry with a gate or label.

    % Constructs a box or bracket enclosing the square block spanning rows #1-#3 and columns=#2-#4.
    % The block is given a margin #5/2, so #5 should be a valid length.
    % #6 can take the following arguments -- or . or _\} or ^\} or \{ or \} or _) or ^) or ( or ) where the first two options yield dashed and
    % dotted boxes respectively, and the last eight options yield bottom, top, left, and right braces of the curly or normal variety.  See the Xy-pic reference manual for more options.
    % \gategroup can appear at the end of any gate entry, but it's good form to pick either the last entry or one of the corner gates.
    % BUG: \gategroup uses the four corner gates to determine the size of the bounding box.  Other gates may stick out of that box.  See \prop.

\newcommand{\rstick}[1]{*!L!<-.5em,0em>=<0em>{#1}}
    % Centers the left side of #1 in the cell.  Intended for lining up wire labels.  Note that non-gates have default size zero.
\newcommand{\lstick}[1]{*!R!<.5em,0em>=<0em>{#1}}
    % Centers the right side of #1 in the cell.  Intended for lining up wire labels.  Note that non-gates have default size zero.

    % Centers the bottom of #1 in the cell.  Intended for lining up wire labels.  Note that non-gates have default size zero.

    % Centers the top of #1 in the cell.  Intended for lining up wire labels.  Note that non-gates have default size zero.
\newcommand{\Qcircuit}{\xymatrix @*=<0em>}
    % Defines \Qcircuit as an \xymatrix with entries of default size 0em.

    % Draws a wire or connecting line to the element #1 rows down and #2 columns forward.
\newcommand{\pureghost}[1]{*+<1em,.9em>{\hphantom{#1}}}
    % Same as \ghost except it omits the wire leading to the left. 

\def\leqclean{\stackrel{\!{\text{\small clean}}}{\leq}\!}

\def\leqlo{\mbox{\boldmath$\leq$}_{\scriptstyle}\,}
\def\leqclo{\mbox{\boldmath$\leq$}_{\text{CL\,}}}
\def\geqlo{\mbox{\boldmath$\geq$}_{\scriptstyle}\,}
\def\geqclo{\mbox{\boldmath$\geq$}_{\text {CL}}\,}
\def\eqlo{\mbox{\boldmath$=$}_{\scriptstyle}\,}
\def\eqclo{\mbox{\boldmath$=$}_{\text{CL}}\,}
\DeclareMathOperator{\Emb}{[\text{\EUR\EUR}]}
\DeclareMathOperator{\Hom}{Hom}
\DeclareMathOperator{\poly}{poly}
\DeclareMathOperator{\rank}{rank}
\DeclareMathOperator{\spec}{spec}
\DeclareMathOperator{\Tr}{Tr}

\def\deficit{\Delta_{\mathrm{sim}}}
\def\bbC{\mathbb{C}}
\def\bbE{\mathbb{E}}
\def\bbR{\mathbb{R}}
\def\bbZ{\mathbb{Z}}

\def\cI{{\cal I}}

\def\cM{{\cal M}}
\def\cN{{\cal N}}

\def\cP{{\cal P}}
\def\cQ{{\cal Q}}
\def\cS{{\cal S}}
\def\cT{{\cal T}}
\def\cU{{\cal U}}
\def\cW{{\cal W}}
\def\cX{{\cal X}}
\def\cY{{\cal Y}}
\def\bp{{\bf p}}
\def\bq{{\bf q}}
\def\br{{\bf r}}
\def\bP{{\bf P}}
\def\bQ{{\bf Q}}

\def\<{\langle}
\def\>{\rangle}
\def\L{\left}
\def\R{\right}
\def\ot{\otimes}
\newcommand{\proj}[1]{\ket{#1}\bra{#1}}
\newcommand{\braket}[2]{\left\langle #1 \middle| #2 \right\rangle}

\def\bl{{\bar{\lambda}}}
\def\eps{\epsilon}

\def\tr{\Tr}
\newcommand{\supp}{{\mathrm{supp}}}
\newcommand{\id}{{\operatorname{id}}}

\newcommand\bea{\begin{eqnarray}}
\newcommand\eea{\end{eqnarray}}
\newcommand{\be}{\begin{equation}}
\newcommand{\ee}{\end{equation}}
\def\besp#1\eesp{\begin{equation}\begin{split}#1\end{split}\end{equation}}
\newcommand{\bmu}{\begin{multline}}
\newcommand{\emu}{\end{multline}}
\def\non{\nonumber}
\def\ba#1\ea{\begin{align}#1\end{align}}
\def\bas#1\eas{\begin{align*}#1\end{align*}}
\def\bmu#1\emu{\begin{multline}#1\end{multline}}

\newtheorem*{rep@theorem}{\rep@title}
\newcommand{\newreptheorem}[2]{%
\newenvironment{rep#1}[1]{%
 \def\rep@title{#2 \ref{##1}}%
 \begin{rep@theorem}}%
 {\end{rep@theorem}}}
\makeatother

\newtheorem{lemma}{Lemma}
\newreptheorem{lemma}{Lemma}

\newtheorem{theorem}[lemma]{Theorem}
%\newreptheorem{theorem}{Theorem}
\newtheorem{definition}[lemma]{Definition}
\newtheorem{corollary}[lemma]{Corollary}

\newcommand{\smfrac}[2]{\mbox{$\frac{#1}{#2}$}}
\def\half{\smfrac{1}{2}}

\def\ra{\rightarrow}
\def\la{\leftarrow}
\def\bit#1\eit{\begin{itemize}#1\end{itemize}}
\def\benum#1\eenum{\begin{enumerate}#1\end{enumerate}}
\newcommand{\eq}[1]{Eq.~(\ref{eq:#1})}
\newcommand{\eqs}[2]{Eqs.~(\ref{eq:#1}) and (\ref{eq:#2})}
\newcommand{\fig}[1]{Fig.~\ref{fig:#1}}
\newcommand{\defref}[1]{Definition~\ref{def:#1}}
\newcommand{\secref}[1]{Sec.~\ref{sec:#1}}
\newcommand{\thmref}[1]{Theorem~\ref{thm:#1}}
\newcommand{\lemref}[1]{Lemma~\ref{lem:#1}}

\def\Usch{U_{\text{Sch}}}

\def\oldcomment#1{}
\newcommand{\markov}[3]{\ensuremath{I(#1;#3|#2)=0}}

\begin{document}

\title{The quantum reverse Shannon theorem
and resource tradeoffs for simulating quantum channels}
\author{Charles H. Bennett, Igor Devetak, Aram W. Harrow, Peter W. Shor
  and Andreas Winter
  \thanks{Charles H. Bennett is with the IBM T.J. Watson Research
    Center, Yorktown Heights, NY 10598 (USA). This work was funded in
    part by ARDA contract DAAD19-01-0056 and DARPA QUEST contract
    HR0011-09-C0047.  {Email:} \tt{chdbennett@gmail.com}}
  \thanks{Igor Devetak is with IMC Financial Markets, Poststrasse 20,
    6300 Zug, Switzerland.  This work was performed while he was at
    IBM T.J. Watson Research Center and the Department of Electrical
    Engineering at USC.  {Email:} {\tt igor.devetak@gmail.com}}
\thanks{Aram W. Harrow is with the Department of Physics,  Massachusetts Institute of Technology, 77 Massachusetts Avenue, Cambridge, MA 02139, USA.
This work was also performed while he was at University of Bristol
and the University of Washington.
He was funded by NSF grants CCF-0916400 and CCF-1111382 and ARO contract
W911NF-12-1-0486.  {Email:} {\tt aram@mit.edu}}
\thanks{Peter W. Shor is with the Department of Mathematics,
  Massachusetts Institute of Technology, 77 Massachusetts Avenue,
  Cambridge, MA 02139, USA and was supported in part by NSF grants
  CCF-0431787 (``Quantum Channel Capacities and Quantum Complexity'')
  and CCF-0829421 (``Physics Based Approaches to Quantum
  Algorithms''), as well as the NSF STC on Science of Information.
  Email: {\tt shor@math.mit.edu}}
\thanks{Andreas Winter is with ICREA and F\'{\i}sica Te\`{o}rica:
  Informaci\'{o} i Fenomens Qu\`{a}ntics, Universitat Aut\`{o}noma de
  Barcelona, ES-08193 Bellaterra (Barcelona), Spain.
  During preparation of this
  paper he was also affiliated with the Department of Mathematics,
  University of Bristol and the Centre for Quantum Technologies,
  National University of Singapore. He acknowledges support by the
  U.K. EPRSC grant ``QIP IRC'', the Royal Society, the Philip Leverhulme
  Trust, EC integrated project QAP (contract IST-2005-15848),  as well as
  STREPs QICS and QCS, and finally the ERC Advanced Grant ``IRQUAT''.
  {Email:} {\tt der.winter@gmail.com}.}
}

\date{\today}

\maketitle

\begin{abstract}
  Dual to the usual noisy channel coding problem, where a noisy
  (classical or quantum) channel is used to simulate a noiseless one,
  reverse Shannon theorems concern the use of noiseless channels to
  simulate noisy ones, and more generally the use of one noisy channel
  to simulate another.  For channels of nonzero capacity, this
  simulation is always possible, but for it to be efficient, auxiliary
  resources of the proper kind and amount are generally required. In
  the classical case, shared randomness between sender and receiver is
  a sufficient auxiliary resource, regardless of the nature of the
  source, but in the quantum case the requisite auxiliary resources
  for efficient simulation depend on both the channel being simulated,
  and the source from which the channel inputs are coming. For tensor
  power sources (the quantum generalization of classical IID sources),
  entanglement in the form of standard ebits (maximally entangled
  pairs of qubits) is sufficient, but for general sources, which may
  be arbitrarily correlated or entangled across channel inputs,
  additional resources, such as entanglement-embezzling states or
  backward communication, are generally needed.  Combining existing
  and new results, we establish the amounts of communication and
  auxiliary resources needed in both the classical and quantum cases,
  the tradeoffs among them, and the loss of simulation efficiency when
  auxiliary resources are absent or insufficient.  In particular we
  find a new single-letter expression for the excess forward
  communication cost of coherent feedback simulations of quantum
  channels (i.e. simulations in which the sender retains what would
  escape into the environment in an ordinary simulation), on
  non-tensor-power sources in the presence of unlimited ebits but no
  other auxiliary resource. Our results on tensor power sources
  establish a strong converse to the entanglement-assisted capacity
  theorem.
\end{abstract}

\vfill          %To move TOC to other column (AW)...
\tableofcontents

\section{Introduction}
\label{sec:intro}
\subsection{Motivation}
In classical information theory, Shannon's celebrated noisy channel
coding theorem~\cite{Shannon48} establishes the ability of any noisy memoryless
channel $N$ to simulate an ideal noiseless binary channel, and shows
that its asymptotic efficiency or capacity for doing so is given by
a simple expression \ba C(N) &= \max_{p} I(X;Y) \non \\&= \max_{p}
\bigl\{ H(X) + H(Y) - H(XY) \bigr\}, \label{eq:mutinfo} \ea where
$H$ is the entropy, $X$ the input random variable and $Y=N(X)$ the
induced output variable. The capacity, in other words, is equal to
the maximum, over input distributions $p$, of the input-output
mutual information for a single use of the channel.  Somewhat more
recently, a dual theorem, the classical ``reverse Shannon theorem''
was proved~\cite{BSST01}, which states that for any channel
$N$ of capacity $C$, if the sender and receiver share an unlimited
supply of random bits, an expected $C n + o(n)$ uses of a noiseless
binary channel are sufficient to exactly simulate $n$ uses of the
channel. In~\cite{Winter:RST} a version of this construction is
given which achieves asymptotically perfect simulation, works on a
uniform blocksize $C n + o(n)$, and uses an amount of shared
randomness increasing linearly with $n$, in contrast to the
exponential amount used in~\cite{BSST01}. These simulations do not
depend on the nature of the source, and work for arbitrarily varying
as well as IID sources.

Together with the original Shannon theorem, these theorems show that
in the presence of shared randomness, the asymptotic properties of a
classical channel can be characterized by a single parameter, its
capacity; with all channels of equal capacity being able to simulate
one another with unit asymptotic efficiency in the presence of
shared randomness. In~\cite{BSST01} a quantum analog of the reverse
Shannon theorem was conjectured, according to which quantum channels
should be characterizable by a single parameter in the presence of
unlimited shared entanglement between sender and receiver.

A (discrete memoryless) quantum channel can be viewed physically
as a process wherein a quantum system, originating with a sender
Alice, is split into a component for a receiver Bob and another for an
inaccessible environment (commonly referred to as Eve).  Mathematically it can be viewed as an isometric
embedding $\cN^{A\ra BE}$ of Alice's Hilbert space ($A$) into the joint Hilbert space
of Bob ($B$) and Eve $(E)$.  Tracing out Eve yields a completely positive,
trace-preserving linear map on density operators from $A$ to $B$,
which we denote $\cN^{A\ra B}$. Operationally, the two pictures are
equivalent, but we will sometimes find it convenient mathematically to
work with one or the other.

The theory of quantum channels is richer
and less well understood than that of classical channels. Unlike
classical channels, quantum channels have multiple inequivalent
capacities, depending on what one is
trying to use them for, and what additional resources are brought
into play. These include
\begin{itemize}
\item The ordinary classical capacity $C$, defined as the maximum
asymptotic rate at which classical bits can be transmitted
reliably through the channel, with the help of a quantum encoder
and decoder.
\item The ordinary quantum capacity $Q$, which is the maximum
asymptotic rate at which qubits can be transmitted under similar
circumstances.
\item The private classical capacity $P$, which is the maximum rate at
  which classical bits can be transmitted to Bob while remaining
  private from Eve, who is assumed to hold the channel's environment $E$.
\item The classically assisted quantum capacity $Q_2$, which is the
maximum asymptotic rate of reliable qubit transmission with the
help of unlimited use of a 2-way classical side channel between
sender and receiver.
\item The entanglement-assisted classical capacity
$C_E$~\cite{BSST99,BSST01}, which is the maximum asymptotic rate of
reliable bit transmission with the help of unlimited pure state
entanglement shared between the sender and receiver.
\item Similarly, one can define the entanglement-assisted quantum
  capacity $Q_E$~\cite{BSST99,BSST01}, which is simply
  $\frac{1}{2}C_E$, by teleportation~\cite{BBCJPW98} and
  super-dense coding~\cite{BW92}.\footnote{Another powerful assistive
    resource, unlimited noiseless quantum back-communication from
    receiver to sender, turns out to be equivalent to unlimited shared
    entanglement~\cite{Bowen-feedback}.  Thus the capacity of a channel assisted by such
    back-communication is $C_E$ for classical messages and $Q_E$ for
    quantum messages.}
\
\end{itemize}

Somewhat unexpectedly, the entanglement assisted capacities are the simplest to calculate, being given by an expression analogous to
\eq{mutinfo}. In \cite{BSST01} (see also \cite{Holevo02}) it
was shown that
\be C_E(\cN) = \max_\rho
 \bigl\{  H(\rho) + H({\cal N}(\rho)) - H({I} \otimes {\cal N} (\Phi_\rho)) \bigr\}
\label{eq:cedef},
\ee
where the optimization is over all density matrices $\rho$ on $A$ and
$\Phi_\rho^{RA}$ is a purification of $\rho$ by a reference system $R$
(meaning that $\Phi_\rho^{RA}$ is a pure state and $\Tr_R
\Phi_\rho^{RA} = \rho^{A}$).  The entanglement-assisted capacity
formula \eq{cedef} is formally identical to \eq{mutinfo}, but with
Shannon entropies replaced by von Neumann entropies.  It shares the
desirable property with \eq{mutinfo} of being a concave function of
$\rho$, making it easy to compute~\cite{AC97}.
We can
alternately write the RHS of \eq{cedef} as
\be \max_\rho I(R;B)_\rho, \ee
using the definitions
 \bas \ket{\Psi} &=(I^R \ot \cN^{A\ra BE})\ket{\Phi_{\rho}^{RA}} \\
I(R;B)_\rho & = I(R;B)_\Psi = H(R)_\Psi + H(B)_\Psi - H(RB)_\Psi \\
&  =
H(\Psi^R) + H(\Psi^B) - H(\Psi^{RB}).\eas
  We will use $I(R;B)_\rho$ and
$I(R;B)_\Psi$ interchangeably, since the mutual information and other
entropic properties of $\Psi$ are uniquely determined by $\rho$.

%Here the first term is analogous to a classical channel's input
%entropy, the second to its output entropy, and the third to the joint
%input-output entropy.

Aside from the constraints $Q \leq P \leq C \leq C_E$, and $Q \leq Q_2$,
which are obvious consequences of the definitions, and $Q_2\leq
Q_E=\frac{1}{2}C_E$, which follows from \cite{TGW13}, the five capacities appear to
vary rather independently (see for example \cite{BDSS04} and \cite{SS09}).
Except in special cases, it is not possible,
without knowing the parameters of a
channel, to infer any one of these capacities from the other four.

This complex situation naturally raises the question of how many
independent parameters are needed to characterize the important
asymptotic, capacity-like properties of a general quantum channel. A
full understanding of quantum channels would enable us to calculate
not only their capacities, but more generally, for any two channels
${\cal M}$ and ${\cal N}$, the asymptotic efficiency (possibly zero)
with which ${\cal M}$ can simulate ${\cal N}$, both alone and in
the presence of auxiliary resources such as classical communication
or shared entanglement.

One motivation for studying communication in the presence of
auxiliary resources is that it can simplify the classification of
channels' capacities to simulate one another.  This is so because if
a simulation is possible without the auxiliary resource, then the
simulation remains possible with it, though not necessarily vice
versa. For example, $Q$ and $C$ represent a channel's asymptotic
efficiencies of simulating, respectively, a noiseless qubit channel
and a noiseless classical bit channel. In the absence of auxiliary
resources these two capacities can vary independently, subject to
the constraint $Q\leq C$, but in the presence of unlimited prior
entanglement, the relation between them becomes fixed: $C_E=2Q_E$,
because entanglement allows a noiseless 2-bit classical channel to
simulate a noiseless 1-qubit channel and vice versa (via
teleportation~\cite{BBCJPW98} and superdense coding~\cite{BW92}).
Similarly the auxiliary resource of shared randomness simplifies the
theory of classical channels by allowing channels to simulate one
another efficiently according to the classical reverse
Shannon theorem.

\subsection{Terminology}
The various capacities of a quantum channel ${\cal N}$ may be defined
within a framework where asymptotic communication resources and
conversions between them are treated abstractly \cite{DHW05}. Many
independent uses of a noisy channel $\cN$, i.e.~$\cN^{\otimes n}$,
corresponds to an asymptotic resource $\<\cN\>$, while
standard resources such as ebits (maximally-entangled pairs of qubits,
also known as EPR pairs), or instances of a noiseless qubit channel
from Alice to Bob are denoted $[qq]$ and $[q\ra q]$ respectively.
Their classical analogues are $[cc]$ and $[c\ra c]$, which stand for
bits of shared randomness (rbits), and uses of noiseless classical bit
channels (cbits).  Communication from Bob to Alice is denoted by
$[q\la q]$ and $[c\la c]$.  Within this framework, coding theorems can
be thought of as transformations from one communication resource to
another, analogous to reductions in complexity theory, but involving
resources that are quantitative rather than qualitative, the rate
(if other than 1) being indicated by a coefficient preceding the
resource expression.  We consider two kinds of asymptotic {\em
resource reducibility} or {\em resource inequality}~\cite{DHW05}: viz.
asymptotic reducibility via local operations $\leqlo_{\text{L}}$, usually abbreviated $\leqlo$, and asymptotic reducibility via
clean local operations $\leqclo$.  A resource $\beta$ is said to be
locally asymptotically reducible to $\alpha$ if there is an
asymptotically faithful transformation from $\alpha$ to $\beta$ via
local operations: that is, for any $\eps,\delta>0$ and for all
sufficiently large $n$, $n(1+\delta)$ copies of $\alpha$ can be
transformed into $n$ copies of $\beta$ with overall error
$<\eps = o(1)$.  Here, and throughout the paper, we use $o(1)$
to mean a quantity that approaches zero as $n\ra\infty$.
We use ``error'' to
refer to the trace distance in the context of states, which is defined as
$$\frac{1}{2}\|\rho-\sigma\|_1 = \frac{1}{2}\tr|\rho-\sigma|.$$
For channels, ``error'' refers to the diamond
norm~\cite{Kitaev:02a} (see also
\cite{CB-book,KW03}).
The example most studied in this paper is when the target resource
$\beta = \<\cN\>$ with a channel $\cN$.
The initial resource $\alpha$ is transformed, via a protocol involving
local operations, into a channel $\cN^{\prime(n)}$, with diamond-norm error
\[
 \| \cN^{\otimes n}-\cN^{\prime(n)} \|_\diamond =
   \max_{\rho} \left\| \bigl(\id_R \otimes (\cN^{\otimes n}-\cN^{\prime(n)})\bigr)(\Phi_\rho) \right\|_1,
\]
where the maximization is over states $\rho$ on $A^n$ and $\Phi_\rho$
is an arbitrary purification of it.

The clean version of this reducibility,
$\leqclo$, which is important when we
wish to coherently superpose protocols, adds the restriction that any
quantum subsystem discarded during the transformation be in the $\ket{0}$
state up to an error that vanishes in the limit of large $n$.  When
$\alpha \leqlo \beta$ and $\beta \leqlo \alpha$ we have a resource
equivalence, designated $\eqlo_{\text{L}}$, or $\eqlo$, or for
the clean version $\eqclo$. Resource reducibilities and equivalences
will often be referred to as resource relations or RRs.

For example, the coding theorem for entanglement-assisted classical
communication can be stated as \be \<\cN\> + \infty [qq] \ \geqlo\
C_E(\cN)\ [c\ra c].\ee where $C_E(\cN)$ is defined as in \eq{cedef}.

In this language, to simulate (resp. cleanly simulate) a channel $\cN$
is to find standard resources $\alpha$ (made up of qubits, ebits,
cbits and so on) such that $\<\cN\>\leqlo \alpha$ (resp. $\leqclo$).
For example, the simplest form of the classical reverse Shannon
theorem can be stated as $\forall_N \<N\> \leqlo C(N)[c\ra c] + \infty
[cc]$, with $C(N)$ defined in \eq{mutinfo}.

We will also introduce notation for two refinements of the problem.
First, we (still following \cite{DHW05}) define the {\em relative
resource}  $\<\cN\!\!:\!\!\rho\>$ as many uses of a channel $\cN$ whose
asymptotic accuracy is guaranteed or required only when $n$ uses of $\cN$ are
fed an input of the form $\rho^{\ot n}$. This means that the error
is evaluated with respect to $\Phi_\rho^{\ot n}$ rather than the worst
case entangled input state:
\[
  \| \cN^{\otimes n}-\cN^{\prime(n)} \|_{\rho^{\ot n}} =
    \left\| \bigl(\id_R \otimes (\cN^{\otimes n}-\cN^{\prime(n)})\bigr)(\Phi_\rho^{\ot n}) \right\|_1.
\]

Most coding theorems still
apply to relative resources, once we drop the maximization over
input distributions. So for a classical channel $\<N\!:\!p\> \;\geqlo\;
I(X;Y)_p [c\ra c]$ and for a quantum channel $\<\cN:\rho\>
+\infty[qq] \;\geqlo\; I(R;B)_\rho[c\ra c]$ (notation following \eq{cedef}).
%with the last term representing the quantum mutual information
%between the channel output $B$ and the reference system $R$ purifying
%the channel input, when the input is distributed according to density
%matrix $\rho$, as in \eq{cedef}.

Second, we will consider simulating channels with {\em passive feedback}.
The classical version of a passive feedback channel has Alice obtain a copy of Bob's output $Y=N(X)$.  We denote this form of channel by $N_F$ if the original channel is $N$.
For a quantum channel, we cannot give Alice a copy of Bob's output because of the no-cloning theorem~\cite{WZ82}, but instead define a {\em coherent feedback\/} version of the channel as an isometry in which the part of the output that does not go to Bob is retained by Alice, rather than escaping to the environment~\cite{Winter-ident}. We denote this $\cN_F^{A\ra BE}$, where the subscript $F$ indicates that $E$ is retained by Alice. When it is clear from the context, we will henceforth use "feedback" to mean conventional passive feedback for a classical channel and coherent feedback for a quantum channel.\footnote{The term "feedback" has been used in multiple ways.  Bowen\cite{Bowen-feedbacks} compares several kinds of feedback, both quantum and classical. In his terminology, both the classical and coherent feedbacks we consider here are {\em passive\/}, meaning that they do not grant the sender and receiver any additional resource but require them to perform an additional task (e.g. giving the sender a copy of the output) beyond what would have been required in an ordinary execution or simulation of the channel. For this reason passive feedback capacities are never greater than the corresponding plain capacities. {\em Active\/} feedback, by contrast, involves granting the sender and receiver an additional resource (e.g. unlimited quantum back-communication, as in \cite{Bowen-feedback}), to perform the {\em same\/} task as in a plain execution or simulation of the channel. Accordingly, active feedback capacities are never {\em less\/} than the corresponding plain capacities. We do not discuss active feedback further in this paper.}

Coherent feedback is an example of quantum state
redistribution~\cite{HOW05,DY08,YD09} in which the same global pure state $\Psi$ is redistributed among a set of parties. The redistribution corresponding to a feedback channel $\cN_F^{A\ra BE}$ involves Alice, Bob, and a purifying reference system $R$. Alice's share $A$ of the initial state $\Psi^{A:R}$, is split
into two parts, $E$ and $B$, with $E$ remaining with her
party, while $B$ passes to Bob, who initially held
nothing, leading to a final state $\Psi^{E:B:R}$.

Classical and coherent feedback are thus rather different notions, indeed one might say opposite notions, since in coherent feedback Alice gets to keep {\em everything but\/} what Bob receives, and as a result coherent feedback is sometimes a stronger resource than free classical back-communication.  Despite these differences, there are close parallels in how feedback affects the tradeoff between static
resources (rbits, ebits) and dynamic resources (cbits, qubits)
required for channel simulation. In both cases, when the static
resource is restricted, simulating a non-feedback version of the
channel requires less of the dynamic resource than simulating a
feedback version, because the non-feedback simulation can be
economically split into two sequential stages.  For a feedback
simulation, no such splitting is possible.

Other notational conventions we adopt are as follows.  If $\ket\psi$
is a pure state then $\psi := \proj\psi$ and $\psi^X$ refers to the
state of the $X$ subsystem of $\psi$.  For a subsystem $X$, we define
$|X|$ to be the cardinality of $X$ if $X$ is classical or $\dim X$
when $X$ is quantum.  We take $\log$ and $\exp$ to be base 2.  The
fidelity~\cite{Uhlmann76} between $\rho$ and $\sigma$ is
$\|\sqrt{\rho}\sqrt{\sigma}\|_1$ and the trace distance is
$\frac{1}{2}\|\rho-\sigma\|_1$.  For a channel $\cN^{A\ra B}$ we
  observe that $\cN = \tr_E \circ \cN_F$ and we define the {\em
    complementary channel} $\hat\cN^{A\ra E} := \tr_B\circ \cN_F$.
  Since isometric extensions of channels are unique only up to an
  overall isometry on $E$, the same is true for the complementary
  channel~\cite{Holevo07}, and our results will not be affected by
  this ambiguity.

%For classical random variables $X,Y,Z$, we use $\markov{X}{Z}{Y}$ to
%express the fact that $X,Z,Y$ form a Markov chain, i.e. that
%$I(X;Y|Z)=0$.
Additional definitions related to entanglement spread will be
introduced in \secref{spread}.

%Following Ref.~\cite{HW02}, we define
%\ba
%H_{0,\eps}(\rho) & := \log\min_\Pi \rank \Pi\rho\Pi \\
%H_{\infty,\eps}(\rho) &:= -\log \min_\Pi \| \Pi\rho\Pi\|_\infty \\
%\Delta_{\eps}(\rho) &:= \log\min_\Pi \| \Pi\rho\Pi\|_\infty \cdot \rank\Pi\rho\Pi,
%\ea
%where in each minimization, we let $\Pi$ range over all projections that commute with $\rho$ and satisfy $\tr\Pi\rho \geq 1-\eps$.  Observe that $\Delta_\eps(\rho) \geq H_{0,\eps}(\rho) - H_{\infty,\eps}(\rho)$.

\subsection{Overview of results}
In this paper we consider what resources are required to simulate a
quantum channel.  In particular, one might hope to show, by analogy
with the classical reverse Shannon theorem, that $Q_E(\cN)$ qubits of
forward quantum communication, together with a supply of shared
ebits, suffice to efficiently simulate any quantum channel $\cN$ on
any input.  This turns out not to be true in general (see below), but
it is true in some important special cases:
\bit
\item When the input is of tensor power form $\rho^{\ot n}$, for some
  $\rho$.  In this case, we are simulating the relative resource
  $\<\cN:\rho\>$.
\item When the channel $\cN$ has the property that its output entropy
  $H(\cN(\rho))$ is uniquely determined by the state of the
  environment.  Such channels include those with classical inputs or
  outputs.
\eit
However, for general channels on general (i.e. non-tensor-power)
inputs, we show that efficient simulation requires additional
resources beyond ordinary entanglement. Any of the following resources
will suffice:
\bit
\item more general forms of entanglement, such as an
  entanglement-embezzling state~\cite{vDH03}, in place of the
  supply of ordinary ebits, or
\item additional communication from Alice to Bob, or
\item backward classical or quantum  communication, from Bob to Alice.
\eit
The quantum reverse Shannon theorem is thus more fastidious than its
classical counterpart.  While classical shared random bits (rbits)
suffice to make all classical channels equivalent and cross-simulable,
standard ebits cannot do so for quantum channels. The reason is that
quantum channels may require different numbers of ebits to simulate on
different inputs.  Therefore, to maintain coherence of the simulation
across a superposition of inputs, the simulation protocol must avoid
leaking to the environment these differences in numbers of ebits used.
Fortunately, if the input is of tensor power form $\rho^{\ot n}$, the
entanglement ``spread'' required is rather small ($O(\sqrt{n})$), so
it can be obtained at negligible additional cost by having Alice
initially share with Bob a slightly generous number of ebits, then at
the end of the protocol return the unused portion for him to destroy.
On non-tensor-power inputs the spread may be $O(n)$, so other
approaches are needed if one is to avoid bloating the forward
communication cost.  If the
channel itself already leaks complete information about the output
entropy to the environment, there is nothing more for the simulation
to leak, so the problem becomes moot.  Otherwise, there are several
ways of coping with a large entanglement spread without excessive
forward communication, including: 1) using a more powerful
entanglement resource in place of standard ebits, namely a so-called
entanglement-embezzling state \cite{vDH03},
\be \ket{\varphi_N} = \frac{1}{\sqrt{\sum_{j=1}^N \frac{1}{j}}}
                        \sum_{j=1}^N \frac{1}{\sqrt{j}}\ket{j}\ket{j}
\ee
from which (in the limit of large $N$) a variable amount of
entanglement can be siphoned off without leaving evidence of how much
was taken, or 2) using a generous supply of standard ebits but
supplementing the protocol by additional backward classical
communication to coherently ``burn off'' the unused ebits.  We discuss
the role of entanglement spread in the quantum reverse Shannon theorem
in \secref{spread}.  There we will precisely define the resource
$\Emb$, which informally can be thought of as an embezzling state
$\ket{\varphi_N}$ with $N$ allowed to be arbitrarily large.

When simulating quantum feedback channels, we are sometimes able to
establish resource equivalences rather than reducibilities, for
example (as we will see in part (a) of \thmref{qrst})
\be \<\cN_F:\rho\> \eqlo \frac{1}{2}I(R;B)[q\ra q] + \frac{1}{2}I(E;B)[q
q]. \label{eq:feedback-equality}\ee
This both indicates the numbers of qubits and ebits asymptotically
necessary and sufficient to perform the redistribution
$\Psi^{A:R}\ra\Psi^{E:B:R}$ on tensor powers of a source with
density matrix $\rho^A$, and expresses the fact that any combination
of resources asymptotically able to perform the feedback simulation
of $\cN$ on $\rho$ can be converted into the indicated quantities of
qubits and ebits.  These results reflect the fact that the state
redistribution performed by a quantum feedback channel is
asymptotically reversible.   One interesting special case is when
$\cN$ is a noiseless classical channel, in which case
\eq{feedback-equality} reduces to the ``cobit'' resource
equality~\cite{Har03}.  This observation, and our derivation of
\eq{feedback-equality}, are due to \cite{devetak-triangle}.

\emph{Applications:} Our results also have implications for proving rate-distortion
theorems and strong converses for the entanglement-assisted
capacities.  The rate-distortion problem is a variant of the reverse
Shannon theorem which differs in that instead of simulating a specific
channel with high blockwise fidelity the goal is to minimize an
average distortion condition.  This is a less stringent condition than
demanded by the reverse Shannon theorem, so our simulations imply
rate-distortion theorems at the rate one would expect: the least
capacity of any channel satisfying the distortion bound.  This
connection was observed for classical channels in \cite{Winter:RST}
(see also \cite{SS96})
and for quantum channels in \cite{Datta-distort}.
The second application of our result is to derive a strong converse
theorem, meaning that attempting to send classical bits through a
quantum channel at rates above $C_E$ results in an exponentially small
success probability.  We discuss this application further in \secref{converses}.

\emph{Coordination capacity:} Another interpretation of reverse
Shannon theorems is in terms of ``coordination capacities'', defined
as the minimum rate of communication required to achieve certain
correlated probability distributions subject to constraints on some of
the variables~\cite{coordination}.  For example, the classical reverse
Shannon theorem corresponds to the goal of reproducing the
input-output distribution of a channel given one party's knowledge of
the input.  However, the framework of coordination capacity also
encompasses many network-coding generalizations of this task.

\section{Statement of results}
Figure \ref{fig:Personae} shows the parties, and corresponding random
variables or quantum subsystems, involved in the operation of a
discrete memoryless classical channel (top left) and a discrete
memoryless quantum channel (top right).  Dashed arrows indicate
additional data flows characterizing a feedback channel.  The bottom
of the figure gives flow diagrams for simulating such channels using,
respectively, a classical encoder and decoder (bottom left) or a
quantum encoder and decoder (bottom right). Shared random bits (rbits)
and forward classical communication (cbits) are used to simulate the
classical channel; shared entanglement (ebits) and forward quantum
communication (qubits) are used to simulate the quantum channel. As
usual in Shannon theory, the encoder and decoder typically must
operate in parallel on multiple inputs in order to simulate multiple
channel uses with high efficiency and fidelity.

Where it is clear from context we will often use upper case letters
$X$, $B$, etc.  to denote not only a classical random variable (or
quantum subsystem) but also its marginal probability distribution (or
density matrix) at the relevant stage of a protocol, for example
writing $H(B)$ instead of $H(\rho^B)$.  Similarly we write $I(E;B)$
for the quantum mutual information between outputs $E$ and $B$ in the
upper right side of Figure \ref{fig:Personae}.  However, it is not
meaningful to write $I(A;B)$, because subsystems $A$ and $B$ do not
exist at the same time. Thus the conventional classical notation
$I(X;Y)$ for the input-output mutual information may be considered to
refer, in the quantum way of thinking, to the mutual information
between $Y$ and a {\em copy\/} of $X$, which could always have been
made in the classical setting.

\begin{figure}[htbp]
\includegraphics[width=3.5in]{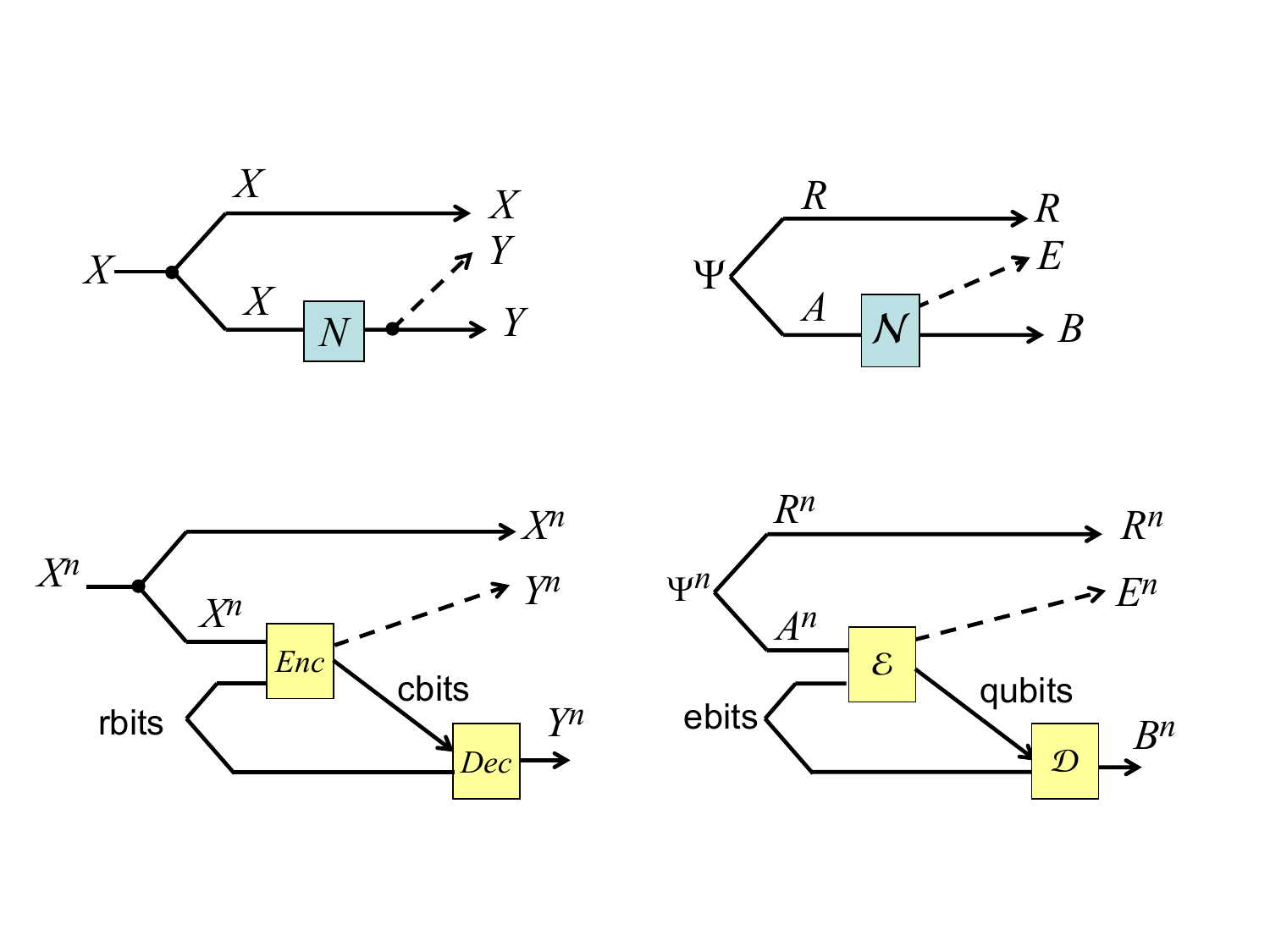}
\caption{Parties and subsystems associated with classical and quantum channels
(top left and right, resp.) and with their simulation using standard resources
(bottom left and right respectively).  The dashed lines represent
  systems that are sent to Alice only in the case of feedback
simulations.}\label{fig:Personae}
\end{figure}

Figure \ref{fig:CRSTQRSTtable} shows some of the known results on
communications resources required to simulate classical and quantum
channels under various conditions.

\begin{figure*}[htbp]
\newcolumntype{M}[1]{>{\centering\arraybackslash}m{#1}}
\renewcommand{\arraystretch}{1.4}
\centering
%\begin{tabular}{|c|c|c|c|c|c|}
\begin{tabular}{|M{10mm}|M{10mm}|M{25mm}|M{35mm}|M{25mm}|M{35mm}|}
\hline
% header row
\multicolumn{2}{|c|}{\large Kind of Channel} &
\multicolumn{2}{|c|}{ \large Classical} &
\multicolumn{2}{|c|}{{\large Quantum}} \\
% below header row
\cline{3-6} \multicolumn{2}{|c|}{\large Kind of Simulation} &
 Classical Feedback &  Non-feedback &  Coherent Feedback &  Non-feedback \\ \hline
% lots of entanglement
\multirow{2}{*}{\parbox[c][][c]{8mm}{\centering Excess shared ebits or rbits}} &
% high-entanglement, tensor power source
\parbox[c][15mm][c]{6ex}{ Tensor-power source} &
\multicolumn{2}{|@{}c@{}|}{
\parbox[c]{40mm}{\centering{$c=I(X;Y)$\\ when $r\geq H(Y|X)$}}} &
\multicolumn{2}{|c|}{\parbox[c]{40mm}{
~~~$q=I(R;B)/2$\\ when $ e\geq I(E;B)/2$
}} \\
\cline{2-6}
% high-entanglement, general source
& \parbox[c][15mm][c]{10mm}{ General source} &
\multicolumn{2}{|c|}{%\parbox[t]{1in}{
$c=C(N) = \max_p I(X;Y)$
} &
\multicolumn{2}{|c|}{\cellcolor{green!25}
\parbox[c]{50mm}{\centering $q=Q_E(\cN)=\max_\rho I(R;B)/2$\\
Ordinary ebits insufficient}}
\\ \hline
% limited entanglement
\multirow{2}{*}{\parbox[c][][c]{10mm}{\centering Limited shared ebits or rbits}} &
% limited entanglement, tensor power source
\parbox[c][15mm][c]{6ex}{ Tensor-power or IID source} &
%\parbox[c]{40mm}{\centering{
$c(r)=\max\{I(X;Y)$, $H(Y)-r\}$ &
$c(r) = \min \{ \max($ $I(X;W),$ $I(XY;W)-r)$ $: W \text{s.t. } \markov{X}{W}{Y}\}$.
&
$q(e)=\max\{\frac{1}{2}I(R;B)$, $H(B)-e\}$ &
$q(e)=\lim_{n\ra\infty}\max$
$\{ \frac{1}{2n}I(R;E_BB^n),$ $\frac{1}{n}H(E_BB^n)-e\} $\\
\cline{2-6}
% limited entanglement, general source
& \parbox[c][15mm][c]{10mm}{ General source} &
$c(r)=\max_X$ $\max\{I(X;Y)$, $H(Y)-r\}$ &
$c(r) = \max_X \min_W$ $\{\max(I(X;W),$ $I(XY;W)-r)$
: \markov{X}{W}{Y}\}. &
\multicolumn{2}{|c|}{\cellcolor{green!25}
\parbox[c]{50mm}{Various tradeoffs possible (see text)\\
Ordinary ebits insufficient}}
\\ \hline
% no entanglement
\multirow{2}{*}{\parbox[c][][c]{10mm}{\centering No shared ebits or rbits}} &
% no entanglement, tensor power source
\parbox[c][15mm][c]{6ex}{ Tensor-power or IID source} &
%\parbox[c]{40mm}{\centering{
$c= H(Y)$ &
$c= \min \{$ $I(XY;W)$ $: W \text{s.t. } \markov{X}{W}{Y}\}$.
&
$q = H(B)$ $=H(\cN(\rho))$ &
$q=\lim_{n\ra\infty}$ $\min$ $\{\frac{1}{n}H(\omega):$
$\exists \omega, \cN_1, \cN_2$  s.t. $\cN_1(\rho^{\ot n})=\omega$
\& $\cN_2(\omega)=\cN(\rho)^{\ot n}\}$\\
\cline{2-6}
% no entanglement, general source
& \parbox[c][15mm][c]{10mm}{ General source} &
$c =$ $\max_X$ $H(Y)$ &
$c = \max_X \min_W \{$$I(XY;W)$ :
\markov{X}{W}{Y}\}. &
$q=$ $\max_\rho H(B)$ $=\max_\rho H(\cN(\rho))$ &
$q=\max_\rho\lim_{n\ra\infty}$ $\min$ $\{\frac{1}{n}H(\omega):$
$\exists \omega, \cN_1, \cN_2$  s.t. $\cN_1(\rho^{\ot n})=\omega$
\& $\cN_2(\omega)=\cN(\rho)^{\ot n}\}$
\\ \hline
\end{tabular}
\caption{Resource costs of simulating classical and quantum channels:
Some known results on the forward communication cost ($c$=cbits or $q$=qubits) for
simulating classical and quantum channels are tabulated as a function of the kind of source
(tensor power or arbitrary), the kind of simulation (feedback or non-feedback),
and the quantity of shared random bits ($r$) or ebits ($e$) available to assist simulation.
For non tensor power quantum sources (green shaded cells), efficient entanglement-assisted
simulation is not possible in general using ordinary ebits, because of the problem of
entanglement spread.  To obtain an efficient simulation in such cases requires
additional communication (wlog backward classical communication), or a stronger form of
entanglement resource than ordinary ebits, such as an entanglement-embezzling state.}
\label{fig:CRSTQRSTtable}
\end{figure*}

\subsection{Classical Reverse Shannon Theorem} Most of these results
are not new; we collect them here for completeness, and give alternate
proofs that will help prepare for the analogous quantum results.  The
high-shared-randomness and feedback cases below (a,b,e) were proved in
\cite{BSST99, BSST01, Winter:RST}. The low- and zero-shared-randomness
cases (c,d,f) were demonstrated by Cuff~\cite{Cuff08} building on
Wyner's classic common randomness formula~\cite{Wyner75}.  The
connection to rate distortion was first developed in the 1996
Steinberg-Verd\'u paper \cite{SS96}, which also proved
a variant of the high-randomness case.

%Case (e) combines all results.

\begin{theorem}[Classical Reverse Shannon Theorem (CRST)]\label{thm:crst}
Let $N$ be a discrete memoryless classical channel with input
$X$ (a random variable) and induced output $Y=N(X)$.
We will use $I(X;Y)$ to indicate the mutual
information between input and output.    Let $N_F$ denote the feedback version of $N$, which
gives Alice a copy of Bob's output $Y=N(X)$.  Trivially $N \leqlo N_F$ and
$\<N:p\> \leqlo \<N_F:p\>$ for all input distributions $p$.
\bit
\item[(a)] {\em Feedback simulation on known sources with sufficient shared
randomness to minimize communication cost:}
\be \quad\<N_F:p\> \leqlo I(X;Y)[c\ra c]+H(Y|X)[cc].\ee
In fact this is tight up to the trivial reduction $[cc]\leqlo[c\ra c]$.
In other words, for $c$ and $r$ nonnegative,
\be \<N_F:p\>\leqlo c[c\ra c]+r[cc]\ee
iff $c\geq I(X;Y)$ and $c+r \geq H(Y)$.
\item[(b)] {\em Feedback simulation on general sources with sufficient
    shared randomness to minimize communication cost:}
\be  \<N_F\> \leqlo C(N)[c\ra c] + (\max_p H(Y) - C(N))[cc].\ee
\item[(c)] {\em Non-feedback simulation on known sources, with limited shared
randomness:}  When shared randomness is present in abundance, feedback simulation requires
no more communication than ordinary non-feedback simulation, but when only limited shared randomness
is available, the communication cost of non-feedback simulation can be less.
\be \<N : X\> \leqlo c[c\ra c] + r[cc] \ee
if and only if there exists a random variable $W$ with \markov{X}{W}{Y}, such that
$c\geq I(X;W)$ and $c+r\geq I(XY;W)$.
\item[(d)] {\em Non-feedback simulation on known sources with no
    shared randomness:} A special case of case (c)
  is the fact that
\be \<N:p\> \leqlo c[c\ra c]  \ee
if and only if there exists $W$ such that \markov{X}{W}{Y} and
$c\geq I(XY;W)$.
\item[(e)] {\em Feedback simulation on arbitrary sources, with arbitrary shared randomness:}
For non-negative $r$ and $c$,
\be \<N_F\> \leqlo c[c\ra c] + r[cc] \label{eq:crst-f-arb-source}\ee
iff $c \geq C(N) = \max_p I(X;Y)$ and $r \geq \max_p H(Y)-\max_p I(X;Y)$.
Because the two maxima may
be achieved for different $p$ the last condition is not simply $r\geq H(Y|X)$.
\item[(f)] {\em Without feedback} we have, for non-negative $r$ and $c$,
\be \<N\> \leqlo c[c\ra c] + r[cc] \label{eq:crst-nf-arb-source} \ee
if and only if for all $X$ there exists $W$ with \markov{X}{W}{Y}, such that
$c\geq I(X;W)$ and $c+r\geq I(XY;W)$.
\eit
\end{theorem}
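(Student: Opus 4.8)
The plan is to deduce part~(f) from the single-source case, part~(c), by a universality argument over input types; this makes the converse direction essentially free and concentrates all of the substance in achievability. For the converse: the absolute resource $\<N\>$ dominates the relative resource $\<N:X\>$ for every fixed input distribution $X$, since a protocol whose worst-case error over all $n$-letter input words is $o(1)$ in particular has $o(1)$ error when fed $X^{\ot n}$ (and by convexity the worst case over input distributions is attained on point masses anyway). Hence $\<N\>\leqlo c[c\ra c]+r[cc]$ forces $\<N:X\>\leqlo c[c\ra c]+r[cc]$ for every $X$, and the `only if' half of part~(c) then furnishes, for each $X$, a random variable $W$ with $\markov{X}{W}{Y}$, $c\ge I(X;W)$ and $c+r\ge I(XY;W)$ --- exactly the stated condition.

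For achievability I would argue as follows. Assume that for every $X$ there is such a $W=W_X$, and fix target rates $(c,r)$. Given a block length $n$ and an arbitrary input word $x^n\in\cX^n$, Alice first computes its empirical type $P=P_{x^n}$; since the number of types is at most $(n+1)^{|\cX|}$, which is $\poly(n)$, she can send the index of $P$ to Bob using $O(\log n)$ noiseless bits, i.e.\ at asymptotic rate $o(1)$. Conditioned on the announced type, Alice and Bob run the part~(c) simulation tailored to the source distribution $P$ with parameters $(c,r)$: by hypothesis $P$ admits a $W_P$ with $\markov{X}{W_P}{Y}$, $I(X;W_P)\le c$ and $I(XY;W_P)\le c+r$, which is precisely the condition under which part~(c) gives $\<N:P\>\leqlo c[c\ra c]+r[cc]$. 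Concatenating the type announcement with the per-type simulation yields, uniformly over $x^n$, a simulation of $N^{\ot n}$ with error $o(1)$ at total cost $(c+o(1))n$ cbits and $(r+o(1))n$ rbits; letting $n\ra\infty$ then gives $\<N\>\leqlo c[c\ra c]+r[cc]$.

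The hard part will be uniformity. Part~(c) is an IID statement guaranteeing small \emph{average} error over $P^{\ot n}$, whereas $\<N\>$ demands small error for \emph{every} input word, so one must first know that the part~(c) construction actually succeeds on the fixed word $x^n$ of exact type $P$, and then glue the $\poly(n)$ type-indexed protocols into a single universal protocol whose error is $o(1)$ uniformly over inputs while the type-announcement overhead stays $o(n)$. Concretely this calls either for the part~(c) construction to be chosen permutation-covariant --- so that its behaviour on $x^n$ depends only on $P_{x^n}$ --- or for an explicit check that its ``good set'' contains the entire $\delta$-typical set of $P$, to which every exact-type-$P$ word belongs; in the latter case one should also verify, using continuity of $I(X;W)$ and $I(XY;W)$ in $X$, that building the protocol for a small neighbourhood of $P$ costs only $o(1)$ in rate, and one needs the convergence in part~(c) to be uniform over the (compact) set of types. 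Finally, because $N$ is classical and memoryless, cross-letter correlations in the input cannot help the worst case, so reducing to the type structure of $x^n$ loses nothing --- which is exactly why a single polynomial-sized family of protocols suffices.
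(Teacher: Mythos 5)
Your proposal is correct and takes essentially the same type-based route as the paper: announce the empirical type at $o(n)$ cost, run a per-type-class simulation, and deduce the converse from $\<N\>\geqlo\<N:X\>$ for every $X$ together with the converse of part~(c). The uniformity concern you flag is genuine but is resolved internally in the paper rather than as a patch on a black-box part~(c): its protocol is built directly by conditioning on the joint type $t(x^n\tilde w^n\tilde y^n)$ and invoking Lemma~\ref{lem:flat-wyner}, which already gives a \emph{worst-case} bound $\|N(\cdot|x)-\tilde N(\cdot|x)\|_1\le\eps$ for all $x$ in the relevant type class (so the protocol is automatically permutation-covariant as you suggest), the union bound over the $\poly(n)$ types multiplies an exponentially small failure probability, and the continuity of $I(X;W)$ and $I(XY;W)$ in the observed type is handled via the Fannes--Audenaert inequality~\eqref{eq:fannes}.
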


Parts (b,e,f) of the theorem reflect the fact that the cost of a
channel simulation depends only on the empirical distribution or type
class of the input\footnote{Types are defined and reviewed in
  \secref{types}.}, which can be communicated in at asymptotically
negligible cost ($O(\log n)$ bits), and that an i.i.d.~source $p$ is
very likely to output a type $p'$ with $\| p - p'\|_1 \sim
1/\sqrt{n}$. Also note that in general the resource reducibility
\eq{crst-f-arb-source} is not a resource equivalence because $H(Y)$
and $I(X;Y)$ may achieve their maxima on different $X$.

Part (c), and the low-randomness simulations in general, are based on
the possibility of splitting the simulation into two stages with the
second performed by Bob, and part of the first stage's randomness
being recycled or derandomized\footnote{Here, ``recycled'' means that
  using a sublinear amount of additional randomness, privacy
  amplification can be used to make the shared randomness
  approximately independent of the output of the first stage.  Our
  proof (in \secref{CRST}) will instead use the somewhat simpler
  ``derandomization'' approach in which we argue that some of
  the random bits in the $X$--$W$ stage can be set in a way that works
  for all input strings $x^n$ simultaneously.}.  Since Alice
does not get to see the output of the second stage, this is a
non-feedback simulation. Indeed, part (a) implies that non-trivial
cbit-rbit tradeoffs are only possible for non-feedback simulations.

\fig{FNF-CRST} and \fig{TwoStageCRST} schematically
illustrate the form of the cbit-rbit tradeoffs.  For feedback simulation
on a fixed source, the tradeoff between communication and shared
randomness is trivial: Beginning at the point $c=I(X;Y), r=H(Y|X)$
on the right, $r$ can only be decreased
by the same amount as $c$ is increased, so that $c=H(Y)$ when $r=0$.
By contrast, if the simulation is not required to provide feedback to
the sender, a generally nontrivial tradeoff results, for which the
amount of communication at $r=0$ is given by Wyner's common
information expression $\min \{I(XY;W): \markov{X}{W}{Y}\}$.
This is evident in \fig{CvsRforCBEC} showing the tradeoff for non-feedback
simulation of the classical binary erasure channel for several values of the
erasure probability $t$.  This figure also shows that for some channels
(in particular for erasure channels with $t>0.5$), even the non-feedback
tradeoff begins with a -45 degree straight line section at low $r$ values.
\begin{figure}[htbp]
\includegraphics[width=3.5in]{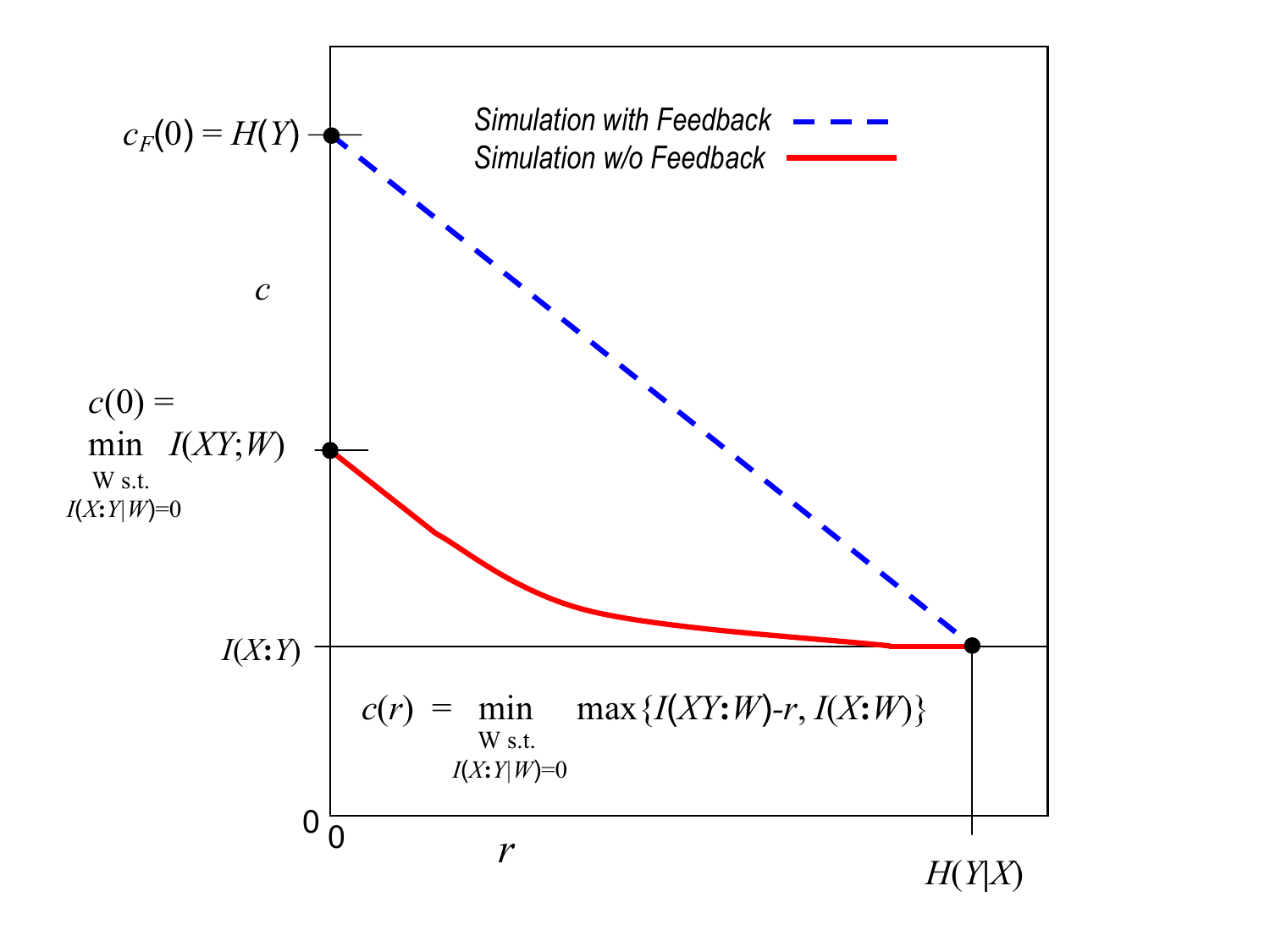}
\caption{Classical communication $c$ versus shared randomness $r$ tradeoff for
feedback and non-feedback simulations of a classical channel on a specified
source $p$ (\thmref{crst}).}
\label{fig:FNF-CRST}
\end{figure}

\begin{figure*}
\includegraphics[width=0.9\textwidth]{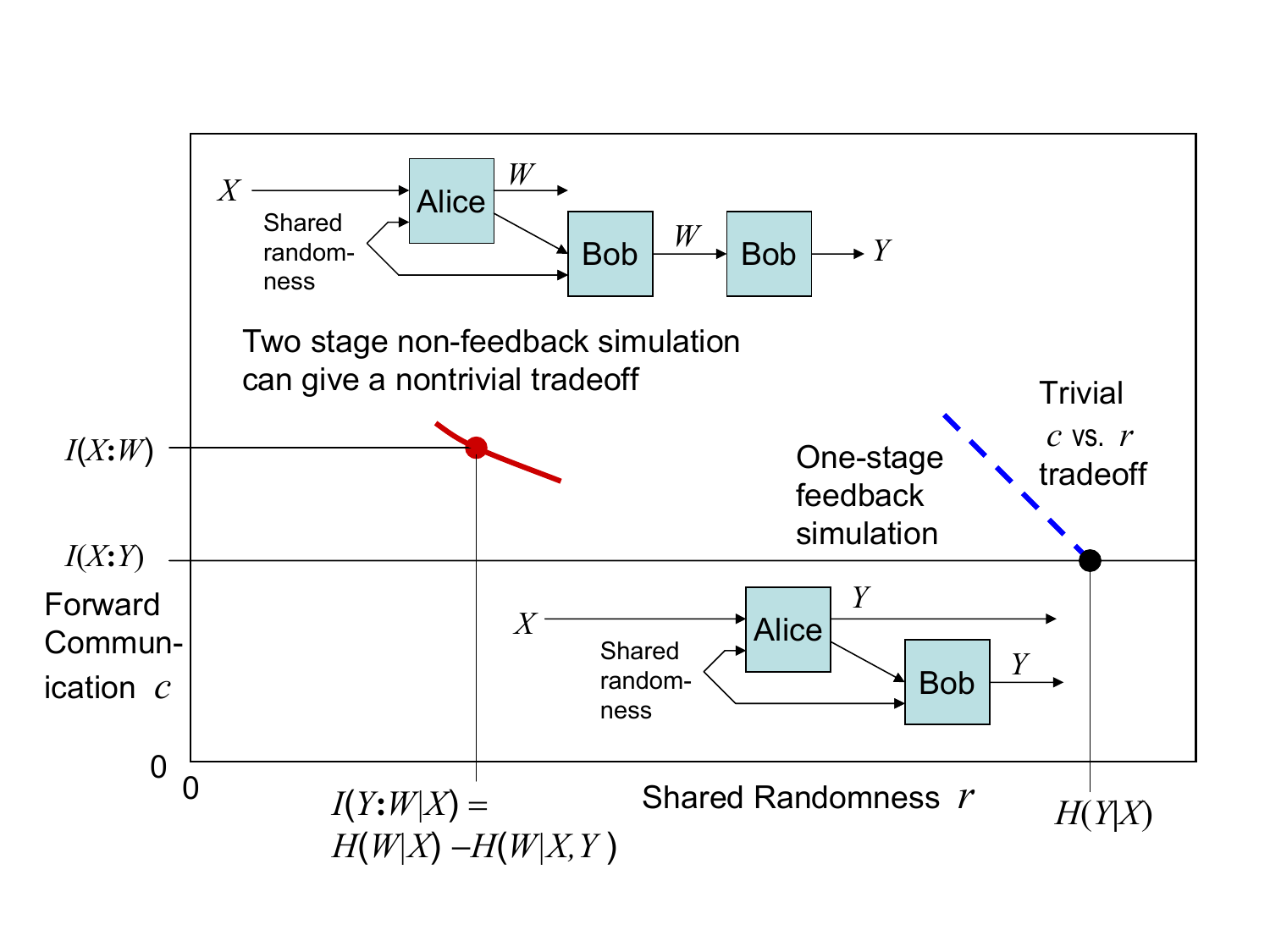}
\caption{Two-stage non-feedback simulation of a classical channel, via a Markov chain
$X \ra W \ra Y$ allows a nontrivial tradeoff between forward communication $c$ and
shared randomness $r$.  A typical point on the optimal tradeoff curve
is shown with $c=I(X:W)$ and $r=I(Y:W|X)$, and with a segment of the
optimal tradeoff curve depicted.
The second term, $H(W|XY)$, in the expression for $r$ represents the portion of the
shared randomness in the first stage simulation of $X\ra W$ that can be recycled
or derandomized.  On the right side is also depicted the ``full
randomness'' solution consisting of a one-stage feedback simulation
that uses communication $I(X:Y)$ and randomness $H(Y|X)$.  Since cbits
can always be traded for rbits, this yields the upper bound depicted
by the 45-degree dashed line coming out of this point.}
\label{fig:TwoStageCRST}
\end{figure*}

\begin{figure}
\includegraphics[width=3.5in]{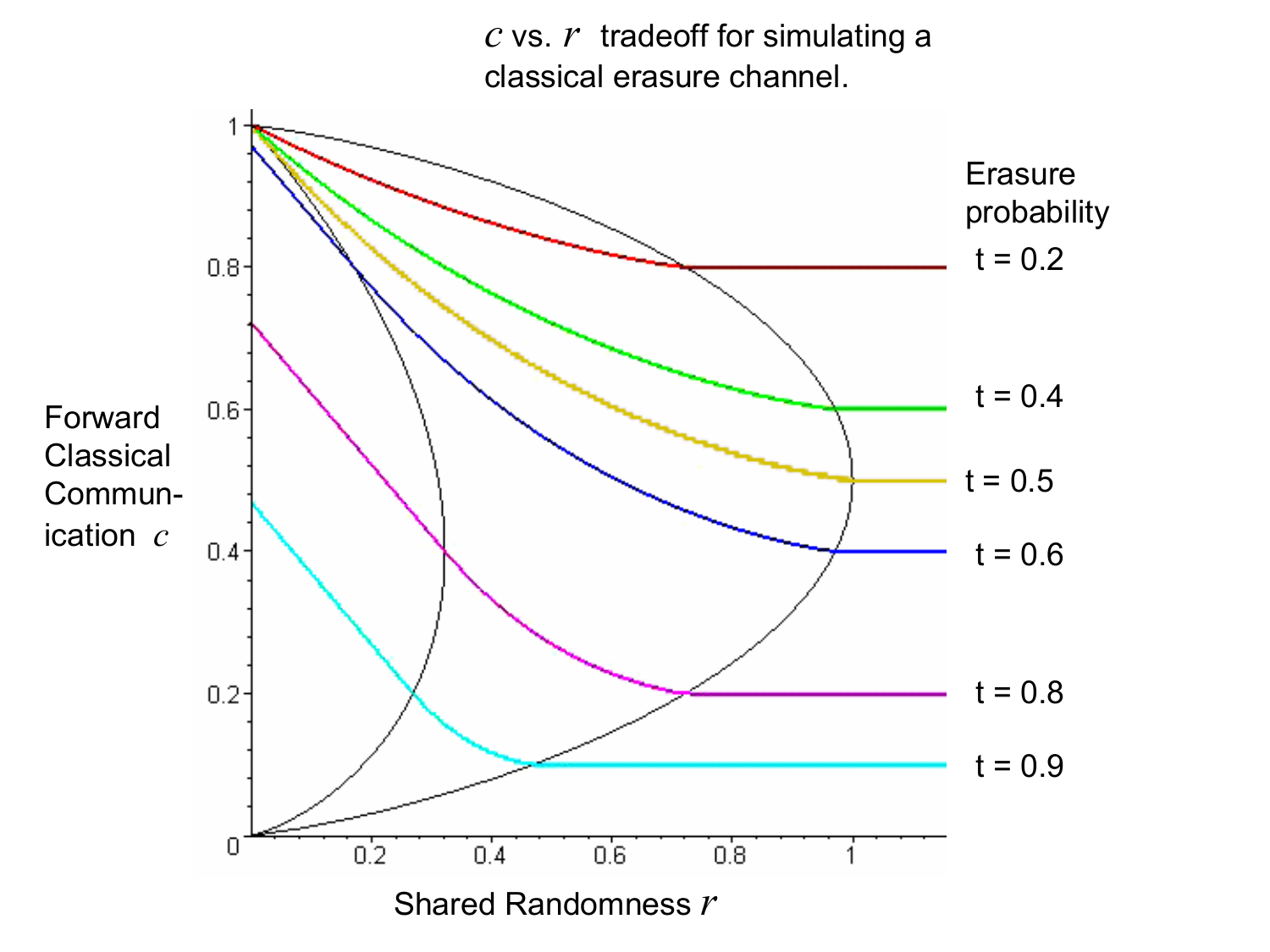}
\caption{Classical communication $c$ vs shared randomness $r$ tradeoff for
non-feedback simulation of binary erasure channels with erasure probabilities
$t = 0.2,\ 0.4,\ 0.5,\ 0.6,\ 0.8$ and $0.9$ (colored graphs). One can show that
in Theorem \ref{thm:crst} part (f) it is enough to consider $W$ such that both
legs $X\rightarrow W$ and $W\rightarrow Y$ are erasure channels. The two black
curves mark the boundaries of the region where the tradeoff has slope $-1$, viz.
$r \leq H_2(c/2)-c$, and where it is horizontal, $r \geq H_2(c)$.  Note that
for $t \leq \frac12$, Wyner's quantity $c(0) = 1$, and that for these
channels the tradeoff graphs have no section of slope $-1$.  These
tradeoff curves were first given in \cite{Cuff08}.}
\label{fig:CvsRforCBEC}
\end{figure}

The converse to (a) follows from Shannon's original noisy channel
coding theorem, which states that $\<N:p\> \geq I(X;Y)_p[c\ra c]$. A
slight refinement~\cite{AhlswedeCsiszar1,AhlswedeCsiszar2} implies
that $\<N_F:p\> \geq
I(X;Y)_p[c\ra c] + H(Y|X)_p[cc]$.

Thus we have the following resource equivalences.
\begin{corollary}
\ba \<N_F:p\> & \eqlo I(X;Y)[c\ra c] + H(Y|X)[cc]
\label{eq:cl-feedback-equiv}\\
\<N_F:p\> + \infty [cc] &\eqlo \<N:p\> + \infty[cc]\non
\\ & \eqlo I(X;Y)[c\ra c] + \infty[cc] \\
\<N_F\> + \infty[cc] &\eqlo \<N\> + \infty[cc] \non
\\ &\eqlo (\max_p I(X;Y))[c\ra c] + \infty[cc]\ea
\end{corollary}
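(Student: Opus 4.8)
The plan is to obtain each resource equivalence by sandwiching it between a forward (``simulation'') reducibility drawn from \thmref{crst} and a converse reducibility drawn from Shannon's noisy channel coding theorem together with its Ahlswede--Csisz\'ar refinement, both quoted just above. So essentially the only work is bookkeeping: saying which part of \thmref{crst} supplies each $\leqlo$, and checking that an unbounded reservoir $\infty[cc]$ absorbs the finite amounts of shared randomness that appear along the way (i.e. $X[cc]+\infty[cc]\eqlo\infty[cc]$ for any finite rate $X$).

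For \eq{cl-feedback-equiv}, the reduction $I(X;Y)[c\ra c]+H(Y|X)[cc]\geqlo\<N_F:p\>$ is exactly \thmref{crst}(a), and the reverse direction $\<N_F:p\>\geqlo I(X;Y)_p[c\ra c]+H(Y|X)_p[cc]$ is the refined converse; together these give the claimed $\eqlo$.

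For the other equivalences I would close a three-term cycle. In the tensor-power case put $A=\<N_F:p\>+\infty[cc]$, $B=\<N:p\>+\infty[cc]$, $C=I(X;Y)[c\ra c]+\infty[cc]$. Then $A\geqlo B$ because Alice may simply discard the feedback copy (so $\<N_F:p\>\geqlo\<N:p\>$); $B\geqlo C$ by the coding theorem $\<N:p\>\geqlo I(X;Y)_p[c\ra c]$; and $C\geqlo A$ because \thmref{crst}(a) gives $\<N_F:p\>\leqlo I(X;Y)[c\ra c]+H(Y|X)[cc]$, whose finite $[cc]$ term is swallowed by the reservoir. Chaining $A\geqlo B\geqlo C\geqlo A$ delivers $A\eqlo B\eqlo C$. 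The general-source statement runs through the identical cycle with $p$-dependent quantities replaced by their maxima: $\<N_F\>+\infty[cc]\geqlo\<N\>+\infty[cc]$ trivially; $\<N\>+\infty[cc]\geqlo C(N)[c\ra c]+\infty[cc]$ by Shannon's coding theorem; and $C(N)[c\ra c]+\infty[cc]\geqlo\<N_F\>+\infty[cc]$ from \thmref{crst}(b) (equivalently (e)), absorbing the finite term $(\max_p H(Y)-C(N))[cc]$ into $\infty[cc]$.

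I do not expect any genuine obstacle; all the substance lives in \thmref{crst} and its converses. The one point worth flagging is \emph{why} the ambient $\infty[cc]$ is needed in the last two equivalences: by \thmref{crst}(e), the bare resource $\<N_F\>$ is equivalent to $c[c\ra c]+r[cc]$ only when $r\geq\max_p H(Y)-\max_p I(X;Y)$, which can strictly exceed $\max_p H(Y|X)$ whenever $H(Y)$ and $I(X;Y)$ are maximized at different input distributions. Hence without the unbounded shared-randomness reservoir the second and third chains would fail to close, and it is exactly that reservoir which washes out the gap between the cbit--rbit tradeoff endpoints and promotes the one-way reducibilities of \thmref{crst} into equivalences.
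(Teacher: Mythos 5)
Your proposal is correct and follows essentially the same route the paper intends: part (a) together with the Ahlswede--Csisz\'ar refinement of the Shannon converse gives \eq{cl-feedback-equiv} directly, and the remaining equivalences follow by closing the reducibility cycle you describe (discard the feedback, apply the coding theorem, re-simulate via \thmref{crst}(a)/(b), absorbing the finite $[cc]$ term into the $\infty[cc]$ reservoir). Your closing remark correctly identifies why the reservoir is indispensable in the last two equivalences, which is exactly the observation the paper makes in its note that \eq{crst-f-arb-source} is not in general a resource equivalence.
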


{\em Remark:} The task considered in case (d) above, of simulating a
channel on a known source by forward communication alone without
shared randomness, is a variant of the problem originally
considered by Wyner~\cite{Wyner75}, who sought the minimum rate of a
source allowing two correlated random variables $X$ and $Y$ to be
generated from it by separate decoders.  He called this the common
information between $X$ and $Y$, and showed it was given by
$\min\{I(XY;W):\markov{X}{W}{Y}$\}.

\subsection{Quantum Reverse Shannon Theorem (QRST)}
\label{sec:qrst-statement}

\begin{theorem}[Quantum Reverse Shannon Theorem]\label{thm:qrst}
Let $\cN$ be a quantum channel from $A\ra B$ or equivalently an
isometry from $A\ra BE$ and $\cN_F$ the feedback
channel that results from giving system $E$ to Alice.  If we are given
an input density matrix $\rho^{A}$ then entropic quantities such as
$I(R;B)$ or $I(R;B)_\rho$ refer to the state $\Psi^{RBE}=(I^R \ot \cN^{A\ra
  BE})(\Phi_\rho^{RA})$, where $\Phi_\rho$ is any state satisfying
$\Phi_\rho^{A} = \rho$.
\bit
\item[(a)] {\em Feedback simulation on known tensor power input, with
  sufficient ebits of entanglement to minimize the forward qubit
  communication cost:} \be \forall_\rho\;\<\cN_F:\rho\>\; \eqlo
  \;\half I(R;B)_\rho [q\ra q] + \half I(E;B)_\rho [q q].
\label{eq:feedback-father}
\ee
In view of the trivial tradeoff between ebits and qubits for simulating a
feedback channel, this implies that the qubit communication rate necessary and
sufficient for feedback simulation of a channel on a tensor power source using
ordinary entanglement at the rate $e$ ebits per channel use is
\be q_F(e) = \max \{\half I(R;B),H(B)-e\}.\label{eq:qFvse}\ee
\item[(b)] {\em Known tensor power input, non-feedback simulation, entanglement
possibly insufficient to minimize the forward communication cost:}
\be \<\cN:\rho\> \;\leqlo \;q[q\ra q] + e[qq], \label{eq:low-ent-toff}\ee
if and only if for all $\delta>0$ there exists an $n\!>\!0$ and an
isometry
$V\!:\!{E^n\ra E_AE_B}$ such that
\ba q &\geq \frac{1}{n}\cdot \half I(R^n;B^nE_B)_\Psi-\delta \text{ and}\label{eq:q-low-iid}\\
q+e & \geq \frac{1}{n} H(B^n E_B)_\Psi-\delta \text{ where}\label{eq:qe-low-iid} \\
\ket{\Psi}^{R^nB^nE_AE_B} & := V^{E^n \ra E_A E_B} \cN_F^{\ot n} \ket{\Phi_\rho}^{\ot n}.
\ea
Thus the communication cost for non-feedback simulation on a tensor power source,
as a function of $e$, is given by
\bmu q(e)= \liminf_{n\rightarrow\infty, \exists V:E^n\rightarrow E_A,E_B}\\ \max
\{\half I(R^n; B^nE_B)/n,H(B^nE_B)/n-e\}.\label{eq:qvse}\emu
\item[(c)]{\em Known tensor power input, non-feedback, no entanglement:}
This is obtained from setting $e=0$ in case (b) above.  In this case,
\eq{q-low-iid} is always dominated by \eq{qe-low-iid} and we have that
\be \<\cN:\rho\> \;\leqlo\; q [q\ra q],\ee
iff $q \geq \lim_{n\ra\infty} \frac{1}{n} \min_V H(B^n E_B)$, where
the minimum is over isometries $V\!:\!E^n\ra E_AE_B$.
%; see\fig{TwoStageQRSTregularized}.
The latter is a well-known quantity: it is the regularized
entanglement of purification (EoP)~\cite{purification}
$E_P^\infty(\Psi^{RB}) = \lim_{n\ra\infty} \frac{1}{n}E_P((\Psi^{RB})^{\ot n})$
of the channel's Choi-Jamio\l{}kowski state $\Psi$.
\item[(d)]{\em Arbitrary input, feedback simulation:}
For a communication resource $\alpha$ in the sense of \cite{DHW05}
comprising any combination of
ebits, embezzling states $\Emb$, backward cbits $[c\leftarrow c]$, and/or
forward or backward quantum communication,
\be \alpha \;\geqlo\; \<\cN_F\>\ee
iff there exists a resource $\beta$ such that for all $\rho$,
\be \alpha \;\geqclo\; \<\cN_F:\rho\> + \beta . \ee
Specifically, using embezzling states we have
\be \<\cN_F\> \leqlo Q_E(\cN) [q\ra q] + \Emb
% = C_E(\cN)[c\ra c] + \Emb
\label{eq:emb-qrst}\ee
and when considering back communication
\besp   \<\cN_F\> & \leq
 Q_E(\cN) [q\ra q] + C[c\la c] \\
& \quad + (\max_\rho H(B)_\rho - Q_E(\cN))[qq]
\label{eq:back-qrst}\eesp
iff $C \geq  \max_\rho H(B)_\rho - \min_\rho H(B|R)_\rho -
C_E(\cN)$.
Other examples are discussed in \secref{spread}.
\item[(e)] {\em Arbitrary input, no feedback:}  This case combines
  elements of cases (b) and (d), although we now consider only fully
  coherent input resources.  If $\alpha$ is a combination of ebits,
  embezzling states $\Emb$ and forward and/or backward qubits, then
  $ \alpha \geqlo\<\cN\>$ iff for all $\delta>0$ there exists
  an $n>0$, a resource $\beta_n$ and an isometry $V_n:E^n\ra E_AE_B$
  such that
\be \alpha \geqclo \frac{1}{n}\<V_n\circ \cN_F^{\ot n}\> + \beta_n.
\label{eq:full-non-feedback}\ee
\eit
\end{theorem}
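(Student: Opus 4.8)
\emph{Overview.} The statement is an equivalence, and I would establish the two implications by quite different (and individually short) arguments, then indicate where the real difficulty --- the one responsible for embezzling states entering --- actually sits. The ``if'' direction (the clean partial-feedback simulations imply $\alpha\geqlo\<\cN\>$) is a discard-the-retained-systems argument. The ``only if'' direction (any coherent simulation of $\cN$ yields such clean simulations) I would obtain by coherently purifying a simulation protocol, keeping everything it would throw away, and invoking continuity of Stinespring dilations. Parts (b) and (d) then enter when one asks which concrete $\alpha$ satisfy the condition.

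\emph{Sufficiency.} Suppose, for a given $\delta>0$, that there are $n$, a leftover resource $\beta_n$, and an isometry $V_n:E^n\ra E_AE_B$ with $\alpha\geqclo\frac{1}{n}\<V_n\circ\cN_F^{\ot n}\>+\beta_n$. Since $\Tr_{E^n}\cN_F^{\ot n}=\cN^{\ot n}$ and $V_n$ acts only on $E^n$, we have $\Tr_{E_AE_B}\circ(V_n\circ\cN_F^{\ot n})=\cN^{\ot n}$. Discarding $E_A$ at Alice, $E_B$ at Bob, and the unused $\beta_n$ --- all free local operations --- therefore gives $\frac{1}{n}\<V_n\circ\cN_F^{\ot n}\>+\beta_n\geqlo\frac{1}{n}\<\cN^{\ot n}\>=\<\cN\>$ on the worst-case $A^n$-input, hence on arbitrary inputs. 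Chaining with the hypothesis (and weakening $\geqclo$ to $\geqlo$) yields $\alpha\geqlo\<\cN\>$. A single admissible $n$ suffices here; the quantifier ``for all $\delta>0$'' is needed only to make the condition also necessary.

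\emph{Necessity.} Assume $\alpha\geqlo\<\cN\>$ and fix $\delta>0$. Because $\alpha$ consists only of coherent resources (ebits, $\Emb$, forward/backward qubits --- no classical channels), a protocol realizing $\alpha\geqlo\<\cN\>$ at a sufficiently large blocklength $n$ can be made fully coherent at no extra resource cost: defer all its measurements, and rather than discarding anything, retain every register --- those it would have traced out being collected into $E_A$ on Alice's side and $E_B$ on Bob's. The resulting isometry $W_n:A^n\ra B^nE_AE_B$ satisfies $\Tr_{E_AE_B}W_n\approx_{\eps_n}\cN^{\ot n}$ in diamond norm with $\eps_n\to0$. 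Every isometric dilation of $\cN^{\ot n}$ is $\cN_F^{\ot n}$ followed by an isometry of the environment; by continuity of the Stinespring dilation, the above closeness forces $W_n$ to lie within $O(\sqrt{\eps_n})$ of $V_n\circ\cN_F^{\ot n}$ for some isometry $V_n:E^n\ra E_AE_B$, and for $n$ large this is below $\delta$. This retained-junk protocol discards nothing in a dirty state, so it witnesses $\alpha\geqclo\frac{1}{n}\<V_n\circ\cN_F^{\ot n}\>+\beta_n$ (up to error $\delta$), with $\beta_n$ the unconsumed portion of $\alpha$; letting $\delta\to0$, and $n\to\infty$ accordingly, produces the required family.

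\emph{Where the work is.} The equivalence above is reached quickly, but it bites only once one exhibits concrete resources meeting the right-hand condition, recovering simulations such as \eqref{eq:emb-qrst} and \eqref{eq:back-qrst} --- and this is where parts (b) and (d) come in and where the main obstacle lies. One decomposes a generic input into type classes; on each class, the analysis behind part (b) furnishes the optimal environment splitting $V^{(\rho)}$ together with the rate \eqref{eq:qvse}, and, following part (d), these per-type relative-resource simulations must be glued together \emph{coherently} into a single $\frac{1}{n}\<V_n\circ\cN_F^{\ot n}\>+\beta_n$. The difficulty is that different types demand different $V^{(\rho)}$, hence different amounts of entanglement, so a naive superposition leaks the type into the environment and destroys cleanness --- the entanglement-spread obstruction. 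Overcoming it is exactly why $\alpha$ must contain an embezzling state $\Emb$ (or backward qubit communication): the embezzler lets each branch siphon off precisely the entanglement it needs without recording how much, so a common leftover $\beta_n$ --- and therefore a single $V_n$ --- survives across all branches. Supplying the quantitative spread bounds and carrying out the coherent gluing, using \secref{spread} together with parts (b) and (d), is the bulk of the argument.
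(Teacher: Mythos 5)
Your proof of the equivalence in part (e) --- that $\alpha\geqlo\<\cN\>$ iff $\alpha$ cleanly simulates $\frac{1}{n}\<V_n\circ\cN_F^{\ot n}\>+\beta_n$ for some $n,V_n,\beta_n$ --- is correct and coincides with the paper's argument. Sufficiency is by discarding $E_A,E_B,\beta_n$; necessity is by making the simulating protocol fully coherent (retain everything that would have been discarded into local environments $E_A,E_B$) and invoking continuity of the Stinespring dilation, which the paper does via \cite{KSW06} to obtain \eq{iso-close}. Both arguments rely on $\alpha$ being purely coherent, and both exclude cbits from $\alpha$ in part (e) for the same reason.

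However, the statement you were asked to prove is Theorem~\ref{thm:qrst} as a whole, and the equivalence in (e) is the structurally cheap piece. Parts (a)--(d), the explicit rate formulas \eq{qFvse}, \eq{qvse}, \eq{emb-qrst}, \eq{back-qrst}, and their converses are where the content is: the state-splitting protocol for flat isometries (\lemref{flat-qrst}), the Schur--Weyl type decomposition of $U_\cN^{\ot n}$ and quantum joint typicality (\lemref{joint-schur}), the worst-to-average-case derandomization (\lemref{worst-to-avg-case}), the spread machinery of \secref{spread} yielding the single-letter deficit of \defref{deficit}, and the lower bounds via \thmref{spread} and the clueless-Eve construction (\thmref{clueless-eve-spread}). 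Your final paragraph correctly names these obstacles (type-dependent $V^{(\rho)}$, branch-dependent entanglement, embezzling to evade spread), so your map of the terrain agrees with the paper's. But you explicitly defer the quantitative estimates and the coherent gluing, which is precisely the bulk of Section~\ref{sec:qrst}. As a proof of the full theorem, the proposal is therefore incomplete: it is a correct and paper-matching derivation of the reduction in (e), resting on the rest of the theorem as an unproved black box.
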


Part (a) of \thmref{qrst} can equivalently be stated as
\be  \<\cN_F:\rho\> \;\eqlo\; I(R;B)_\rho [q \ra qq] + H(B|R)_\rho [qq],
\label{eq:cobit-feedback-equiv}\ee
where $[q\ra qq]$ denotes a co-bit~\cite{Har03,DHW05}, which is
equivalent to $([q\ra q]+[qq])/2$.  The formulation in \eq{cobit-feedback-equiv} is parallel to the classical version in \eq{cl-feedback-equiv} if we replace quantum feedback with classical feedback, co-bits with cbits and ebits with rbits.

A weaker version of (a) was proven in a long unpublished and now
obsolete version of the present paper.  The idea there was to simulate
the channel using a noisy form of teleportation, and then to use
measurement compression~\cite{Winter:POVM}\footnote{More concretely,
  suppose that Alice uses the ``Homer Simpson protocol,'' which means
  applying $\cN$ to her input and then
  teleporting the output to Bob, using a classical message of size
  $2\log d_A$.  Alice's entire part of the protocol can be viewed as
  a measurement that she performs on her input state and on half of a
  maximally entangled state.  The mutual information between her
  classical message and Bob's residual quantum state is given by
  $I(R;B)$.  Therefore \cite{Winter:POVM} can be used to simulate
  $n$ applications of this measurement by a block measurement with
  $\approx\exp(n I(R;B))$ outcomes.  Finally, it is necessary to
observe that the error analysis in \cite{Winter:POVM} shows that the
simulated measurement not only has the correct output statistics, but
essentially has the correct Kraus operators.  Thus the compressed
measurement gives a high-fidelity simulation of the Homer
Simpson protocol, and thus of the original channel.  However, the
measurement compression step relies on knowledge of the input density
matrix $\rho$, and so new ideas are necessary for the non-tensor-power case.
}.  The full statement of (a) has
since been proved by Devetak~\cite{devetak-triangle} using his triangle of
dualities among protocols in the ``family tree'' -- see also~\cite{DHW05};
by Horodecki \emph{et al.}~\cite{HOW05} as the inverse of the
``mother'' protocol, a coherent version of state merging;
and by Abeyesinghe \emph{et al.}~\cite{ADHW06} in the context
of a direct derivation of the ``mother'' protocol.
We will present another proof of (a) in \secref{qrst}, partly in order
to prepare for the proof of the rest of \thmref{qrst}.

To prove (b), we argue that any protocol using only qubits and ebits
for a non-feedback simulation of $\cN^{\ot n}$ is equivalent to one
that performs a feedback simulation of $V^{E^n\ra E_AE_B} \circ
\cN_F^{\ot n}$.  The argument is that the resources used (qubits and
ebits) leak nothing to the environment, so the only non-unitary
elements are those that are deliberately introduced by Alice and Bob.
Thus, we can replace any non-unitary operation by an isometry that
instead sends the system to be discarded to a local ``environment'',
labeled $E_A$ for Alice and $E_B$ for Bob.  By Uhlmann's theorem and
the fact that any two purifications are related by an isometry, it
follows that if our original simulation had fidelity $1-\eps$ with the
action of $\cN^{\ot n}$, then this modified simulation has fidelity
$1-\eps$ with $V^{E^n\ra E_AE_B}\circ \cN_F^{\ot n}$ for some isometry
$V$.  This is an equivalence, since this procedure turns any
simulation of $\cN^{\ot n}$ into a method of simulating $V^{E^n\ra
  E_AE_B}\circ \cN_F^{\ot n}$ for an isometry $V$, and the reverse
direction is achieved simply by discarding the $E_A, E_B$ systems.

Part (c) is simply a special case of (b), and was proven in the case
when $\cN$ is a CQ channel (that is, has classical inputs) by
Hayashi~\cite{Hayashi:EoP}.  It corresponds to the regularized
entanglement of purification~\cite{purification} of $\ket\Psi$.  In
both cases, the additivity problem (i.e. the question of whether
regularization is necessary) is open, although recent evidence
suggests strongly that the entanglement of purification is {\em not}
additive~\cite{EoP-doubt} and thus that it is not a single-letter
formula for the simulation cost.

Proving, and indeed understanding, parts (d) and (e) will require the
concept of entanglement spread, which we will introduce in
\secref{spread}.  At first glance, the statements of the theorem may
appear unsatisfying in that they reduce the question of whether
$\<\cN\>\leq \alpha$ or $\<\cN_F\>\leq \alpha$ to the question of
whether certain other clean resource reductions hold.  However,
according to part (a) of \thmref{qrst}, the corresponding clean
resource reductions involve the standard resources of qubits and
ebits.  As we will explain further in \secref{spread}, this will allow
us to quickly derive statements such as \eq{emb-qrst} and
\eq{back-qrst}.  An alternate proof\footnote{This proof was developed
 in parallel with ours (cf discussion in \cite{CKR09}) and differs
 primarily by describing merging in
  terms of one-shot entropies (compared with our applying merging only
  to ``flat'' spectra) and by reducing to the tensor-power case using
  the post-selection principle of \cite{CKR09} (compared with
  our use of Schur duality to reduce to the flat case).  Note that the
post-selection principle can also be thought of in terms of the Schur
basis as the statement that tensor-power states are ``almost flat'' in
a certain sense (cf. \cite{Hayashi-post}).} of the QRST for general sources using
embezzling states as the entanglement resource,
Eq.~(\ref{eq:emb-qrst}), was given by Berta, Christandl, and
Renner~\cite{BCR11}.

The situation in part (d) when embezzling states are not present
(i.e. general input, unlimited ebits, and some combination of
forward quantum communication and backwards quantum and classical
communication) is somewhat surprising in that the simulation requires an
asymptotically greater rate of communication than the communication
capacity of the channel.  To capture this gap, we introduce the
following definition.
\begin{definition}\label{def:deficit}
The {\em spread deficit} of a channel $\cN$ is defined as
\be \deficit(\cN) := \max_\rho H(B)_\rho - \min_\sigma H(B|R)_\sigma - C_E(\cN).
\label{eq:deficit}\ee
\end{definition}
Thus, we could equivalently say that the resource inequality in
\eq{back-qrst} holds iff $C\geq \deficit(\cN)$.

It is important to note that the maximization of $H(B)$ and the
minimization of $H(B|R)$ on the RHS of \eq{deficit} are taken
separately.  Indeed, $C_E(\cN)$ is simply the maximization of
$H(B)_\rho-H(B|R)_\rho$ over all $\rho$, so \eq{deficit}
expresses how much larger this expression can be by breaking up the
optimization of those two terms.

Fortunately, each term in the RHS of \eq{deficit} is additive, so there is no
need to take the limit over many channel uses.  The additivity of
$H(B)_\rho$ follows immediately from the subadditivity of the von
Neumann entropy, or equivalently the nonnegativity of the quantum
mutual information.  The other two terms have already been proven to
be additive in previous work: \cite{DJKR06} showed that $\min H(B|R) =
-\max H(B|E)$ is additive and \cite{AC97} showed that $C_E$ is additive.
Thus, we again obtain a single-letter formula in the case of unlimited
ebits.

The fact that $\deficit(\cN)$ provides a single-letter
characterization involving convex optimizations makes it possible to
explicitly and efficiently evaluate it.    For the important class of
so-called covariant channels, such
as the depolarizing and erasure channels,
entropic quantities are invariant under unitary rotation of the
inputs.  In this case, $H(B)$, $H(R)-H(E)$ and $I(R;B)$ are all
simultaneously maximized for the maximally-mixed input and
$\deficit(\cN)=0$.  However, channels that lack this symmetry will
generally have nonzero $\deficit$.  As an example, we plot the
entanglement-assisted capacity $C_E(\cN)$ against the ebit-assisted
simulation cost $C_E(\cN)+\deficit(\cN)$ for the amplitude-damping
channel in \fig{ampl}.

For the amplitude damping channel, the spread deficit is comparable
to the other costs of the simulation.  But there exist other
channels for which the spread deficit can dominate the cost of the
channel simulation.  Here is an example: for any $d$, define the
``variable-entropy'' channel $\cM_d$ mapping 2 dimensions to $d+1$ dimensions as follows:
it measures the input, and upon outcome 0, outputs $\proj 0$, and
upon outcome 1, outputs $\frac{1}{d}\sum_{i=1}^d \proj i$.  For any
$d$, $C_E(\cM_d)=1$, but $\deficit(\cM_d)=\log(d+1)-1$, which is
asymptotically larger as $d$ grows\footnote{Proof: $\cM_d$ can be
perfectly simulated with a single bit of forward classical communication,
  which proves that $C_E(\cM_d)\leq 1$, while the obvious protocol for
sending one classical bit through the channel proves that
$C_E(\cM_d)\geq 1$.   To evaluate $\deficit(\cM_d)$, observe that
$H(B)$ achieves its maximal value of $\log(d+1)$ upon input
$\frac{1}{d+1}\proj 0 + \frac{d}{d+1}\proj 1$, while $H(B|R)$ can be
zero if the input $\proj 0$ is given.}.  Thus, when performing a
feedback simulation of $\cM_d$ using cbits and ebits, nearly all of
the communication cost comes from the need to create entanglement
spread (discussed further in \secref{spread}).

What about non-feedback simulations?  In this case, it turns out that
the variable-entropy and amplitude-damping channels can both be
simulated at the communication rate given by $C_E$.  We will discuss
this in more detail in
\secref{spread}, but the intuitive reason for this is that
non-feedback simulations allow us to damage the environment of the
channel and in particular to measure it.  This can result in
collapsing superpositions between different amounts of entanglement,
thus reducing the contribution of entanglement spread.  However, there remain
channels whose simulation cost with ebits is higher than with
embezzling states or back communication even for non-feedback
simulation; we describe an example (the ``Clueless Eve'' channel) in
\secref{clueless}.

\begin{figure}[htbp]
\includegraphics[width=3.5in]{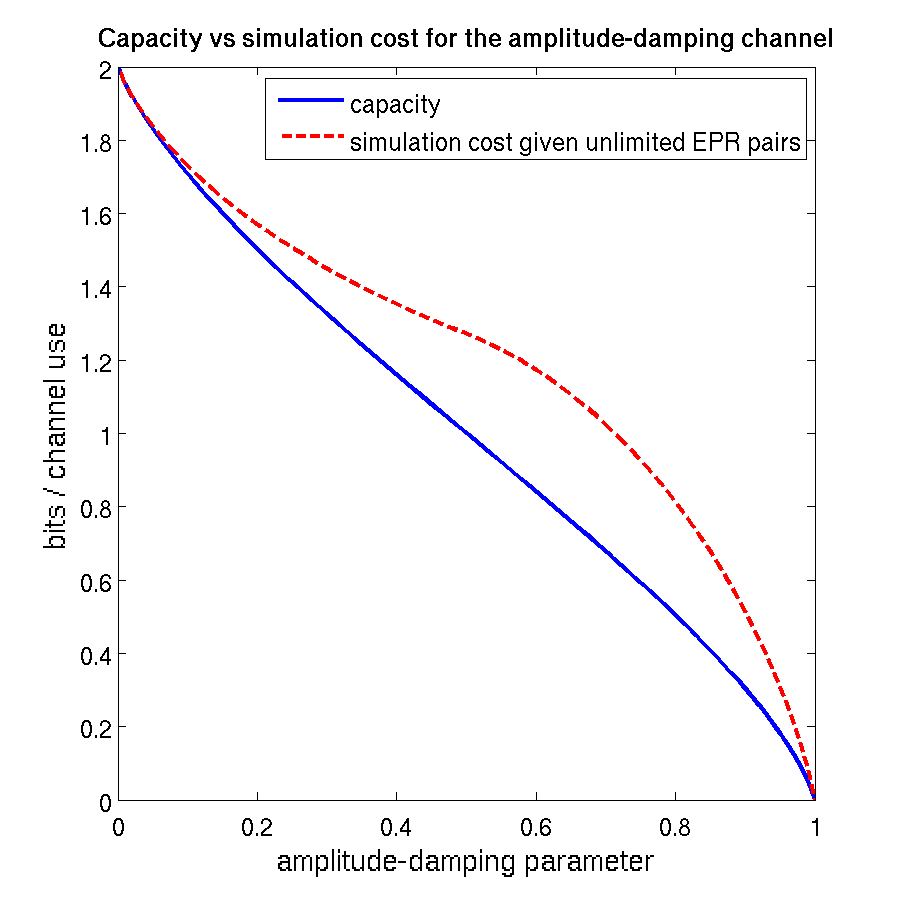}
\caption{The amplitude damping channel with parameter $\gamma$ has
  Kraus operators $\proj 0 + \sqrt{1-\gamma}\proj 1$ and $\sqrt\gamma
  \ket 0 \bra 1$.  The lower, solid, curve is the
  entanglement-assisted classical capacity of the amplitude-damping
  channel, or equivalently the (w.l.o.g. feedback) simulation cost in cbits when back
  communication or embezzling states are given, or when the source is
 a tensor power.  The upper, dashed, curve is the feedback simulation cost in cbits
 (calculated using \eq{deficit}) when instead unlimited ebits
 are given.  The gap between the two curves is the spread deficit
 from \defref{deficit}, and illustrates the extra
 communication cost of producing entanglement spread.}
\label{fig:ampl}
\end{figure}

The proofs of parts (d) and (e) will be given in \secref{qrst}.  To
prove them, we restrict attention to the case when $\alpha$ is a
combination of entanglement-embezzling states and/or ``standard''
resources (qubits, cbits and ebits).  However, for part (e), we need
to further restrict our claim to exclude cbits, for reasons related to
the fact that we do not know the tradeoff curve between quantum and
classical communication when simulating classical channels.

{\em Remark:} Analogously to the low-shared randomness regime in
classical channel simulation (Figure \ref{fig:TwoStageCRST} and cases
(c) and (d) of the CRST), simulating a non-feedback channel permits a
nontrivial tradeoff between ebits and qubits, in contrast to the
trivial tradeoff for feedback simulation.
While the cbit-rbit tradeoff curve for simulating classical channels is
additive and given by a single-letter formula~\cite{Wyner75,Cuff08},
no such formula or additivity result is known for the qubit cost in
the zero- and low-entanglement regime.
\begin{figure*}[htbp]
\includegraphics[width=0.9\textwidth]{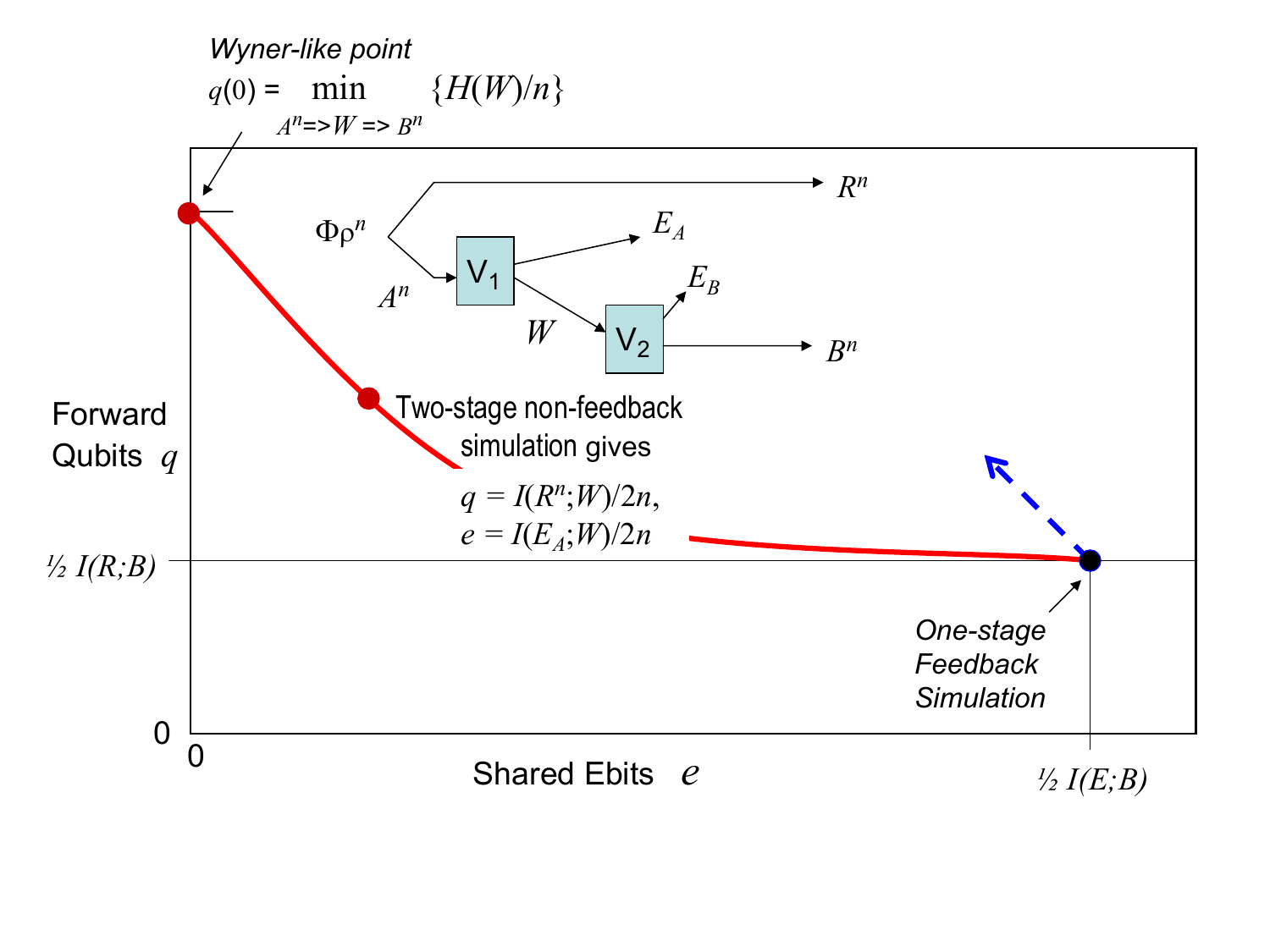}
\caption{Two-stage non-feedback simulation of a quantum channel (solid
  red curve) on a specified input $\rho$, via an intermediate state
  $W$, makes possible a nontrivial tradeoff between forward
  communication $q$ and shared entanglement $e$. By contrast, for a
  feedback simulation (right, dashed blue curve) only a trivial
  tradeoff is possible, where any deficit in ebits below the
  $\frac{1}{2}I(E;B)$ needed for optimal simulation must be compensated
  by an equal increase in the number of qubits used.}
\label{fig:TwoStageQRST}
\end{figure*}

{\em Remark:} Interestingly, quantum communication or entanglement can
sometimes improve simulations of even classical channels.  In
\cite{Winter-triples} an example of a classical channel is given with
$d$-dimensional inputs which requires $\Omega(\log d)$ classical bits
to simulate, but can be simulated quantumly using $O(d^{-1/3})$ qubits
of communication, asymptotically. Curiously, the classical reverse
Shannon theorem (\thmref{crst}) is only a special case of the quantum
reverse Shannon theorem (\thmref{qrst}) when in the unlimited shared
entanglement regime; one of the problems left open by this work is to
understand how entanglement can be more efficient than shared
randomness in creating correlated classical probability distributions.
More generally, which values of $c,q,r,e$ are consistent with the
reducibility $\<\cN\> \leqlo c[c\ra c] + q[q\ra q] + r[cc] + e[qq]$?
We know how to convert this problem to the equivalent relative
resource problem with $\<\cN\>$ replaced with $\<\cN:\rho\>$, but this
in turn we do not have an answer for.

{\em Remark:}
Our results imply unbounded gaps (for growing dimension) between the
costs of simulating channels when (a) no entanglement is given, (b) a linear or unlimited
rate of ebits are given, and (c) stronger forms of entanglement, such as
embezzling states, are given.  An example of a large gap between (a) and (b)
is given by the Werner-Holevo channel~\cite{WH02}, defined on $d$-dimensional inputs
to be $\cN(\rho) = ((\tr \rho)I - \rho^T)/(d-1)$.  This channel has
$C_E(\cN)\lessapprox 1$, but when acting on half of a maximally entangled state
produces a state with entanglement of purification equal to $\log d$~\cite{CW05}.
Thus, the gap between the ebit-assisted simulation cost and the unassisted simulation
cost grows with dimension.  For an asymptotically growing gap between (b) and (c),
we give an example in \secref{clueless}.

\subsection{Entanglement spread}\label{sec:spread}
To understand parts (d) and (e) of \thmref{qrst}, we need to introduce
the idea of entanglement spread.  This concept is further explored in
\cite{Har-spread, HW02}, but we review some of the key ideas here.

If Alice's input is known to be of i.i.d.~form
$\rho^{\ot n}$ then we know that the channel
simulation can be done using $\frac{1}{2}I(R;B)[q\ra
q]+\frac{1}{2}I(B;E)[qq]$.  To see the complications that arise from a
general input, it suffices to consider the case when Alice's input is
of the form $(\rho_1^{\ot n} + \rho_2^{\ot n})/2$.  We omit
explicitly describing the reference system, but assume that Alice's
input is always purified by some reference and that the fidelity of any
simulation is with respect to this purification.

Assume that $\rho_1^{\ot n}$ and $\rho_2^{\ot n}$ are nearly perfectly
distinguishable and that the channel simulation should not break the
coherence between these two states.  Naively, we might imagine that
Alice could first determine whether she holds $\rho_1^{\ot n}$ or
$\rho_2^{\ot n}$ and coherently store this in a register
$i\in\{1,2\}$.  Next she could conditionally perform the protocol for
i.i.d.~inputs that uses $\frac{1}{2}I(R;B)_{\rho_i}[q\ra q] +
\frac{1}{2}I(B;E)_{\rho_i}[q q]$.  To use a variable amount of
communication, it suffices to be given the resource $\max_i
\frac{1}{2}I(A;B)_{\rho_i}[q\ra q]$, and to send $\ket{0}$ states
when we have excess channel uses.  But unwanted entanglement cannot in general
be thrown away so easily.  Suppose that
$I(B;E)_{\rho_1}>I(B;E)_{\rho_2}$, so that simulating the channel on
$\rho_1^{\ot n}$ requires a higher rate of entanglement consumption than
$\rho_2^{\ot n}$.  Then it is not possible to start with
$\half nI(B;E)_{\rho_1}$ (or indeed any number) of ebits and perform
local operations to obtain a superposition of $\half nI(B;E)_{\rho_1}$ ebits and $\half nI(B;E)_{\rho_2}$ pairs.

The general task we need to accomplish is to coherently create a superposition
of different amounts of entanglement.  Often it is convenient to think
about such superpositions as containing a small ``control'' register
that describe how many ebits are in the rest of the state.  For example, consider the state
\be
\ket{\psi} = \sum_{i=1}^m \sqrt{p_i} \ket{i}^A\ket{i}^B
\ket{\Phi}^{\ot n_i} \ket{00}^{\ot N-n_i},
\label{eq:spread-state}\ee
where $0\leq n_i \leq N$ for each $i$.
Crudely speaking\footnote{
  This neglects the entanglement in the $\ket{ii}$ register.  However,
  in typical applications, this will be logarithmic in the total
  amount of entanglement.}, we say that $\max_i n_i - \min_i n_i$ is
the amount of entanglement spread in the state $\ket{\psi}$, where the
$\max$ and $\min$ are taken over values of $i$ for which $p_i$ is
nonnegligible.
%A simple version of this task that is sufficient for our purposes is to map $\ket{\phi}= \sum_{i=1}^m \sqrt{p_i} \ket{i}^A\ket{i}^B$ to $\ket{\psi}$.

A more precise and general way to define entanglement spread for any
bipartite state $\ket{\psi}$ is (following \cite{HW02}) as
$\Delta(\psi^A) = H_0(\psi^A) - H_\infty(\psi^A)$, where $H_0(\rho) =
\log\rank \rho$ and $H_\infty(\rho) = - \log \|\rho\|_\infty$.  (The
quantities $H_0$ and $H_\infty$ are also known as $H_{\max}$ and
$H_{\min}$ respectively.  Alternatively, they can be interpreted as
R\'enyi entropies.)  Ref.~\cite{HW02} also defined an $\eps$-smoothed
version of entanglement spread by
$$\Delta_\eps(\rho) = \min\{\Delta(\sigma) : 0\leq \sigma \leq \rho,
\tr \sigma \geq 1-\eps\}$$
that reflects the communication cost of approximately preparing $\ket{\psi}$.  More precisely, we have
\begin{theorem}[Theorem 8 of \cite{HW02}]\label{thm:spread}
If $\ket{\psi}$ can be created from ebits using $C$ cbits of communication and error $\leq \eps = \delta^8/4$, then
\be C \geq \Delta_\delta(\psi^A)  + 3\log(1-\delta)
\label{eq:spread-bound}\ee
\end{theorem}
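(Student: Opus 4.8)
The plan is to lower-bound the classical communication by a ``rank/norm accounting'' argument: classical communication from Alice to Bob can only increase the Schmidt rank of the shared state by a bounded factor per bit, while it cannot increase the largest Schmidt coefficient at all in a way that helps. More precisely, I would track the two quantities $H_0(\psi^A)$ and $H_\infty(\psi^A)$ separately. Starting from a maximally entangled resource of some (arbitrarily large) dimension, the initial state has $H_0 = H_\infty$, so $\Delta = 0$. Local operations (unitaries, appending ancillas in $\ket{0}$, tracing out subsystems) cannot increase $H_0 - H_\infty$: appending ancillas changes neither (or changes both equally), local unitaries change neither, and discarding a subsystem can only decrease $H_0$ while it can only increase $H_\infty$ (since the reduced state's largest eigenvalue can only grow), so $\Delta$ cannot increase under local operations. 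The only step that can increase $\Delta$ is the transmission of a classical message: sending a message from a set of size $2^C$ multiplies the Schmidt rank by at most $2^C$ — hence raises $H_0$ by at most $C$ — while, because the message is classical and we may sum over its values, the largest eigenvalue of the reduced state can decrease by a factor of at most $2^C$, i.e.\ $H_\infty$ rises by at most $C$ as well. Combining, over the whole protocol $\Delta$ increases by at most... but in fact we want only that the \emph{exact}-error value of $\Delta$ is at most $C$; since we began at $\Delta = 0$ we would get $\Delta(\psi^A) \leq C$ in the zero-error case.

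Next I would handle the smoothing and the error. We are not producing $\ket\psi$ exactly but some $\ket{\tilde\psi}$ with $\frac12\|\tilde\psi - \psi\|_1 \leq \eps$. The argument above applied to $\tilde\psi$ gives $\Delta(\tilde\psi^A) \leq C$ exactly (if the starting resource is truly maximally entangled and all non-communication steps are local), but $\Delta$ is not continuous, so this does not immediately bound $\Delta_\delta(\psi^A)$. The fix is the standard gentle-measurement / operator-inequality step: from fidelity $F(\tilde\psi,\psi) \geq 1 - \eps$ one extracts a subnormalized operator $\sigma$ with $0 \leq \sigma \leq \psi^A$, $\tr\sigma \geq 1-\delta$, whose rank is at most $\rank \tilde\psi^A$ and whose largest eigenvalue is at least $(1-\delta)$ times that of $\tilde\psi^A$ (up to the $3\log(1-\delta)$-type slack), so that $\Delta(\sigma) \leq \Delta(\tilde\psi^A) + 3\log(1-\delta)^{-1}$ — wait, the stated inequality has $+3\log(1-\delta)$, which is negative, so it is the \emph{weaker} direction and is what drops out of this accounting. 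Here the relation $\eps = \delta^8/4$ is exactly what is needed to convert trace-distance $\eps$ into the eigenvalue/rank control on a truncation of mass $\geq 1-\delta$; I would invoke the relevant lemma of \cite{HW02} (or a gentle-measurement bound) for this conversion rather than rederiving it.

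Finally I would assemble: $C \geq \Delta(\tilde\psi^A) \geq \Delta_\delta(\psi^A) + 3\log(1-\delta)$, using that $\sigma$ as above witnesses $\Delta_\delta(\psi^A) \leq \Delta(\sigma) \leq \Delta(\tilde\psi^A) - 3\log(1-\delta)$. The main obstacle I anticipate is making the ``one cbit multiplies rank by $\leq 2$ and drops $\|\cdot\|_\infty$ by a factor $\leq 2$'' step fully rigorous in the presence of a growing ancilla register and many rounds: one must be careful that the classical message is genuinely a classical register (dephased), so that Bob's reduced state is a probabilistic mixture over message values — this is what lets $H_\infty$ rise by at most $\log(\#\text{messages})$ — and that no hidden quantum side-channel is smuggled in; framing the protocol in the standard LOCC-with-classical-communication normal form takes care of this. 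The smoothing bookkeeping with the precise constant $\delta^8/4$ is the other delicate point, but it is a direct citation to the one-shot machinery of \cite{HW02}.
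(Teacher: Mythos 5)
The paper does not actually prove Theorem~\ref{thm:spread}: it is cited as Theorem~8 of \cite{HW02}, with the prefactor~$3$ in $3\log(1-\delta)$ attributed to using the alternative definition of $\Delta_\eps$ from Remark~4 of that reference. So there is no in-paper proof to compare against, and the relevant question is whether your reconstruction of the Hayden--Winter argument is sound.

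Your broad plan --- start from ebits, where $\Delta=0$, and bound the growth of $\Delta = H_0 - H_\infty$ across each step of a purified LOCC protocol --- is the right one. But the central bookkeeping has an inequality pointing the wrong way, and the argument does not close. To get $\Delta^{\text{new}} \le \Delta^{\text{old}} + C$ you need (i) $H_0$ increases by at most $C$, which your rank argument gives, and (ii) $H_\infty$ does not \emph{decrease}, so that the increase in $\Delta$ is dominated by the increase in $H_0$. You instead assert that $H_\infty$ ``rises by at most $C$''; since $\Delta^{\text{new}} - \Delta^{\text{old}} = (H_0^{\text{new}}-H_0^{\text{old}}) - (H_\infty^{\text{new}}-H_\infty^{\text{old}})$, an \emph{upper} bound on $H_\infty^{\text{new}}$ gives you nothing --- you need a lower bound --- and indeed your own write-up trails off at exactly the sentence where you try to combine the two. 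The missing ingredient is an elementary operator inequality: after Alice measures with $K$ outcomes on her side of a pure state $\phi^{AB}$ and communicates the outcome, her reduced state becomes $\sigma = \sum_k p_k\,\proj{k}\ot\phi_k^A$, and since $p_k\phi_k^B \le \sum_j p_j\phi_j^B = \phi^B$ for each $k$, we get $\|\sigma\|_\infty = \max_k p_k\|\phi_k^A\|_\infty = \max_k p_k\|\phi_k^B\|_\infty \le \|\phi^B\|_\infty = \|\phi^A\|_\infty$, so $H_\infty$ is nondecreasing at a cbit step while $H_0$ increases by at most $\log K$. There is also a sign slip in your ``discarding a subsystem'' sentence: a larger $\|\rho\|_\infty$ makes $H_\infty = -\log\|\rho\|_\infty$ \emph{smaller}, not larger, and tracing out part of one side can in fact \emph{increase} rank, so the clean move is to purify the protocol and never discard. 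Deferring the $\eps=\delta^8/4$ smoothing bookkeeping and the exact $3\log(1-\delta)$ term to \cite{HW02} is reasonable --- those are exactly the pieces the paper also leaves to the citation --- but the monotonicity step above is the heart of the matter and needs to be stated in the correct direction.
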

The factor of 3 in \eq{spread-bound} is because the definition of $\Delta_\eps$ we have used is actually the alternate version used in Remark 4 of \cite{HW02}.  We can similarly define $H_{0,\eps}(\rho) := \log\min \{\rank \sigma: 0\leq\sigma\leq \rho, \tr\sigma\geq 1-\eps\}$ and
$H_{\infty,\eps}(\rho) := -\log\min \{\|\sigma\|_\infty:
0\leq\sigma\leq \rho, \tr\sigma\geq 1-\eps\}$.  Our definition of
$H_{0,\eps}$ is the same as the one used in \cite{HW02}, but our
definition of $H_{\infty,\eps}$ may be as much as $-\log(1-\eps)$ smaller.  As with the definitions in \cite{HW02}, our quantities trivially satisfy
\be \Delta_\eps(\rho) \geq H_{0,\eps}(\rho) - H_{\infty,\eps}(\rho)
.\label{eq:spread-vs-S}\ee
Our quantities can also be expressed as
\begin{subequations}\label{eq:S-opt}
\ba H_{0,\eps}(\rho) &= \min_M H_0(\sqrt{M}\rho\sqrt{M}) \\
H_{\infty,\eps}(\rho) &= \max_M H_\infty(\sqrt{M}\rho\sqrt{M}) ,
\ea
\end{subequations}
where in each case $M$ must satisfy $0\leq M\leq I$ and $\tr M\rho
\geq 1-\eps$.  In fact, we can WLOG assume that $M$ commutes with
$\rho$ and $\tr M\rho=1-\eps$.  Similarly, in the definitions that
optimized over $\sigma\leq\rho$, we can assume that $\rho$ and
$\sigma$ commute.

One advantage of this version of $\Delta_\eps(\rho)$ is that it has the following natural interpretation as a minimization over nearby {\em normalized} states.
\begin{lemma}\label{lem:spread-normalized}
\begin{multline} \max(0,\Delta_\eps(\rho))
\\= \min\{\Delta_0(\sigma) : \frac{1}{2}\|\rho-\sigma\|_1\leq \eps, 0\leq \sigma, \tr\sigma=1\}
\end{multline}
\end{lemma}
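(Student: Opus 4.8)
The plan is to prove the two inequalities separately, using the variational formula \eqref{eq:S-opt} for $\Delta_\eps$ and the fact (noted just above) that we may assume all relevant operators commute with $\rho$, so the entire argument reduces to a statement about probability vectors. Write $\rho = \sum_k \lambda_k \proj k$ with $\lambda_1 \geq \lambda_2 \geq \cdots$, so that the problem becomes: given the probability vector $\lambda = (\lambda_k)$, relate $\max(0,\Delta_\eps(\lambda))$, where $\Delta_\eps(\lambda) = H_{0,\eps}(\lambda) - H_{\infty,\eps}(\lambda) = \log r_\eps - \log m_\eps$ with $r_\eps$ the smallest rank of a sub-vector of weight $\geq 1-\eps$ and $m_\eps$ the largest $\ell_\infty$-norm achievable by such a sub-vector, to the minimal spread $\Delta_0(\mu)$ over \emph{normalized} $\mu$ with $\frac12\|\lambda-\mu\|_1 \leq \eps$.

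For the direction ``$\leq$'' (the RHS is at most the LHS, when the LHS is positive): given an optimal truncation $\sigma \leq \rho$ achieving $H_{0,\eps}$ and $H_{\infty,\eps}$ — concretely, one may take $\sigma$ supported on the top $r_\eps$ eigenvalues with its smallest retained eigenvalue clipped down to value $m_\eps$ and an appropriate amount of weight removed — I would renormalize it to a state $\mu$ and check that renormalization, which only scales eigenvalues up by a factor $1/\tr\sigma \leq 1/(1-\eps)$, does not change the rank and does not increase $\Delta_0$. The point is that $\Delta_0(\sigma) = \log\rank\sigma - \log\|\sigma\|_\infty$ is scale-invariant, so $\Delta_0(\mu) = \Delta_0(\sigma) \leq \Delta_\eps(\rho)$; one then verifies $\frac12\|\rho-\mu\|_1 \leq \eps$, using that $\sigma\leq\rho$ implies $\frac12\|\rho-\sigma\|_1 = 1 - \tr\sigma \leq \eps$ together with a triangle-inequality bound on $\|\sigma - \mu\|_1 = (1/\tr\sigma - 1)\tr\sigma = 1 - \tr\sigma$. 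A small amount of care is needed to ensure we can push $\tr\sigma$ all the way up to exactly $1-\eps$ (WLOG, as remarked in the text) so these bounds are tight; when $\Delta_\eps(\rho) < 0$ one instead observes the minimum on the RHS is $0$, attained by a flat state obtained by flattening $\rho$'s spectrum within its support, which is within trace distance... actually one checks directly that $\Delta_0 \geq 0$ always, so the RHS is $\geq 0$, and the $\max$ on the left handles the sign.

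For the direction ``$\geq$'': given any normalized $\mu$ with $\frac12\|\rho-\mu\|_1 \leq \eps$, I want $\Delta_0(\mu) \geq \Delta_\eps(\rho)$ (which also gives $\Delta_0(\mu)\geq 0 = \max$'s other argument, since $\Delta_0$ of a normalized state is nonnegative). The idea is that $\mu$ itself, or rather a suitable sub-operator of $\mu$, is an admissible competitor in the optimization defining $\Delta_\eps(\rho)$. Specifically, let $P$ be the projector onto the support of $\mu$ and set $\sigma := P\mu P$ truncated to lie below $\rho$ — more carefully, take $\sigma$ to be the ``meet'' $\rho \wedge \mu$ (entrywise minimum of the commuting spectra). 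Then $0\leq\sigma\leq\rho$, and $\tr\sigma \geq 1 - \frac12\|\rho-\mu\|_1 \geq 1-\eps$ because the entrywise-minimum of two probability vectors has total weight $1 - \frac12\|\cdot\|_1$. Hence $H_{0,\eps}(\rho) \leq H_0(\sigma) \leq H_0(\mu) = \log\rank\mu$ and $H_{\infty,\eps}(\rho) \geq H_\infty(\sigma) \geq H_\infty(\mu) = -\log\|\mu\|_\infty$ (using $\sigma \leq \mu$ for the second, since $\sigma$ entrywise dominated by $\mu$ has $\|\sigma\|_\infty \leq \|\mu\|_\infty$), so $\Delta_\eps(\rho) \leq \Delta_0(\mu)$, as desired.

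The main obstacle I anticipate is bookkeeping the normalization and the exact value of $\tr\sigma$: the clean identities above ($\tr(\rho\wedge\mu) = 1 - \frac12\|\rho-\mu\|_1$, scale-invariance of $\Delta_0$, rank unchanged under renormalization) make each individual step routine, but one must be careful that the competitor constructed in the ``$\leq$'' direction is genuinely \emph{normalized} and genuinely within trace distance $\eps$ simultaneously — these two requirements pull in opposite directions (removing more weight to hit $\tr = 1-\eps$ exactly vs. staying close to $\rho$), and reconciling them is where the $\max(0,\cdot)$ and the ``WLOG $\tr M\rho = 1-\eps$'' remarks earn their keep.
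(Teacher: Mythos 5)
Your proposal has a genuine and fatal error in the ``$\leq$'' direction, plus two more repairable gaps.

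The fatal error is the claim that $\Delta_0$ is scale-invariant. You write $\Delta_0(\sigma) = \log\rank\sigma - \log\|\sigma\|_\infty$, but in fact $\Delta_0(\sigma) = H_0(\sigma) - H_\infty(\sigma) = \log\rank\sigma - (-\log\|\sigma\|_\infty) = \log\rank\sigma + \log\|\sigma\|_\infty$, which is \emph{not} scale-invariant: $\Delta_0(c\sigma) = \Delta_0(\sigma) + \log c$. (Sanity check: a normalized flat state on $d$ dimensions has $\Delta_0 = \log d + \log(1/d) = 0$, whereas your formula gives $2\log d$.) So renormalizing the optimal $\sigma$ in $\Delta_\eps(\rho)$ to $\mu = \sigma/\tr\sigma$ \emph{increases} $\Delta_0$ by $\log\frac{1}{1-\eps} > 0$, landing on the wrong side. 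The paper's proof instead \emph{adds} weight to the eigenvalues of $\sigma$ (staying below the current maximum eigenvalue $s_1$) to normalize without changing rank or $\|\cdot\|_\infty$; this is possible precisely when $\delta = \Delta_\eps(\rho) \geq 0$, i.e.\ $2^\delta = d s_1 \geq 1 = (1-\eps)+\eps$. When $\delta < 0$ the paper replaces $\sigma$ by the maximally mixed state on $\supp\sigma$, which has $\Delta_0 = 0$ and still lies within $\eps$ of $\rho$ because $\sigma \leq I_{\supp\sigma}/d$.

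Two further issues. First, you assert $\Delta_\eps(\rho) = H_{0,\eps}(\rho) - H_{\infty,\eps}(\rho)$, but by the paper's own Eq.~\eqref{eq:spread-vs-S} this holds only with $\geq$: the two $H$-smoothings are optimized separately while $\Delta_\eps$ requires a single $\sigma$ achieving both, and in general no such common optimizer exists. This makes the final inference in your ``$\geq$'' direction logically unsound, even though the underlying construction $\sigma = \rho\wedge\mu$ is the right one; the correct chain is simply $\Delta_\eps(\rho) \leq \Delta_0(\sigma) \leq \Delta_0(\mu)$, bypassing the smoothed $H$'s entirely. Second, the reduction ``so the entire argument reduces to a statement about probability vectors'' is unjustified: the competitor $\mu$ in the RHS need not commute with $\rho$, and pinching $\mu$ to $\rho$'s eigenbasis can raise $H_0$ and $H_\infty$ in ways that do not obviously preserve the optimum. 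The paper avoids this by writing $\rho - \mu = A - B$ with $A, B \geq 0$ (Jordan decomposition) and taking $\sigma = \rho - A = \mu - B$; this works with no commutativity assumption and is the correct general form of your ``meet.''
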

The lemma is proved in the appendix.  It improves upon Lemma 5 of \cite{HW02}, and could thus be used to tighten \thmref{spread}, although we do not carry out that exercise here.

There are a few different ways of producing entanglement spread, which are
summarized in \cite{Har-spread}.
%In [???], a nearly tight converse to the lower bound of \cite{HW02}
%was proven, showing that $\Delta(\psi^A) + O(1)$ cbits and
%$H_0(\psi^A)$ ebits suffice to produce $\ket{\psi}$.
For example, one cbit can be used to coherently eliminate one ebit, or
to do nothing; and since both of these tasks can be run in
superposition, this can also be used to create entanglement spread.
Likewise one qubit can coherently either create or disentangle one
ebit.  To put this on a formal footing, we use the {\em clean resource
reducibility} $\leqclo$ (called $\leqclean$ in \cite{Har-spread}).  A
resource $\beta$ is said to be ``cleanly LO-reducible'' to $\alpha$
iff there is an asymptotically faithful clean transformation from
$\alpha$ to $\beta$ via local operations: that is, for any
$\eps,\delta>0$ and for all sufficiently large $n$, $n(1+\delta)$
copies of $\alpha$ can be transformed by local operations into $n$
copies of $\beta$ with overall diamond-norm error $\leq \eps$, and
moreover, any quantum subsystem discarded during the transformation is
in a standard $\ket{0}$ state, up to an error vanishing in the limit
of large $n$.  In particular, entangled states cannot be discarded.
This restriction on discarding states means that clean protocols can
be safely run in superposition.

Finally, we can define the clean entanglement capacity of a resource
$\alpha$ to be the set $E_{\text{clean}}(\alpha) = \{E
: \alpha\geqclo E [qq]\} \subseteq \bbR$.  Negative values of $E$
correspond to the ability to coherently eliminate entanglement.
By time-sharing, we see that
$E_{\text{clean}}(\alpha)$ is a convex set.  However, it will
typically be bounded both from above and below, reflecting the fact
that coherently undoing entanglement is a nonlocal task.  The clean
entanglement capacities of the basic resources are
\begin{subequations}\ba E_{\text{clean}}([q\ra q]) =
E_{\text{clean}}([q\la q]) & = [-1, 1] \label{eq:q-spread-cap}\\
E_{\text{clean}}([c\ra c]) =
E_{\text{clean}}([c\la c]) & = [-1, 0] \label{eq:c-spread-cap}\\
E_{\text{clean}}([qq]) & = \{1\}
\ea \label{eq:spread-caps}\end{subequations}
To understand \eq{q-spread-cap}, observe that
transmitting one qubit can map $\ket{\Phi_2}^{AB}$ to or from
$\ket{00}^{AB}$, in each case without changing anything in the
environment.  The reasoning behind \eq{c-spread-cap} is less obvious
since sending any classical message leaks the message to the
environment by definition.  However, the protocol can be made clean by
always sending a uniformly random bit through the channel.  If Alice
generates this bit locally (or more simply sends a $\ket +$ state) and
Bob discards it, then this does not
change their amount of shared entanglement.  Alternatively, if Alice
sends her half of an ebit through the classical channel and Bob
performs a CNOT with the transmitted bit as control and his half of
the ebit as target, then this will eliminate the entanglement
while presenting the same information to the environment.

These resources can be combined in various ways to create entanglement
spread.  For example, to create a superposition of 5 and 9 ebits,
we might start with 8 ebits, use two cbits to create a superposition
of 6 and 8 ebits and then use one qubit to create the desired
superposition of 5 and 9 ebits.  Implicit in these sort of protocols
is a pair of control registers $A_C, B_C$ that specify how many ebits Alice
and Bob would like to end up with. In this example, they would like to map
the state $(c_0 \ket{00} + c_1\ket{11})^{A_CB_C} \ot \ket{\Phi_2}^{\ot 8}$
(for arbitrary coefficients $c_0,c_1$) to
\be c_0 \ket{00}^{A_CB_C} \ket{\Phi_2}^{\ot 5}\ket{00}^{\ot 4} +
c_1\ket{11}^{A_CB_C}
\ket{\Phi_2}^{\ot 9}.
\label{eq:spread-example}\ee
To achieve this transformation, Alice and Bob perform the following
sequence of actions conditioned on their control qubits:
\bit
\item If Alice's control qubit is zero, she sends two of her entangled
  qubits through the classical channel.  If it is one, she sends two
  $\ket +$ states through the channel.  Either way, the environment
  observes two random bits sent through the channel.
\item If Bob's control qubit is zero, he uses the two bits he has
  received as controls for CNOTs that are applied to his halves of the
  ebits that Alice sent, thus mapping them to $\ket{00}$.  Then he
  discards the bits he received.  If Bob's control bit is one, he
  simply discards the bit he received.  Either way the environment
  sees another copy of the same random bit being discarded by Bob.
  Moreover, this bit is now independent of the residual quantum state
  held by Alice and Bob.  Alice and Bob now share
$$c_0 \ket{00}\ket{\Phi_2}^{\ot 6}\ket{00}^{\ot 2} + c_1\ket{11}\ot
\ket{\Phi_2}^{\ot 8}.$$
\item If Alice's control qubit is zero, she sends half of one of her
  ebits through the qubit channel and locally creates a $\ket 0$
  state.  If her control qubit is one, she locally creates a
  $\ket{\Phi_2}$ and sends half through the channel.
\item If Bob's control qubit is zero, he now holds both halves of one
  of the $\ket{\Phi_2}$ states.  He rotates this to a $\ket{00}$
  state and discards one of the $\ket 0$ qubits.  If his control qubit
  is one, he keeps the transmitted qubit, but also creates and
  discards a $\ket 0$ qubit.  Alice and Bob are now left with the
  state in \eq{spread-example}.
\eit
Observe that in this example the classical and quantum communication
could have been sent in either direction.  Thus, while some parts of
the simulation protocol can only use forward communication, the spread
requirements can be met with communication in either direction.

While this framework gives us a fairly clear understanding of the
communication resources required to create entanglement spread, it
also shows how unlimited ebits are not a good model of unlimited
entanglement.  Instead of maximally entangled states, we will use the so-called
entanglement-embezzling~\cite{vDH03} states
$\ket{\varphi_N}^{AB}$, which are parameterized by their Schmidt rank
$N$, and can be used catalytically to produce or destroy any Schmidt
rank $k$ state up to an error of $\frac{\log k}{\log N}$ in the trace norm.  See
\cite{vDH03} for a definition of $\ket{\varphi_N}$ and a proof of
their entanglement-embezzling abilities.  We let the resource $\Emb$
denote access to an embezzling state of arbitrary size: formally,
$\Emb = \bigcup_{N\geq 1} \<\varphi_N\>$ and so we have
$$E_{\text{clean}}(\Emb) = (-\infty,\infty).$$
  By the above
discussion, this is strictly stronger than the resource $\infty[qq]$.

We remark that these sorts of entanglement transformations were
studied by Nielsen~\cite{Nielsen99a} who gave conditions for when an
entangled state could be prepared using unlimited classical
communication.  In this context, the term ``maximally entangled''
makes sense for ebits, since together with unlimited classical
communication they can be used to prepare any other state with the
same or smaller Schmidt rank.  The low-communication case was also
considered by Daftuar and Hayden~\cite{DH-schubert}.

We now return to parts (d) and (e) of \thmref{qrst}.  In (d), we need
to run the simulation protocol for $\<\cN_F:\rho\>$ for all possible
$\rho$ in superposition.\footnote{For technical reasons, our coding
  theorem will adopt a slightly different approach.  But for the
  converse and for the present discussion, we can consider general
  inputs to be mixtures of tensor power states.}  We can discard a
resource $\beta$  at the end of the protocol, but $\beta$ must be
either independent of $\rho$ for a feedback simulation or can depend
only on $\hat\cN(\rho)$ for a non-feedback simulation.
By the equality in \eq{feedback-father}, this reduces to producing
coherent superpositions of varying amounts of qubits and ebits.

The simplest case is when $\alpha = Q(\cN)[q\ra q] + \Emb$.  In this
case, $\alpha\geqclo Q(\cN)[q\ra q] + E [qq] + \Emb$ for any $E$.  Thus we can take $\beta = \Emb$ and so we  have $\alpha \geqclo \<\cN:\rho\> + \beta$ for all $\rho$.  This establishes \eq{emb-qrst}.

The most general case without embezzling states is when
\be \alpha = Q_1 [q\ra q] + Q_2[q\la q] + C_2 [c\la c] + E [qq].
\label{eq:qqce}\ee
 In this case, we always have the constraint
\be Q_1 \geq Q(\cN) = \max_\rho \half I(R;B)_\rho,\ee
since $Q_1[q\ra q]$ is the only source of forward communication.
Suppose that $\beta = (E-e) [qq]$, for some $0 \leq e \leq E$, i.e. we
will use all of the communication, but discard $E-e$ ebits of
entanglement.
Now, for each  $\rho$, being able to simulate the channel on
input $\rho^{\ot n}$ requires creating at least $I(R;B)_\rho$ mutual
information and $I(B;E)+\beta$ entanglement which is only possible if
$$\alpha \geqclo \half I(R;B)_\rho [q\ra q]
+ (\half I(E;B)_\rho + E-e )[qq].$$
Equivalently
\bmu (Q_1 - \half I(R;B)_\rho)[q\ra q] + Q_2 [q\la q] + C_2 [c\la c]
\\ \geqclo (\half I(E;B)_\rho- e) [qq].
\label{eq:qqce-needed}\emu
We can calculate when \eq{qqce-needed} holds by using the spread
capacity expressions in \eq{spread-caps}.  First, if
$\half I(E;B)_\rho- e \geq 0$ then the $C_2 [c\la c]$ is not helpful and we simply have $Q_1 - \half I(R;B)_\rho + Q_2 \geq \half I(E;B)_\rho- e$, or equivalently
$$Q_1 + Q_2  \geq H(B)_\rho - e .$$
Alternatively, if $\half I(E;B)_\rho- e \leq 0$ then we have the inequality
$Q_1 - \half I(R;B)_\rho + Q_2 + C_2 \geq e - \half I(E;B)_\rho$,
which is equivalent to
$$Q_1 + Q_2 + C_2  \geq e - H(B|R)_\rho .$$
We will consider the case when $E$ is sufficiently large so that it
does not impose any constraints on the other parameters.  This results
in the bound
\be 2(Q_1+ Q_2) + C_2 \geq \max_\rho H(B)_\rho -
\min_\rho H(B|R)_\rho - C_E(\cN).\label{eq:qqc-needed},\ee
whose RHS is precisely $\deficit(\cN)$ from \defref{deficit}.

The role of communication can be thought of as both creating
mutual information between $R$ and $B$ and in creating entanglement
spread.  Both are necessary for channel simulation, but only forward
communication can create mutual information, while backwards or
forward communication (or even other resources, such as embezzling
states) can be used to create spread.

 % = e-H(B|R)_\rho. $$
% R + B - E - (E + B - R) = R - E = BE - E = B|E.  E - R = B|R
%\max\L\{\max_\rho H(B)_\rho - e, \min_\rho H(B|R) + e - C_2 \R\}.

% q >= a - e.  q+c >= b + e
%q >= min_{e <= E} max( a - e, b + e - c)

The non-feedback case (e) of \thmref{qrst} adds one additional
subtlety: since the simulation gives part of the input to Eve, it does
not have to preserve superpositions between as many different input
density matrices.  In particular, if the input density matrix is
$\rho^{\ot n}$, then Eve learns $\hat{\cN}(\rho)$. Thus, we need to run
our protocol in an incoherent superposition over different values of
$\hat{\cN}(\rho)$ and then in a coherent superposition within each
$\hat{\cN}^{-1}(\omega)$.  Intuitively we can think of the input as a
superposition over purifications of different tensor powers $\rho^{\ot
  n}$.  This picture can be made rigorous by the post-selection
principle~\cite{CKR09} and gentle tomography~\cite{Hayashi:02c,BHL06}, but we will not
explore this approach in detail.
In this picture Eve learns $\omega = \hat{\cN}(\rho)$ up to
accuracy $O(1/\sqrt{n})$, and this collapses the superposition to
inputs $\rho^{\ot n}$ with $\rho\in \hat{\cN}^{-1}(\omega)$.  Thus we
need only consider entanglement spread over the sets $\hat{\cN}^{-1}(\omega)$.

Unfortunately, even in the case of a fixed input
$\rho$, the additivity question is open.  Until it is resolved, we
cannot avoid regularized formulas.  However, conceptually part (e)
adds to part (d) only the issues of regularization and optimization
over ways of splitting $E^n$ into parts for Alice and Bob.

At this point it is natural to ask whether spread is only helpful for {\em feedback\/} simulations.  The amplitude damping channel
of Fig.~\ref{fig:ampl} and the variable-entropy channel both have
efficient non-feedback simulations on general inputs, using $C_E$ bits of forward
communication, even when entanglement is supplied as ordinary
ebits.   One way to see why is to observe that in each
case $H(\cN(\rho))$ is uniquely determined by $\hat{\cN}(\rho)$,
so that measuring the average density matrix of Eve will leave
no room for spread.  An optimal simulation can gently measure
the average density matrix of Eve, transmit this information
classically to Bob, and then use the appropriate number of ebits.

A second way to see that spread is not needed to simulate these two
channels is to give explicit choices of the isometry in
\eq{full-non-feedback}.   This is easier to do for the
variable-entropy channel, for which
$$\cM_{d,F} = \ket{00}^{BE}\bra{0}^A + \frac{1}{\sqrt d}
\sum_{i=1}^d \ket{ii}^{BE} \bra{1}^A.$$
Define an isometry $V_1:E\ra E_AE_B$ by
$$V_1 =
\sum_{i=1}^d
\L(
\frac{1}{\sqrt d} \ket{0i}^{E_A}\ket{i}^{E_B}\bra 0^E
+ \ket{1i}^{E_A}\ket{i}^{E_B}\bra 1^E\R).$$
Then $V_1 \circ \cM_{d,F}$ is equivalent to transmitting a classical
bit from Alice to Bob and creating a $d$-dimensional maximally
entangled state between Bob and Eve.  This can be simulated using one
cbit by having Bob locally create a
$d$-dimensional maximally mixed state.

However, the above reasoning does not extend to more complicated situations.  In \secref{clueless} we exhibit a channel whose efficient simulation requires spread-generating resources such as embezzling states or back communication even in the non-feedback setting.
%For the amplitude-damping channel, we can write
%$$V_\cN = \ket{0}^E(\ket{0}^B\bra 0^A + \sqrt{1-\gamma}
%\ket 1^B\bra 1^A)  + \sqrt\gamma \ket 1^E \ket 0^B \bra 1^A.$$

%{\em Remarks:}
%\bit\item The theorem needs to be modified to incorporate
%regularization.
%\item Perhaps (d) and (e) should be a theorem on their own.
%\eit

\subsection{Relation to other communication protocols}
Special cases of \thmref{qrst} include remote state
preparation~\cite{BHLSW03} (and the qubit-using variant, super-dense
coding of quantum states~\cite{HHL03}) for CQ-channels $\cN(\rho) =
\sum_j \bra{j} \rho \ket{j} \sigma_j$; the co-bit equality $[q\ra qq]
= ([q\ra q] + [qq])/2$~\cite{Har03}; measurement
compression~\cite{Winter:POVM} (building on~\cite{MP00,MW01}) for
qc-channels $\cN(\rho) = \sum_j \tr(\rho M_j) \proj{j}$ where $(M_j)$
is a POVM; entanglement dilution~\cite{BBPS96} for a constant channel
$\cN(\rho) = \sigma_0$; and entanglement of purification
(EoP)~\cite{purification} -- it was shown by Hayashi~\cite{Hayashi:EoP} that
optimal visible compression of mixed state sources is given by the
regularized EoP.

The Wyner protocol for producing a classical correlated
distribution~\cite{Wyner75} is a static analogue of the cbit-rbit
tradeoff.  Similarly, the entanglement of purification is a static
version of the qubits-but-no-ebits version of the QRST.

For feedback channels, \cite{devetak-triangle} showed that the QRST
can be combined with the so-called ``feedback father'' to obtain the
resource equivalence \eq{feedback-father}.  On the other hand,
\cite{devetak-triangle} also showed that running the QRST backwards on
a fixed i.i.d. source yields state merging~\cite{HOW05},
a.k.a. fully-quantum Slepian-Wolf.  This implies that merging can be used
to provide an alternate construction of the QRST on a known
i.i.d. source~\cite{ADHW06}.
More recently, \cite{YD09} has introduced {\em state redistribution} which
simultaneously generalizes state merging and splitting, by determining
the optimal rate at which a system can be sent from one party to
another when both parties hold ancilla systems that are in some way
entangled with the system being sent.

As remarked earlier, the version of the classical reverse Shannon theorem proved here,
\thmref{crst}, differs from the version originally proved in \cite{BSST01} (which
also first conjectured \thmref{qrst}).  In the earlier version, the simulation was
exactly faithful even for finite block size, and asymptotically efficient in the amount of
communication used, but exponentially inefficient in the amount of shared randomness.
The version proved here is only asymptotically faithful, but importantly stronger in being
asymptotically efficient in its use both of classical communication and shared randomness.
None of our simulations, nor other results in this area (apart from \cite{CLMW11}), achieve
the zero-error performance of \cite{BSST01}.  We believe that zero-error simulation of
classical channels using optimal rates of communication requires exponential amounts of shared randomness, and that for quantum
channels, zero-error simulations do not exist in general.  Apart from some easy
special cases (e.g. quantum feedback channels), we do not know how to
prove these conjectures.

\section{Simulation of classical channels}
\label{sec:CRST}

\subsection{Overview}
This section is devoted to the proof of \thmref{crst} (the classical
reverse Shannon theorem).  Previously the high-randomness cases of
\thmref{crst} were proved in \cite{BSST01,Winter:RST} and its converse
was proved in \cite{Winter:RST}.  Here we will review their proof and
show how it can be extended to cover the low-randomness case (parts
(c,d,e) of \thmref{crst}).  Similar results have been obtained
independently in \cite{Cuff08}.

The intuition behind the reverse Shannon theorem can be seen by
considering a toy version of the problem in which all probabilities
are uniform.   Consider a regular bipartite graph with vertices
divided into $(X,Y)$ and with edges $E\subset X\times Y$.  Since the
graph is regular, every vertex in $X$ has degree $|E|/|X|$ and every
vertex in $Y$ has degree $|E|/|Y|$.  For $x\in X$, let
$\Gamma(x)\subset Y$ be its set of neighbors.  We can use this to
define a channel from $X$ to $Y$: define $N(y|x)$ to be $1/|\Gamma(x)| = |X|/|E|$ if
$y\in \Gamma(x)$ and $0$ if not.  In other words, $N$ maps $x$ to a
random one of its neighbors.  We call these channels ``unweighted''
since their transition probabilities correspond to an unweighted graph.

In this case, it is possible to simulate the channel $N$ using a
message of size $\approx \log(|X|\cdot |Y| / |E|)$ and using $\approx
\log(|E|/|X|)$ bits of shared randomness.  This can be thought of as a
special case of part (a) of \thmref{crst} in which $N$ is an unweighted
channel and we are only simulating a single use of
$N$.  This is achieved by approximately decomposing $N$ into a probabilistic mixture of channels and using the shared randomness to select which one to use.  We will choose these channels such that their ranges are disjoint subsets of  $Y$, and in fact, will construct them by starting with a partition of $Y$ and working backwards.
The resulting protocol is analyzed in the following lemma.

\begin{lemma}\label{lem:flat-crst}
Consider a channel $N:X\ra Y$ with $N(y|x) = 1_{\Gamma(x)}(y) /
|\Gamma(x)|$, where $1_S$ denotes the indicator function for a set
$S$.  Choose positive integers $r,m$ such
that $rm=|Y|$ and let $\gamma = m|E|/|X|\,|Y|$.  Choose a random
partition of $Y$ into subsets $Y_1,\ldots,Y_r$, each of size $m$, and
for $y\in Y$ define $i(y)$ to be the index of the block containing
$y$.    Define
$$\tilde{N}(y|x) = \frac{1_{\Gamma(x)}(y)}{r\cdot |\Gamma(x) \cap Y_{i(y)}|}$$
to be the channel that results from the following protocol:
\benum\item Let $i\in[r]$ be a uniformly chosen random number shared by
Alice and Bob.
\item Given input $x$, Alice chooses a random element of
  $\Gamma(x)\cap Y_i$ (assuming that one exists) and transmits its
  index $j\in[m]$ to Bob.
\item Bob outputs the $j^{\text{th}}$ element of $Y_i$.
\eenum
Then it holds with probability $\geq 1-2r e^{-\gamma\eps^2}$
that $\|N(\cdot | x) -  \tilde N(\cdot | x)\|_1\leq \eps$ for all $x$.
\end{lemma}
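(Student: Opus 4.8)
The plan is to fix an input $x$ and a realization of the random partition, write down the simulated channel $\tilde N(\cdot\mid x)$ explicitly, bound $\|N(\cdot\mid x)-\tilde N(\cdot\mid x)\|_1$ by a quantity measuring how evenly the blocks $Y_1,\dots,Y_r$ slice the neighborhood $\Gamma(x)$, and then invoke a Chernoff bound to show this evenness holds for every block (and every input) with the stated probability. The first step is immediate: conditioned on the shared index $i$ and on $x$, Bob outputs $y$ with probability $1_{\Gamma(x)\cap Y_i}(y)/|\Gamma(x)\cap Y_i|$, and since $y$ lies in the single block $Y_{i(y)}$, averaging over $i\in[r]$ recovers the formula $\tilde N(y\mid x)=1_{\Gamma(x)}(y)/(r\,|\Gamma(x)\cap Y_{i(y)}|)$ in the statement. (On the event $\Gamma(x)\cap Y_i=\emptyset$ the protocol fails, which is why $\tilde N(\cdot\mid x)$ is in general sub-normalized, with total mass $|\{i:\Gamma(x)\cap Y_i\neq\emptyset\}|/r$.)

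The key reduction is then a direct calculation. Fix $x$ and the partition; by left-regularity of the graph $|\Gamma(x)|=|E|/|X|=:d$, and $\gamma=m|E|/(|X|\,|Y|)=d/r$ is the ``ideal'' number of neighbors of $x$ per block. Put $a_i:=|\Gamma(x)\cap Y_i|$, so $\sum_{i=1}^r a_i=d$. Only $y\in\Gamma(x)$ contribute to the $\ell_1$ distance, and grouping them by block gives
\[
\|N(\cdot\mid x)-\tilde N(\cdot\mid x)\|_1
=\sum_{i:\,a_i\geq1}a_i\Bigl|\tfrac1d-\tfrac1{ra_i}\Bigr|
=\sum_{i:\,a_i\geq1}\Bigl|\tfrac{a_i}{d}-\tfrac1r\Bigr|
\;\leq\;\frac1d\sum_{i=1}^r|a_i-\gamma| ,
\]
so that $|a_i-\gamma|\leq\eps\gamma$ for all $i$ forces $\|N(\cdot\mid x)-\tilde N(\cdot\mid x)\|_1\leq r\eps\gamma/d=\eps$.

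It remains to control the fluctuations of the $a_i$. For each fixed $i$ the block $Y_i$ is a uniformly random $m$-subset of $Y$, so $a_i$ is hypergeometric with mean $dm/|Y|=\gamma$; being at least as concentrated as a binomial of the same mean, it obeys a multiplicative Chernoff bound $\Pr[\,|a_i-\gamma|>\eps\gamma\,]\leq 2e^{-\Theta(\gamma\eps^2)}$. A union bound over the $r$ blocks then produces the advertised failure probability $\leq 2r\,e^{-\Theta(\gamma\eps^2)}$; extending the conclusion to all $x$ simultaneously costs a further factor $|X|$, which is absorbed into the exponent in the regime where the lemma is applied — after block-coding to length $n$ one takes $\gamma$ exponentially large while $\log|X|$ and $\log r$ remain $O(n)$.

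The distance identity and the Chernoff estimate are routine; the one delicate point, and the one I would check most carefully, is the bookkeeping for empty blocks $\Gamma(x)\cap Y_i=\emptyset$. On such a block both $N$ and $\tilde N$ vanish while $\tilde N$ globally loses the mass $1/r$, so one must verify — as the displayed computation does — that these blocks are simply dropped from (not added to) the upper bound $\tfrac1d\sum_i|a_i-\gamma|$. This is exactly what lets a bound on $\max_i|a_i-\gamma|$ suffice even though $\tilde N$ is only a sub-channel, and it is also the reason the approximation remains meaningful despite the occasional protocol failure.
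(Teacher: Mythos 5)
Your argument is the same as the paper's: express $\|N(\cdot\mid x)-\tilde N(\cdot\mid x)\|_1$ as $\sum_i\bigl|\tfrac{a_i}{d}-\tfrac1r\bigr|$ with $a_i=|\Gamma(x)\cap Y_i|$, observe each $a_i$ is hypergeometric with mean $\gamma$, concentrate via a Chernoff/Hoeffding bound, and union-bound over blocks and inputs. The paper invokes Hoeffding's hypergeometric tail (its Lemma~\ref{lem:hyper}) directly with $A=\Gamma(x)$, $B=Y_i$, rather than appealing to binomial domination as you do, but the estimates are interchangeable; your handling of the empty-block case and the resulting sub-normalization is a sound refinement of a point the paper treats implicitly. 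One small accounting note: as you observe, the union bound over inputs contributes a further factor $|X|$ (and Hoeffding gives exponent $\gamma\eps^2/2$), so the failure probability really is $2|X|\,r\,e^{-\gamma\eps^2/2}$; the constant as displayed in the lemma statement is optimistic, but this is a cosmetic issue — the parameter choice $\gamma=\Theta(\eps^{-2}\log|E|)$ used for derandomization accounts for it, exactly as you say.
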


If we choose $\gamma = 2(\ln 4|E|)/\eps^2$ then there is a nonzero
probability of a good partition existing. In this case we can derandomize
the construction and simply say that a partition of $Y$ exists such that
the above protocol achieves low error on all inputs.

The idea behind \lemref{flat-crst} is that for each $x$ and $i$, the random
variable $|\Gamma(x) \cap Y_i|$ has expectation close to
$$|\Gamma(x)|\cdot |Y_i| / |Y| = \frac{|E|}{|X|}\cdot \frac{m}{|Y|} =
\gamma,$$ with typical fluctuations on the order of $\sqrt{\gamma}$.
If $\gamma$ is large then these fluctuations are relatively small, and
the channel simulation is faithful.  Similar ``covering lemmas'' appeared
in Refs.~\cite{Winter:RST,Cuff08}, and were anticipated by Ref.~\cite{HV93}
and Thm 6.3 of \cite{Wyner75}.
The details of the proof are
described in \secref{flat-crst}.

The difference between \lemref{flat-crst} and the classical reverse
Shannon theorem (i.e. part (a) of \thmref{crst}) is that in the latter we are
interested in an asymptotically growing number of channel uses $n$ and
in simulating general channels $N$, instead of unweighted channels.
It turns out that when $n$ is large, $N^n$ looks mostly like an
unweighted channel, in a sense that we will make precise in
\secref{types}.  We will see that Alice need communicate only
$O(\log(n))$ bits to reduce the problem of simulating $N^n$ to the
problem of simulating an unweighted channel.  This will complete the
proof of the direct part of part (a) of \thmref{crst}.

One feature of the protocol in \lemref{flat-crst} is that Bob uses
only shared randomness ($i$) and the message from Alice ($j$) in order
to produce his output $y$.  As a result, the protocol effectively
simulates the feedback channel $N_F$ in which Alice also gets a copy
of $y$.  Conversely, in order to simulate a feedback channel, Bob
cannot use local randomness in any significant way.

On the other hand, if Alice does not need to learn $y$, then we can
consider protocols in which some of the random bits used are shared
and some are local to Alice or Bob.  This will allow us to reduce the
use of shared randomness at the cost of some extra communication.  The
resulting trade-off between the resources is given in part (c) of
\thmref{crst}.  In order to prove it, we will again first consider the
unweighted case.

The idea will be to decompose the channel $N(y|x)$ as the composition
of channels $N_1(w|x)$ and $N_2(y|w)$; i.e. $N(y|x) = (N_2\circ
N_1)(x) = \sum_{w\in W} N_1(w|x) N_2(y|w)$.  In this case Alice can
simulate the channel $N$ on input $x$, by simulating $N_1$ to produce
intermediate output $w$ on which Bob locally applies $N_2$ to produce
$y$.  Since $w$ is generally more correlated with $x$ than $y$, this
will require more communication than simply simulating $N$ directly as
in \lemref{flat-crst}.  However, since Bob simulates $N_2$ using local
randomness, the protocol may require less shared
randomness, and more importantly, the total amount of communication
plus shared randomness may be lower.

We will assume that the channels $N, N_1$ and $N_2$ are all
unweighted channels.  Let the corresponding bipartite graphs for $N,
N_1,N_2$ have edges
$E_{XY} \subset X\times Y$, $E_{XW} \subset X\times W$ and $E_{YW}
\subset W\times Y$, respectively.  We use $\Gamma_{XY}(x)$ to denote
the neighbors of $x$ in $Y$; that is, $\Gamma_{XY}(x) = \{y : (x,y)
\in E_{XY}\}$.  Similarly, we can define $\Gamma_{YX}(y)$ to be the
neighbors of $y$ in $X$, $\Gamma_{XW}(x)$ to be the neighbors of $x$
in $W$ and so on.  We assume that the graphs are regular, so that
$|\Gamma_{XW}(x)|=|E_{XW}|/|X|$ for all $x$,
$|\Gamma_{WY}(w)|=|E_{WY}|/|W|$ for all $w$, and so on.  Combined with
the fact that $N=N_2\circ N_1$, we find that
\bmu \frac{1_{\Gamma_{XY}(x)}(y)}{|E_{XY}|/|X|} = N(y|x) =
\sum_{w\in W}N_2(y|w) N_1(w|x) \\=
\sum_{w\in W} \frac{1_{\Gamma_{XW}(x)}(w)}{|E_{XW}|/|X|} \cdot
\frac{1_{\Gamma_{WY}(w)}(y)}{|E_{WY}|/|W|}
\\ =
\frac{|\Gamma_{XW}(x)\cap \Gamma_{YW}(y)|}
{|E_{XW}|\cdot|E_{WY}| / |X|\, |W|},
\label{eq:Nyx-markov}\emu
Rearranging terms yields the identity
 \be |\Gamma_{XW}(x)\cap \Gamma_{YW}(y)| = 1_{\Gamma_{XY}(x)}(y)
\frac{|E_{XW}|\, |E_{WY}| }{ |E_{XY}|\, |W|}.
\label{eq:intersection}\ee

The protocol is now defined in a way similar to the one in
\lemref{flat-crst}.
\begin{lemma}\label{lem:flat-wyner}
Choose positive integers $r,m$ such that $m = \gamma
|X|\,|W|/|E_{XW}|$ and $r = |E_{XY}|\,|W| / |E_{WY}|\, |X|$.
Choose disjoint sets $W_1,\ldots,W_r\subset W$ at random, each of size
$m$.  Let $W_i = \{w_{i,1}, \ldots, w_{i,m}\}$.  Let $\tilde N(y|x)$ be
the channel resulting from the following protocol:
\benum\item Let $i\in[r]$ be a uniformly chosen random number shared by
Alice and Bob.
\item Given input $x$, Alice chooses a random $w_{i,j} \in
  \Gamma(x)\cap W_i$ (assuming that one exists) and transmits its
  index $j\in[m]$ to Bob.
\item Bob outputs $y$ with probability $N_2(y|w_{i,j})$.
\eenum
Then it holds with probability $\geq 1-2|E_{XY}|e^{-\gamma\eps^2/32}$
that $\|N(\cdot | x) -  \tilde N(\cdot | x)\|_1\leq \eps$ for all $x$.
\end{lemma}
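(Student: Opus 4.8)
The plan is to follow the template of \lemref{flat-crst}, adding one extra layer of averaging to account for the second leg $N_2$. The first step is to make the simulated channel explicit: conditioned on the shared random index $i$, Alice's message $j$ points to a uniformly random element $w$ of $\Gamma_{XW}(x)\cap W_i$ (whenever this set is nonempty), after which Bob applies $N_2$, so that
\[
\tilde N(y\mid x)=\frac1r\sum_{i=1}^r\frac1{|\Gamma_{XW}(x)\cap W_i|}\sum_{w\in \Gamma_{XW}(x)\cap W_i}N_2(y\mid w).
\]
Since $N_2$ is unweighted, $N_2(y\mid w)=\tfrac{|W|}{|E_{WY}|}\,1_{\Gamma_{WY}(w)}(y)$, so the inner sum equals $\tfrac{|W|}{|E_{WY}|}h_i(y)$ with $h_i(y):=|\Gamma_{XW}(x)\cap W_i\cap\Gamma_{YW}(y)|$; writing also $g_i:=|\Gamma_{XW}(x)\cap W_i|$, the entire simulation is controlled by these two families of counting statistics.

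The second step is to identify their expectations over the random choice of disjoint blocks. Generating $(W_1,\dots,W_r)$ by first drawing a uniformly random $rm$-element subset of $W$ and then a uniformly random ordered partition of it into blocks of size $m$ shows that each $W_i$ is marginally a uniform $m$-subset of $W$ while $\bigcup_i W_i$ is a uniform $rm$-subset. Hence $g_i$ is hypergeometric with mean $m|\Gamma_{XW}(x)|/|W|=\gamma$ (using the regularity $|\Gamma_{XW}(x)|=|E_{XW}|/|X|$ and the choice of $m$), and, feeding the intersection identity \eqref{eq:intersection} into the analogous computation, $h_i(y)$ is hypergeometric with mean $\tfrac1r\gamma\,1_{\Gamma_{XY}(x)}(y)$; in particular $H(y):=\sum_i h_i(y)=|\Gamma_{XW}(x)\cap(\bigcup_i W_i)\cap\Gamma_{YW}(y)|$ has mean $\gamma\,1_{\Gamma_{XY}(x)}(y)$. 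The choice of $r$ is exactly what makes $\tfrac{|W|}{r|E_{WY}|}=\tfrac{|X|}{|E_{XY}|}=N(y\mid x)$ for $y\in\Gamma_{XY}(x)$, so that all these means line up with $\gamma\,N(y\mid x)$.

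The core is then a deterministic estimate on the good event $\mathcal G_x$ that $g_i\in[(1-\tfrac\eps2)\gamma,(1+\tfrac\eps2)\gamma]$ for all $i$ and $H(y)\in[(1-\tfrac\eps2)\gamma,(1+\tfrac\eps2)\gamma]$ for all $y\in\Gamma_{XY}(x)$ (in particular $g_i\ge1$, so step~2 can always be carried out). On $\mathcal G_x$ I would write
\[
\tilde N(y\mid x)-N(y\mid x)=\frac{|W|}{r|E_{WY}|}\sum_i h_i(y)\Bigl(\frac1{g_i}-\frac1\gamma\Bigr)+\Bigl(\frac{|W|}{r\gamma|E_{WY}|}H(y)-N(y\mid x)\Bigr),
\]
take $\sum_y|\cdot|$, and apply the telescoping identities $\sum_y h_i(y)=g_i|E_{WY}|/|W|$ and $\sum_y H(y)=\bigl(\sum_i g_i\bigr)|E_{WY}|/|W|$ along with $\sum_y N(y\mid x)=1$. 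The first term collapses to $\tfrac1r\sum_i|1-g_i/\gamma|\le\eps/2$, and the second, since $H(y)$ and $N(y\mid x)$ both vanish off $\Gamma_{XY}(x)$, collapses to $\tfrac{|X|}{|E_{XY}|}\sum_{y\in\Gamma_{XY}(x)}|H(y)/\gamma-1|\le\eps/2$, giving $\|N(\cdot\mid x)-\tilde N(\cdot\mid x)\|_1\le\eps$ on $\mathcal G_x$.

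It remains to bound the probability that $\mathcal G_x$ fails for some $x$. A multiplicative Chernoff-type bound for sampling without replacement controls each event $g_i\notin[(1-\tfrac\eps2)\gamma,(1+\tfrac\eps2)\gamma]$ and each event $H(y)\notin[(1-\tfrac\eps2)\gamma,(1+\tfrac\eps2)\gamma]$ by $2e^{-\Omega(\eps^2\gamma)}$; per $x$ there are $r$ of the former and $|\Gamma_{XY}(x)|=|E_{XY}|/|X|$ of the latter, so over all $x$ one gets $|X|r+|E_{XY}|$ events, and $|X|r=|E_{XY}||W|/|E_{WY}|\le|E_{XY}|$ since $N_2$ being a channel forces $|E_{WY}|\ge|W|$. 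A union bound then yields a failure probability of the stated form $2|E_{XY}|e^{-\gamma\eps^2/32}$ once the slack and the Chernoff constant are matched up. The step I expect to be the main obstacle is the middle one: getting the means of $g_i$ and $h_i(y)$ to equal $\gamma$ and $\gamma N(y\mid x)$ depends simultaneously on the identity \eqref{eq:intersection} (itself a consequence of $N=N_2\circ N_1$ with all three channels unweighted) and on the precise values chosen for $m$ and $r$; and because the blocks $W_i$, and hence the $g_i$ and $h_i(y)$, are mutually dependent, the analysis has to be built from individual-event concentration plus a union bound rather than any joint concentration statement.
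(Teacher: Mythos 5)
Your proof is correct in outline and uses the same underlying tools as the paper (the hypergeometric Hoeffding bound from Lemma~\ref{lem:hyper}, union bound over $x$, the identity $|X|r\le|E_{XY}|$), but it takes a genuinely different route for the final $L_1$ estimate. The paper deliberately proves only a \emph{one-sided} pointwise bound $\tilde N(y|x)\ge(1-\eps/2)N(y|x)$ for $(x,y)\in E_{XY}$, using only an \emph{upper} bound on $g_i=|\Gamma_{XW}(x)\cap W_i|$ (so no lower bound is needed; the empty-intersection case is handled for free), and then converts this to an $L_1$ bound via $\|N-\tilde N\|_1 = 2\sum_y\max(0,N-\tilde N)\le 2\cdot(\eps/2)\sum_y N=\eps$, which exploits the fact that $N(\cdot|x)$ is normalized and $\tilde N(\cdot|x)$ is (sub)normalized. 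You instead decompose $\tilde N - N$ into two explicit error pieces and control both by two-sided concentration of $g_i$ and $H(y)$, which forces you to keep a lower bound on $g_i$ to avoid division by zero. Your argument is somewhat more hands-on and does not rely on the normalization trick, at the cost of needing more concentration events: with deviations $\eps/2$ you get failure probability $4|E_{XY}|e^{-\gamma\eps^2/8}$, which does not literally match the stated $2|E_{XY}|e^{-\gamma\eps^2/32}$ for small $\gamma\eps^2$. Since the stated bound is vacuous unless $\gamma\eps^2 > 32\ln(2|E_{XY}|)$, the discrepancy is harmless in practice, and you flag it yourself, but the paper's one-sided route is what makes the constants come out cleanly in the form claimed. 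In short: same skeleton, different flesh at the crucial step, and the paper's one-sidedness is exactly the "simplification" its parenthetical remark promises.
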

We can take $\gamma = 32(\ln 2|E_{XY}|)/\eps^2$ and derandomize the statement of the Lemma.

Note that in general we will have $rm < |W|$, so that this protocol
does not use all of $W$.  This should not be surprising, since
faithfully simulating the channel $N_1$ should in general be more
expensive than simulating $N$.  The trick is to modify the simulation
of $N_1$ that would be implied by \lemref{flat-crst} to use less
randomness, since we can rely on Bob's application of $N_2$ to add in
randomness at the next stage.

\subsection{Proof of unweighted classical reverse Shannon theorem}
\label{sec:flat-crst}

In this section we prove \lemref{flat-crst} and \lemref{flat-wyner}.
The main tool in both proofs is the Hoeffding bound for the hypergeometric
distribution~\cite{Hoeffding63}.  The version we will need is
\begin{lemma}[Hoeffding~\cite{Hoeffding63}]\label{lem:hyper}
For integers $0 < a \leq b < n$, choose $A$ and $B$ to be random subsets of
$[n]$ satisfying $|A|=a$ and $|B|=b$.  Then $\mu:=\bbE[|A\cap B|]=ab/n$ and
\ba
\Pr\L[|A \cap B|\geq(1+\eps)\mu\R] &\leq e^{-\frac{\mu\eps^2}{2}} \\
\Pr\L[|A \cap B|\leq(1-\eps)\mu\R] &\leq e^{-\frac{\mu\eps^2}{2}} \\
\Pr\L[\L|\, |A \cap B| - \mu \R| \geq \eps\mu\R] &
\leq 2 e^{-\frac{\mu\eps^2}{2}}
\ea
\end{lemma}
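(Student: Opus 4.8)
The plan is to follow Hoeffding's original strategy: bound the hypergeometric tails by the corresponding binomial (sampling-with-replacement) tails, and then invoke the ordinary multiplicative Chernoff bound. First I would note that, by symmetry, the law of $|A\cap B|$ depends only on $n$, $a=|A|$ and $b=|B|$ — conditioning on $B$, for any fixed $b$-set $B$ the random variable $|A\cap B|$ is hypergeometric, with a distribution not depending on which $b$-set was chosen — so without loss of generality $B=\{1,\dots,b\}$ is fixed and $A$ is a uniformly random $a$-subset of $[n]$. Then $|A\cap B|=\sum_{j=1}^{b}\mathbf 1[j\in A]$ is a sum of $b$ Bernoulli$(a/n)$ variables, so indeed $\mu:=\bbE|A\cap B|=ab/n$.

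The one structural input is that the indicators $\mathbf 1[1\in A],\dots,\mathbf 1[n\in A]$ of a uniformly random fixed-size subset are negatively associated — a standard fact, and essentially the content of Hoeffding's comparison lemma. Consequently, for every real $t$,
\[
 \bbE\!\left[e^{t|A\cap B|}\right]=\bbE\!\left[\prod_{j=1}^{b}e^{t\,\mathbf 1[j\in A]}\right]\le\prod_{j=1}^{b}\bbE\!\left[e^{t\,\mathbf 1[j\in A]}\right]=\Bigl(1-\tfrac an+\tfrac an e^{t}\Bigr)^{b},
\]
i.e.\ the moment generating function of $|A\cap B|$ is dominated by that of a $\mathrm{Bin}(b,a/n)$ variable, which has the same mean $\mu$. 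From here it is the textbook computation: using $1+x\le e^{x}$ the right side is at most $\exp(\mu(e^{t}-1))$, and applying Markov's inequality to $e^{t|A\cap B|}$ and optimizing over $t>0$ (for the upper tail) and over $t<0$ (for the lower tail) produces bounds of the form $\exp\bigl(\mu(\eps-(1+\eps)\ln(1+\eps))\bigr)$ and $\exp\bigl(\mu(-\eps-(1-\eps)\ln(1-\eps))\bigr)$; the standard elementary estimates of these exponents give the stated $e^{-\mu\eps^{2}/2}$ Chernoff bounds, and the two-sided inequality follows by a union bound over the two one-sided events.

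The only step carrying real content is the comparison with the independent case — the statement that sampling without replacement concentrates at least as sharply as sampling with replacement. I would supply it either by citing negative association of fixed-size-subset indicators, or by reproducing the argument of \cite{Hoeffding63}, which realizes the hypergeometric law as a conditional expectation of a binomial law and then applies Jensen's inequality to the convex function $x\mapsto e^{tx}$. Everything past that point is routine bookkeeping; in particular the precise constant in the exponent is not delicate here, since every subsequent use of the lemma only needs a tail bound of the form $\exp(-\Omega(\mu\eps^{2}))$.
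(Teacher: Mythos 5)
The paper does not prove this lemma; it cites Hoeffding~\cite{Hoeffding63} and moves on, so there is no in-house proof to compare against. Your reconstruction is the standard one, and it is essentially the route Hoeffding himself takes: Theorem~4 of \cite{Hoeffding63} is precisely the statement that for sampling without replacement the moment generating function (hence every Chernoff-type tail bound) is dominated by that of sampling with replacement, and your negative-association/Jensen argument is a clean modern restatement of that comparison. So your approach matches the intent of the citation.

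One small point of care. The multiplicative Chernoff bound applied to a $\mathrm{Bin}(b,a/n)$ variable gives, for the upper tail, an exponent of $\mu\bigl[(1+\eps)\ln(1+\eps)-\eps\bigr]$, and the elementary estimate of this quantity is $\geq \mu\eps^2/(2+\eps)$ (hence $\geq \mu\eps^2/3$ for $\eps\leq 1$), not $\geq \mu\eps^2/2$: the function $g(\eps)=(1+\eps)\ln(1+\eps)-\eps-\eps^2/2$ satisfies $g(0)=0$ and $g'(\eps)=\ln(1+\eps)-\eps<0$, so it is strictly negative for $\eps>0$. Only the lower-tail exponent $\mu\bigl[\eps+(1-\eps)\ln(1-\eps)\bigr]$ dominates $\mu\eps^2/2$. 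So any argument that factors through the binomial comparison — yours, or Hoeffding's — yields a slightly weaker constant in the upper tail than the $1/2$ printed in the lemma. You already anticipated this by remarking that the constant is not delicate, and indeed the uses of the lemma in Lemmas~\ref{lem:flat-crst} and~\ref{lem:flat-wyner} only need a bound of the form $\exp(-c\mu\eps^2)$ for some absolute $c>0$, so the discrepancy is harmless. But you should either write the weaker constant, or restrict to $\eps\leq 1$ and write $\exp(-\mu\eps^2/3)$, rather than asserting the $1/2$ falls out of ``standard elementary estimates.''
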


Now we turn to \lemref{flat-crst}.  We can calculate
\bmu \| N(\cdot | x) -  \tilde N(\cdot | x)\|_1
 = \sum_{y\in \Gamma(x)} | N(y|x) - \tilde N(y|x)|
\\ =
\sum_{y\in \Gamma(x)} \L|\frac{1}{|\Gamma(x)|} - \frac{1}{r\cdot
|\Gamma(x) \cap Y_{i(y)}|}\R|
\\  =
\sum_{i=1}^r \sum_{y\in \Gamma(x)\cap Y_i} \L|\frac{1}{|\Gamma(x)|} -
\frac{1}{r\cdot |\Gamma(x) \cap Y_{i}|}\R|
 \\  =
\sum_{i=1}^r \L|\frac{|\Gamma(x) \cap Y_{i}|}{|\Gamma(x)|} -
\frac{1}{r}\R|\label{eq:flat-proof-final}\emu
To apply \lemref{hyper}, take $A=\Gamma(x)$ and $B=Y_i$, so that
$a=|E|/|X|=r\gamma$, $b=m=|Y|/r$, $n=|Y|$ and $\mu=\gamma$.  Then
each term in the sum in \eq{flat-proof-final} is $\leq \eps/r$ with
probability $\geq 1-2e^{-\gamma\eps^2/2}$.  Taking the union bound
over all $a$ and $i$ completes the proof of \lemref{flat-crst}.

\oldcomment{I am sure that there's some way to avoid the union bound,
and thereby avoid having to put a $\log(|E|)$ factor into $\gamma$.
But this is the sort of thing to deal with later.  or never.}

The proof of \lemref{flat-wyner} is similar.  This time
\bas \tilde N(y|x) &= \frac{1}{r}\sum_{i=1}^r
\sum_{w\in W_i} \Pr\L[\text{Alice sends }w|x,i\R] N_2(y|w)
\\ &
=\frac{1}{r}\sum_{i=1}^r \sum_{w\in W_i}
\frac{1_{\Gamma_{XW}(x)}(w)}{|\Gamma_{XW}(x) \cap W_i|}
\cdot \frac{1_{\Gamma_{WY}(w)}(y)}{|\Gamma_{WY}(w)|}
\\ & = \frac{|W|}{r|E_{WY}|}\sum_{i=1}^r
\frac{|\Gamma_{XW}(x) \cap \Gamma_{YW}(y) \cap W_i|}
{|\Gamma_{XW}(x) \cap W_i|}
.\eas
We will use \lemref{hyper} twice.  First, consider $|\Gamma_{XW}(x) \cap W_i|$.  This has expectation equal to $\gamma$ and therefore
$$\Pr\L[|\Gamma_{XW}(x) \cap W_i| \geq (1+\eps/4)\gamma\R] \leq e^{-\gamma\eps^2/32}.$$
(We will see that the one-sided bound  simplifies some of the later calculations.)
Taking the union bound over all $|X|r %= |E_{XY}|\,|W|/|E_{WY}|
\leq |E_{XY}|$ values of $x,i$, we find that
$|\Gamma_{XW}(x) \cap W_i| \leq (1+\eps/4)\gamma$ for all $x,i$ with probability $\geq 1-|E_{XY}| e^{-\gamma\eps^2/32}$.
Assuming that this is true, we obtain
\ba \tilde N(y|x) & \geq
\frac{|W|}{r|E_{WY}|}\sum_{i=1}^r
\frac{|\Gamma_{XW}(x) \cap \Gamma_{YW}(y) \cap W_i|}
{(1+\eps/4)\gamma}\non\\
&= \frac{|W|}{r|E_{WY}|}
\frac{|\Gamma_{XW}(x) \cap \Gamma_{YW}(y) \cap \tilde W|}
{(1+\eps/4)\gamma},
\label{eq:wyner-good1}\ea
where we define $\tilde W = W_1\cup \ldots \cup W_r$.  Note that $\tilde W$ is a random subset of $W$ of size $rm$.  Using \eq{intersection} we find that
$\bbE[|\Gamma_{XW}(x) \cap \Gamma_{YW}(y) \cap \tilde W|]$ is equal to $\gamma$ when $(x,y)\in E_{XY}$ and 0 otherwise.  Again we use \lemref{hyper} to bound
$$\Pr\L[|\Gamma_{XW}(x) \cap \Gamma_{YW}(y) \cap \tilde W| \leq (1-\eps/4)\gamma]\R] \leq e^{-\gamma\eps^2/32}$$
for all $(x,y)\in E_{XY}$.
Now we take the union bound over all pairs $(x,y)\in E_{XY}$ to find that
\be |\Gamma_{XW}(x) \cap \Gamma_{YW}(y) \cap \tilde W| \geq (1-\eps/4)\gamma
\label{eq:wyner-good2}\ee
 with probability $\geq 1-|E_{XY}|e^{-\gamma\eps^2/32}$.
 When both \eq{wyner-good1} and \eq{wyner-good2} hold and $(x,y)\in E_{XY}$ it follows that
 \ba \tilde N(y|x) & \geq \frac{|W|}{r|E_{WY}|} \frac{1-\eps/4}{1+\eps/4}
 >(1-\eps/2)\frac{W}{r|E_{WY}|} \non\\ &
= (1-\eps/2)\frac{|X|}{|E_{XY}|}
 = (1-\eps/2)N(y|x) .\ea
Finally we compare with \eq{Nyx-markov} to obtain
\ba \| N(y|x) - \tilde N(y|x)\|_1 &=
2 \sum_{y\in Y}\max(0,  N(y|x) - \tilde N(y|x))
\non \\ & < \eps \sum_{y\in Y} N(y|x) = \eps.\ea
This concludes the proof of \lemref{flat-wyner}.

\subsection{Classical types}\label{sec:types}

In this section we show how the classical method of types can be used to
extend Lemmas \ref{lem:flat-crst} and \ref{lem:flat-wyner} to prove the
coding parts of \thmref{crst}.  We begin with a summary of the arguments
aimed at readers already familiar with the method of types (a more
pedagogical presentation is in \cite{CT91}).  The idea is
to for Alice to draw a joint type according to the appropriate distribution
and to send this to Bob.  This requires $O(\log(n))$ bits of communication and
conditioned on this joint type they are left with an unweighted channel and
can apply \lemref{flat-crst}.  It is then a counting exercise to show that
the communication and randomness costs are as claimed.  For the low-randomness
case, the protocol is based on a decomposition $N^{X\ra Y}$ into
$N_2^{W\ra Y}\circ N_1^{X\ra W}$.  Alice draws an appropriate joint
type for all three variables $(X,W,Y)$ and transmits this to Bob.
Again this involves $O(\log(n))$ bits of communication and leaves
them with an unweighted channel, this time of the form that can be
simulated with \lemref{flat-wyner}.

To prove these claims, we begin by reviewing the method of types, following
\cite{CT91}.  We will use $\cX, \cY, \cW$ to denote single-letter
alphabets, while reserving $X,Y,W$ for block variables.  Consider a string $x^n=(x_1,\ldots,x_n)\in \cX^n$.
Define the type of $x^n$ to be the $|\cX|$-tuple of integers $t(x^n)
:= \sum_{j=1}^n e_{x_j}$, where $e_j\in\bbZ^{|\cX|}$ is the unit
vector with a one in the i$^{\text{th}}$ position.  Thus $t(x^n)$ counts
the frequency of each symbol $x\in\cX$ in $x^n$.  Let
$\cT_\cX^n$
denote the set of all possible types of strings in $\cX^n$ .  Since an
element of $\cT_\cX^n$ can be written as $|\cX|$ numbers ranging from
$0,\ldots,n$ we obtain the simple bound
$|\cT_\cX^n|=\binom{n+|\cX|-1}{|\cX|-1}\leq (n+1)^{|\cX|}$.  For a
type $t$, let the normalized probability distribution $\bar{t}:=t/n$
denote its empirical distribution.

For a particular type $t\in\cT_\cX^n$, denote the set of all strings in
$\cX^n$ with type $t$ by $T_t=\{x^n\in \cX^n : t(x^n)=t\}$.  From
\cite{CT91}, we have
\be (n+1)^{-|\cX|} \exp(nH(\bar{t})) \leq |T_t| = \binom{n}{t}
\leq
\exp(nH(\bar{t})), \label{eq:binom-bounds}\ee
where $\binom{n}{t}$ is defined to be  $\frac{n!}{\prod_{x\in \cX}t_x!}$.
Next, let $p$ be a
probability distribution on $\cX$ and $p^{\ot n}$ the probability
distribution on $\cX^n$ given by $n$ i.i.d.~copies of $p$,
i.e. $p^{\ot n}(x^n) := p(x_1)\cdots p(x_n)$.  Then for any $x^n\in T_t$ we
have $p^{\ot n}(x^n) = \prod_{x\in \cX} p(x)^{t_x} =
\exp(-n(H(\bar{t}) + D(\bar{t}\|p)))$.  Combining this with
\eq{binom-bounds}, we find that
\be \frac{\exp\L(-n D(\bar{t}\|p)\R)}{(n+1)^{|\cX|}} \leq
 p^{\ot n}(T_t) \leq \exp\L(-n D(\bar{t}\|p)\R),
 \label{eq:PT-bounds}\ee

Thus, as $n$ grows large, we are likely to observe an empirical
distribution $\bar{t}$ that is close to the actual distribution $p$.
To formalize this, define the set of {\em typical sequences}
$T_{p,\delta}^n$ by
\be T_{p,\delta}^n :=
\bigcup_{\substack{t\in\cT_\cX^n\\\|\bar{t}-p\|_1\leq\delta}}
 T_t.\ee
To bound $p^{\ot n}(T_{p,\delta}^n)$, we apply Pinsker's
inequality~\cite{Pinsker64}:
\be D(q\|p) \geq \frac{1}{2\ln 2}\|p-q\|_1^2
\label{eq:pinsker}\ee
to show that
\be p^{\ot n}(T_{q,\delta}^n)
\geq 1- (n+1)^{|\cX|} \exp\L(-\frac{n\delta^2}{2\ln 2}\R).
\label{eq:p-typical}\ee
We will also need the Fannes-Audenaert
inequality~\cite{Fannes73,Aud-ineq} which establishes the continuity
of the entropy function.
Let $\eta(x) = -x\log x-(1-x)\log(1-x)$.
Then if $p,q$ are probability distribution on $d$ letters,
 \be  |H(p)-H(q)| \leq
\frac{1}{2} \|p-q\|_1\log(d-1) + \eta\L(\frac{1}{2}\|p-q\|_1\R)
 \label{eq:fannes}\ee

If we have a pair of strings $x^n\in \cX^n,y^n\in\cY^n$, then we can
define their joint type $t(x^ny^n)$ simply to be the type of the
string $(x_1y_1,\ldots,x_ny_n)\in(\cX \times \cY)^n$.  Naturally the
bounds in \eq{binom-bounds} and \eq{PT-bounds} apply equally well to joint types, with $\cX$ replaced by $\cX\times\cY$.  If $t$ is a joint type then we can define its marginals
$t^\cX \in \bbZ^{|\cX|}$ and  $t^\cY \in \bbZ^{|\cY|}$ by
$t^\cX_x = \sum_{y\in\cY} t_{x,y}$ and
$t^\cY_y = \sum_{x\in\cX} t_{x,y}$.
Let $N(y|x)$
denote a noisy channel from $\cX\ra \cY$ with $N^n(y^n|x^n) :=
N(y_1|x_1)\cdots N(y_n|x_n)$.  Then $N^n(y^n|x^n)$ depends only on the
type $t=t(x^ny^n)$ according to $N^n(y^n|x^n) = \prod_{x,y} N(y|x)^{t_{x,y}}$.

We now have all the tools we need to reduce \thmref{crst} to Lemmas
\ref{lem:flat-crst} and \ref{lem:flat-wyner}.   First, consider
parts (a,b) of \thmref{crst}, where we have an ample supply of
shared randomness.  In either case, the protocol is as follows:

\benum
\item Alice receives input $x^n$.  This may be expressed in
a type-index representation as $(t_A, p_A)$.  Here $t_A=t(x^n)$ is
the input type, and $p_A \in [|T_{t_A}|]$ is defined by assigning the
integers $1,\ldots,|T_{t_A}|$ arbitrarily to the elements of $T_{t_A}$.
\item Alice simulates the channel $N^n$ locally to generate a provisional
output $\tilde{y}^n$.  Let $t_B=t(\tilde{y}^n)$ be the output type and
$t_{AB}\in\bbZ^{|\cX\times\cY|}$ the joint type between $x^n$ and
$\tilde{y}^n$. Having determined these types, Alice discards
$\tilde{y}^n$, as it is no longer needed.
\item Alice sends $t_B$ to Bob using $|\cX\times\cY|\log(n+1)$ bits.
\item Alice and Bob use $n(H(Y)-C)+o(n)$ bits of shared randomness to
pick a subset $S_i$ from a preagreed partitioning of outputs of type
$t_B$ into approximately equal disjoint subsets, each of cardinality
approximately $2^{nC}$, where $C$ is the Shannon capacity of channel
$N$.
\item Alice finds a string $y^n\in S_i$ having the same joint type with $x^n$
as $\tilde{y}^n$ had. But because $y^n$ lies in the chosen subset
$S_i$, which Bob already knows, Alice can transmit $y^n$ to Bob more
efficiently, by a message of size only $nC+o(n)$ bits, using the
method of \lemref{flat-crst} (Let $X=T_{t_A}$, $Y=T_{t_B}$ and
$E=T_{t_{AB}}\subset X\times Y$ define a regular bipartite graph. To
simulate the action of $N^n$ on $x^n\in X$, conditioned on
$(x^n,y^n)\in E$, we need only to choose a random neighbor $y^n$ of
$x^n$ in this graph.) \eenum

This protocol is depicted in \fig{effCRST}.
\begin{figure*}[htbp]
\includegraphics[width=0.9\textwidth]{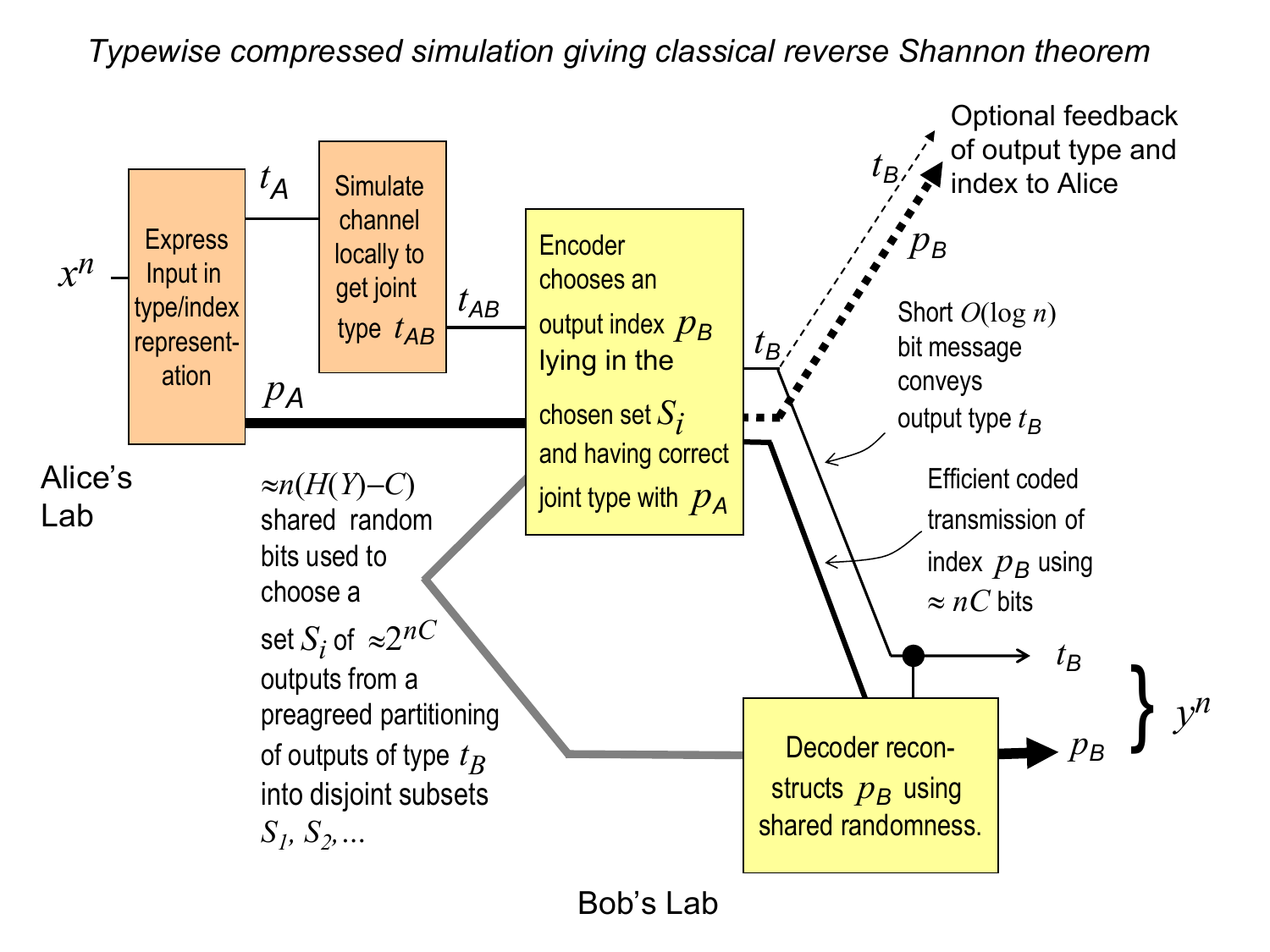}
\caption{The protocol for the classical reverse Shannon theorem (\thmref{crst}).}
\label{fig:effCRST}
\end{figure*}

%\benum\item Suppose Alice's input is $x^n$.
%\item Alice simulates $N^n(x^n)$ locally to obtain $\tilde y^n$.
%\item Let $t=t(x^n\tilde y^n)$.  Alice sends $t$ to Bob using $|\cX\times\cY|\log(n+1)$ bits.
%\item Let $X=T_{t^\cX}$, $Y=T_{t^\cY}$ and $E=T_t \subset X\times Y$ define a regular bipartite graph.  To simulate the action of $N^n$ on $x^n\in X$, conditioned on $(x^n,y^n)\in T_t=E$, we need only to choose a random neighbor of $x^n$ in this graph.    This is achieved with \lemref{flat-crst}.
%\eenum
%Note that Alice uses $\tilde y^n$ only to obtain the correct distribution for the joint type and, other than belonging to the same conditionally typical set, it has no correlation with $y^n$.

It remains only to analyze the cost of this last step.  The communication cost is (taking notation from the statement of \lemref{flat-crst})
\bas \log(m) &= \log\L(\frac{|X|\,|Y|\gamma}{|E|}\R) \\ &=
\log\L(\frac{|T_{t_A}|\,|T_{t_B}|\,(2\ln(2|T_{t_{AB}}|)/\eps^2)}{|T_t|}\R)
\\ & \leq n(H(\bar t_A)+H(\bar t_B)-H(\bar t_{AB})) + |\cX\times\cY|\log(n+1)
\\ & \qquad + \log(2\ln(2)nH(\bar t_{AB})/\eps^2)
\\ & = n I(\cX;\cY)_{\bar t_{AB}}  + O(\log(n)) + \log(1/\eps^2).
\eas
Since $C(N) \geq I(\cX;\cY)_{\bar t}$ for all $t$, this establishes part (b) of \thmref{crst}.
Continuing, we estimate the randomness cost to be
\bas \log(r) &= \log\L(\frac{|E|}{|X|\gamma}\R) \leq
\log\L(\frac{|E|}{|X|}\R)
\\ & \leq n(H(\bar t_{AB}) - H(\bar t_A)) + O(\log(n))
\\ & = nH(\cY|\cX)_{\bar t_{AB}} + O(\log(n)) .\eas

To prove part (a), we need to relate entropic quantities defined for
$\bar t$ to the corresponding quantities for $p$.   This will be done
with typical sets (\eq{p-typical}) and the Fannes-Audenaert inequality
(\eq{fannes}).   If $p$ is a distribution on $X$ then let $q=N_F(p)$
be the joint distribution on $X$ and $Y$ that results from sending $X$
through $N$ and obtaining output $Y$.  Then \eq{p-typical} implies
that following the above protocol results in values of $\bar t$ that
are very likely to be close to $q$.  In particular, $q^{\ot
  n}(T_{q,\delta}^n) \geq 1-(n+1)^de^{-n\delta^2/2}$, where
$d=|\cX\times\cY|$.  Next, the Fannes-Audenaert inequality says that if $\bar t\in
T_{q,\delta}^n$ then $|H(\bar t) - H(q)| \leq
\delta \log(d/\delta)$.   Applying this to each term in $I(\cX;\cY) = H(\cX)
+ H(\cY) - H(\cX\cY)$, we obtain that
$|I(\cX;\cY)_{\bar t} - I(\cX;\cY)_q| \leq 3\delta \log(d/\delta)$ and
$|H(\cY|\cX)_{\bar t} - I(\cY|\cX)_q| \leq 2\delta \log(d/\delta)$.
Taking $\delta$ to be $n^{-1/4}$, we obtain a sequence of protocols
where both error and inefficiency simultaneously vanish as $n\ra
\infty$.

Similarly, for part (c), we need to consider the joint distribution $q$ of $\cX\cW\cY$ that results from drawing  $\cX$ according to $p$, sending it through $N_1$ to obtain $\cW$ and then sending $\cW$ through $N_2$ to obtain $Y$.    The protocol is as follows:
\benum
\item Suppose Alice's input is $x^n$.
\item Alice simulates $N_1^n(x^n)$ to obtain $\tilde w^n$ and then simulates $N_2^n(\tilde w^n)$ to obtain $\tilde y^n$.
\item Alice sets $t_{AWB}=t(x^n\tilde w^n\tilde y^n)$.  She will not make any further use of $\tilde w^n$ or $\tilde y^n$.
\item Alice sends $t$ to Bob using $|\cX\times\cW\times \cY|\log(n+1)$ bits.
\item Define $X=T_{t_A}$, $W=T_{t_\cW}$, $Y=T_{t_B}$, $E_{XY}=T_{t_{AB}}$, $E_{XW}=T_{t_{AW}}$ and
$E_{WY}=T_{t_{WB}}$.  To simulate the action of $N^n$ on $x^n\in X$, conditioned on $(x^n,w^n,y^n)\in T_t$, we need only to choose a random element of $\Gamma_{XY}(x^n)$ in this graph.    This is achieved with \lemref{flat-wyner}.
\eenum
The analysis of this last step is similar to that of the previous protocol.  The communication cost is
$\log(m) = \log\L(|X|\,|W|\gamma/|E_{XW}|\R) = nI(\cX;\cW)_{\bar t} +O(\log n)$ and the randomness cost is
\bmu \log(r) =\log\L(\frac{|E_{XY}|\cdot |W|} {|E_{WY}|\,|X|}\R)
\\
= n(H(\cX\cY)_{\bar t} + H(\cW)_{\bar t} - H(\cW\cY)_{\bar t} - H(\cX)_{\bar t}) + O(\log n)
 \\ \text{using \eq{binom-bounds}}
\\  =
n(H(\cX\cY)_{\bar t} + H(\cX\cW)_{\bar t} - H(\cX\cW\cY)_{\bar t} - H(\cW)_{\bar t}) + O(\log n)
\\ \text{using the Markov condition $I(\cX;\cY|\cW)_{\bar t}=0$}
\\  \shoveright{= nI(\cY;\cW|\cX)_{\bar t} + O(\log n)}
\\  = n(I(\cX\cY;\cW)_{\bar t} - I(\cX;\cW)_{\bar \tau}) + O(\log n)
\emu
This concludes the proofs of the existence of channel simulations claimed in \thmref{crst}.

\subsection{Converses}
In this section we discuss why the communication rates for the above protocols cannot be improved.  The lower bound for simulating feedback channels was proven in \cite{Winter:RST} and for non-feedback channels in \cite{Cuff08}.    We will not repeat the proofs here, but only sketch the intuition behind them.

First, the communication cost must always be at least $C(N)$, or
$I(X;Y)_p$ if the input is restricted to be from the distribution $p$.
Otherwise we could combine the simulation with Shannon's [forward]
noisy channel coding theorem  to turn a small number of noiseless
channel uses into a larger number of uses.  This is impossible even
when shared randomness is allowed.

Next, if $N_F$ (i.e., the channel including noiseless feedback) is to be simulated,
then Bob's output (with entropy $H(Y)$) must be entirely determined by the $C$ bits of classical communication sent and the $R$ bits of shared randomness used.  Therefore we must have $C+R\geq H(Y)$.

The situation is more delicate when the simulation does not need to
provide feedback to Alice.  Suppose we have a protocol that uses $C$
cbits and $R$ rbits.  Then let $W=(W_1,W_2)$ comprise both the message
sent ($W_1$) and the shared random string ($W_2$).  We immediately
obtain $I(XY;W)\leq H(W)\leq C+R$.  Additionally, the shared
randomness $W_2$ and the message $X$ are independent even given the
message $W_1$; in other words $I(X;W_2|W_1)=0$.  Thus $I(X;W) =
I(X;W_1) \leq H(W_1)\leq C$.  Finally we observe that $X-W-Y$
satisfies the Markov chain condition since Bob produces $Y$ only by
observing $W$.  This argument is discussed in more detail in
\cite{Cuff08}, where it is also proven that it suffices to consider
single-letter optimizations.

These converses are also meaningful, and essentially unchanged, when
we consider negative $R$, corresponding to protocols that output
shared randomness.

We observe that some of these converses are obtained from coding
theorems and others are obtained from more traditional entropic
bounds.   In the cases where the converses are obtained from coding
theorems then we in fact generally obtain strong converses, meaning
that fidelity decreases exponentially when we try to use less
communication or randomness than necessary.  This is discussed in
\cite{Winter:RST} and we will discuss a quantum analogue of this point
in \secref{converses}.

\section{Simulation of quantum channels on arbitrary inputs}
\label{sec:qrst}
This section is devoted to proving parts (d) and (e) of \thmref{qrst}.

\subsection{The case of flat spectra}\label{sec:flat-qrst}
By analogy with \secref{flat-crst}, we will first state an
unweighted or ``flat'' version of the quantum reverse Shannon theorem.  We
will then use a quantum version of type theory (based on Schur-Weyl
duality) to extend this to prove the QRST for general inputs.

\begin{definition}\label{def:flat}
An isometry $V^{A\ra BE}$ is called {\em flat} if, when applied to
half of a maximally entangled state $\ket{\Phi}^{RA}$, it produces a
state $\ket{\psi}^{RBE}$ with $\psi^R$, $\psi^B$ and $\psi^E$ each
maximally mixed.
\end{definition}
We note two features of the definition.  First, the requirement that
$\psi^A$ be maximally mixed is satisfied automatically, but we include
it to emphasize that each marginal of $\psi$ should be maximally
mixed.  Second, the definition of a flat isometry does not depend on
the choice of maximally entangled input $\ket\Phi$.

An important special case of flat channels occurs when $A,B,E$ are irreps of some group $G$ and $V$ is a $G$-invariant map.  We will return to this point in \secref{schur-channel}.

\begin{lemma}[\cite{HOW05,ADHW06}]\label{lem:flat-qrst}
Let $A, B, E$ have dimensions $D_A, D_B, D_E$
respectively. Consider furthermore quantum systems $K_A,K_B, M$ with
dimensions $D_K, D_K, D_M$, respectively, such that $D_B=D_KD_M$ and
$D_M \geq \frac{256}{\delta\eps^4} \sqrt{D_A D_B/D_E}$.
If $V^{A\ra BE}$ is a flat isometry, it can be simulated up to error
$\eps$ with respect to the maximally mixed input state,
by consuming $\log(D_K)$ ebits and sending $\log(D_M)$ qubits
from Alice to Bob.
More precisely, there exist isometries $S_H^{K_BM\ra B}, S_V^{AK_A\ra
  ME}$ such that
\begin{multline} V^{A\ra BE}\ket{\Phi_{D_A}}^{RA} \approx_\eps \\
S_H^{K_BM\ra B}  S_V^{AK_A\ra ME} \ket{\Phi_{D_K}}^{K_AK_B}\ket{\Phi_{D_R}}^{RA}.
\label{eq:flat-splitting}\end{multline}
Furthermore, $S_H$ can be taken to be a Haar random unitary of
dimension $D_B$ with $S_V$ chosen deterministically based on $S_H$
and $V$.  \eq{flat-splitting} holds with probability $\geq 1 - \delta$ over the random choice of $S_H$.
\end{lemma}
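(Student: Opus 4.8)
The plan is to prove \eqref{eq:flat-splitting} by combining a random-unitary decoupling estimate with Uhlmann's theorem, following the ``mother''-protocol strategy of \cite{HOW05,ADHW06}. Write $\ket{\psi}^{RBE}:=V^{A\ra BE}\ket{\Phi_{D_A}}^{RA}$ for the target state, and regard $S_H$ as a unitary identified with a map $K_B\ot M\ra B$ (so $S_H^\dagger:B\ra K_B\ot M$). Flatness of $V$ gives $\psi^R=I^R/D_A$, $\psi^B=I^B/D_B$, $\psi^E=I^E/D_E$, and since $\ket{\psi}^{RBE}$ is pure, $\psi^{RB}$ shares the spectrum of $\psi^E$, so $\tr\bigl[(\psi^{RB})^2\bigr]=1/D_E$. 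The idea is that conjugating the $B$ system by a Haar-random $S_H$ and then discarding the $M$ factor of $B=K_BM$ leaves $K_B$ nearly maximally mixed and nearly decoupled from $R$; once this holds, $S_V$ is produced by Uhlmann's theorem, and the protocol in \eqref{eq:flat-splitting} is just this transformation run backwards.

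First I would invoke the decoupling bound: for $S_H$ Haar-distributed on the $D_B$-dimensional unitary group,
\[
 \bbE_{S_H}\bigl\| \tr_M\bigl[S_H^\dagger\psi^{RB}S_H\bigr] - \tfrac{I^{K_B}}{D_K}\ot\tfrac{I^R}{D_A}\bigr\|_1 \;\le\; \sqrt{\tfrac{D_K}{D_M}\,D_A\,\tr\bigl[(\psi^{RB})^2\bigr]} \;=\; \sqrt{\tfrac{D_A D_K}{D_M D_E}} \;\le\; \tfrac{\delta\eps^4}{256},
\]
where the last step uses $D_K=D_B/D_M$ together with the hypothesis $D_M\ge\frac{256}{\delta\eps^4}\sqrt{D_AD_B/D_E}$. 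By Markov's inequality, with probability at least $1-\delta$ over $S_H$ this trace distance is at most $\eps^4/256$; fix such an $S_H$ and set $\sigma^{RK_B}:=\tr_M\bigl[S_H^\dagger\psi^{RB}S_H\bigr]$, with $B$ relabelled as $K_BM$.

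Next I would apply Uhlmann's theorem. The vector $S_H^\dagger\ket{\psi}^{RBE}$ purifies $\sigma^{RK_B}$ with purifying system $ME$, while $\ket{\Phi_{D_K}}^{K_AK_B}\ket{\Phi_{D_A}}^{RA}$ purifies $\tfrac{I^{K_B}}{D_K}\ot\tfrac{I^R}{D_A}$ with purifying system $AK_A$; since the hypothesis also gives $D_MD_E\ge D_AD_K$, there is an isometry $S_V^{AK_A\ra ME}$, determined by $V$ and $S_H$, with
\[
 \bigl\| S_V^{AK_A\ra ME}\ket{\Phi_{D_K}}^{K_AK_B}\ket{\Phi_{D_A}}^{RA} - S_H^\dagger\ket{\psi}^{RBE}\bigr\| \;\le\; \sqrt{\bigl\|\sigma^{RK_B}-\tfrac{I^{K_B}}{D_K}\ot\tfrac{I^R}{D_A}\bigr\|_1} \;\le\; \tfrac{\eps^2}{16},
\]
using $F(\rho,\sigma)\ge 1-\tfrac12\|\rho-\sigma\|_1$ for the first inequality. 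Applying the norm-preserving isometry $S_H^{K_BM\ra B}$ to both vectors turns the right-hand side back into $V^{A\ra BE}\ket{\Phi_{D_A}}^{RA}$ and establishes \eqref{eq:flat-splitting} with error $\eps^2/16\le\eps$. The resource accounting is then immediate: $S_H$ acts only on Bob's system $B$, $S_V$ only on Alice's systems $A,K_A$, the message $M$ carries $\log D_M$ qubits from Alice to Bob, and $K_A,K_B$ consume $\log D_K$ ebits; and since $S_H$ is Haar-random, the simulation works for all but a $\delta$-fraction of choices, as claimed.

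The main obstacle is the decoupling estimate itself: one must control $\bbE_{S_H}\|\cdot\|_1$ by $\sqrt{(D_K/D_M)\,D_A\,\tr[(\psi^{RB})^2]}$ and observe that flatness is precisely what forces $\tr[(\psi^{RB})^2]$ down to $1/D_E$ (a generic $\psi^{RB}$ would yield a much weaker bound, and the special structure here is exactly that of the ``mother'' state). Everything else is routine: Uhlmann's theorem, Markov's inequality, and dimension counting. If one wants a self-contained argument instead of citing \cite{HOW05,ADHW06}, the decoupling bound is obtained by computing the second moment $\bbE_{S_H}\bigl\|\tr_M[S_H^\dagger\psi^{RB}S_H]-\tfrac{I^{K_B}}{D_K}\ot\psi^R\bigr\|_2^2$ with the first- and second-order Weingarten formulas for the unitary group, and then passing to the trace norm via $\|\cdot\|_1\le\sqrt{D_KD_A}\,\|\cdot\|_2$.
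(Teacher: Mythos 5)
Your proof is correct and follows the same route the paper describes (and attributes to \cite{HOW05,ADHW06}): apply a Haar-random unitary on $B$, use a decoupling estimate controlled by $\tr[(\psi^{RB})^2]=1/D_E$ (the key place flatness enters) to show $K_B$ nearly decouples from $R$, invoke Uhlmann's theorem to extract Alice's isometry $S_V$, and read the protocol off as state merging run in reverse. The dimension check $D_MD_E\ge D_AD_K$ and the Markov step to convert the expectation bound into a with-probability-$1-\delta$ statement are both handled correctly, so the reconstruction is a faithful, fully worked-out version of the cited argument.
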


The protocol in \eq{flat-splitting} is depicted in \fig{splitting}.
\begin{figure}[ht]
\centerline{\Qcircuit @=1em {
\lstick{K_B\hspace{1.5em}} & \ar @{-} [1,-1]  & \qw  & \qw & \multigate{2}{S_H}\qw \\
\lstick{\ket{\Phi_{D^K}}} &&&&\pureghost{S_H} & \rstick{B} \qw \\
\lstick{K_A\hspace{1.5em}} & \ar @{-} [-1,-1] & \multigate{1}{S_V}\qw  & \qw_<{M}
& \ghost{S_H}\\
\lstick{A\hspace{2em}} &  \ar @{-} [1,-1]& \ghost{S_V} \qw &\qw & \qw& \rstick{E} \qw \\
\lstick{\ket{\psi}} & \\
\lstick{R\hspace{2em}} & \ar @{-} [-1,-1] & \qw & \qw & \qw &\rstick{R}\qw \\
}}
\caption{The simulation of flat channels described in
  \lemref{flat-qrst}.  The entangled state $\ket{\Phi_{D_K}}$ is
  consumed in order to simulate the action of $V^{A\ra BE}$ on the $A$
  part of $\ket{\psi}^{RA}$.  $S_H$ is chosen from the Haar measure on
  $\cU_{D_B}$, while $S_V$ is chosen to depend on $S_H$ and $V$, as
  described by \cite{ADHW06}. While any $\ket\psi^{RA}$ can be input
  into the channel, the fidelity guarantee of \eq{flat-splitting} only
  holds when $\psi^A$ is maximally mixed.}
\label{fig:splitting}
\end{figure}

In an earlier unpublished version of this work, we proved a version of
\lemref{flat-qrst} using the measurement compression theorem of
\cite{Winter:POVM}. This version used classical instead of quantum
communication (with correspondingly different rates), but by making
the communication coherent in the sense of \cite{Har03,DHW05} it is
possible to recover \lemref{flat-qrst}.

However, a conceptually simpler proof of \lemref{flat-qrst} was later
given by \cite{devetak-triangle,HOW05,ADHW06}.  This proof is based on
reversing ``state merging,'' which we can think of as the task of Bob
sending a subsystem $B$ to Alice in a way that preserves its correlation
with a subsystem $E$ which Alice already has, as well as with a purifying reference
system $R$.  In other words, merging is a state redistribution of the form
\be \Psi^{R:E:B} \ra \Psi^{R:EB}.\ee
The simplest proof of state merging is given in
\cite{ADHW06}, where it is shown that if Bob splits $B$ randomly into
systems $M$ and $K_B$ of the appropriate sizes (i.e. by applying
$S_H^\dag$), and sends $M$ to Alice,
then Alice will be able to locally transform $E,M$ into two subsystems
$A$ and $K_A$ such that $A$ is completely entangled with the reference
system $R$ (and thus can be locally transformed by Alice into $E,B$, the desired
goal of the merging.).  On the other hand $K_A$ is nearly completely entangled with
the remaining $K_B$ system that Bob kept, so that it represents a byproduct
of entanglement between Alice and Bob that has been generated by the protocol. When
executed in reverse, the merging becomes splitting, and the $K_AK_B$ entanglement
becomes a resource that is consumed, along with the quantum transmission of system
$M$ from Alice to Bob, in order to implement the state-splitting redistribution
\be \Psi^{R:A} \ra \Psi^{R:EB}\ra \Psi^{R:E:B}.\ee

\subsection{Tensor power inputs}
We next need to reduce the general channel simulation problem to the
problem of simulating flat channels.  To get the idea of how this
works, consider first the problem of simulating $\cN^{\ot n}$ on a
tensor power input $\rho^{\otimes n}$.  While several solutions to
this problem have been previously described in
\cite{HOW05,devetak-triangle,ADHW06} and this section is not strictly
necessary for the proof of \thmref{qrst}, we will present a protocol
for tensor power inputs here in a way that will help us understand the
general case.

Let $\ket{\sigma}^{ABE} = (I \ot \cN)\ket{\Phi_\rho}^{AA'}$ and
$\ket{\psi}=\ket{\sigma}^{\ot n}$.  Unfortunately, none of $\psi^A$,
$\psi^B$ nor $\psi^E$ are in general maximally mixed.  Even
restricting to typical subspaces still leaves these states with
eigenvalues that vary over a range of $2^{\pm O(\sqrt{n})}$.

On the other hand, these eigenvalues have a large amount of
degeneracy.  Let $\{\ket{a_1},\ldots, \ket{a_{d_A}}\}$ be the
eigenbasis of $\rho=\sigma^A$.  Then the eigenvectors of $\psi^A$ can be taken
to be of the form $\ket{a_{i_1}} \ot \cdots \ot \ket{a_{i_n}}$, for
$i=(i_1,\ldots,i_n) \in [d_A]^n$.  Moreover the corresponding
eigenvalue is determined entirely by the {\em type} $t_A$ of $i$, just
as in the classical case.  There are
$\binom{n+d_A-1}{n}$ such types.  For fixed $d_A$, this number is
polynomial in $n$, and thus the ``which type'' information can be
transmitted using $O(\log n)$ qubits.  Conditioned on this
information, we are left with a flat spectrum over a space whose
dimension depends on the type.

The same decomposition into types can be performed for the $B$ and $E$
systems, and for constant $d_B$ and $d_E$ we will still have at most
$\poly(n)$ types $t_B$ and $t_E$.  Furthermore, we can decompose the
action of $U_\cN^{\ot n}$ into a map from $t_A$ to a superposition of
$t_B$ and $t_E$ followed by a flat map within the type classes, which
we call $V_{t_At_Bt_E}$.  Thus, letting $\cong$ denote a global change
of basis, we have
\be U_\cN^{\ot n} \cong \sum_{t_A,t_B,t_E} \ket{t_B,t_E}\bra{t_A} \ot
V_{t_A,t_B,t_E}.\label{eq:q-type-decomp}\ee

The only remaining question is to determine the communication rate.
Here we can use the classical theory of types from \secref{types} to
argue that almost all of the weight of $\psi$ is concentrated in
strings with $\bar t_A, \bar t_B, \bar t_E$ close to the spectra of
$\sigma^A$, $\sigma^B$ and $\sigma^E$ respectively.  If ``close'' is
defined to be distance $\delta$, then ignoring the atypical types
incurs error at most $\exp(-n\delta')$ and we are left with
subspaces of dimensions $D_A = \exp(n(H(A)_\sigma \pm \delta''))$, $D_B
= \exp(n(H(B)_\sigma \pm \delta''))$ and $D_E = \exp(n(H(E)_\sigma \pm
\delta''))$, where $\delta',\delta''$ are constants depending on
$\delta$.\footnote{These claims are based on standard methods of
  information theory.  By ``distance'' $\delta$ we refer to the trace
  distance $\|\sigma^X - \bar t^X\|_1$.  The error bound on ignoring
  atypical types is obtained from \eq{p-typical} and the bound on
  entropy is from the Fannes-Audenaert inequality (\eq{fannes}).
}
depending on $\delta$), .  Applying \lemref{flat-qrst} we obtain the claimed
communication rates of $\frac{H(A)+H(B)-H(E)}{2}=\frac{1}{2}I(A;B)$
qubits and $\frac{H(B)+H(E)-H(A)}{2}=\frac{1}{2}I(B;E)$ ebits per use
of $\cN$.

Two subtleties arise from combining communication protocols involving
different input and output types.  The first problem is that we have
to be careful about who knows what when: unlike in the classical channel
simulation protocol, Alice would like to communicate to Bob only $t_B$
and not $t_A$ or $t_E$.  Indeed, she would like to forget $t_A$ and
retain only knowledge of $t_E$ for herself.  This is addressed by
using the fact that Bob's decoding unitary $S_H$ in \lemref{flat-qrst}  can be chosen to depend
only on $t_B$, since we can choose a single $S_H$ for each $t_B$, and
w.h.p. \eq{flat-splitting} holds simultaneously for all $t_A, t_E$.
Denote the resulting decoding map $S_{H,t_B}$ and call Alice's
encoding $S_{V_{t_A,t_B,t_E}}$.  Then from
\eqs{flat-splitting}{q-type-decomp}, we have that $U_\cN^{\ot n}$ can
be approximately simulated by starting with an appropriate entangled
state (more on this below) and applying
\begin{multline}
% U_\cN^{\ot n} \approx
\L(\sum_{t_B} \proj{t_B} \ot S_{H,t_B}\R)\times \\
\L(\sum_{t_A,t_B,t_E}\ket{t_B,t_E}\bra{t_A} \ot S_{V_{t_A,t_B,t_E}}\R),
\end{multline}
where the first line is applied by Bob and the second line is applied
by Alice.

The second problem is that when we
apply \lemref{flat-qrst} to the $V_{t_A,t_B,t_E}$, the dimensions
$D_A, D_B, D_E$ (and thus $D_K,D_M$ as well) vary by as much as $\exp(\pm n\delta)$, and yet our
protocol needs to act on a single entangled state and send a single message.
 For the $M$ register we can address this by
simply taking $D_M$ to equal $\max \lceil \frac{256}{\eps^4} |T_{t_A}| |T_{t_B}| /
|T_{t_E}| \rceil$, where the maximum is taken over all typical triples
of $t_A, t_B, t_E$.  Thus, $D_M$ is independent of any of the
registers communicated during the protocol.

However, since $D_RD_M$ must equal $|T_{t_B}|$, we cannot avoid having
$D_R$ vary with $t_B$.  (There is a minor technical point related to
$D_B$ needing to be an integer, but this can be ignored at the
cost of an exponentially small error.)  As a result, we need to run
the protocol of \lemref{flat-qrst} in superposition using different
numbers of ebits in different branches of the superposition.  This
cannot be accomplished simply by discarding the unnecessary ebits
in the branches of the superposition that need less entanglement;
instead we need to use one of the techniques from \secref{spread}.
Fortunately, since the number of ebits varies by only $O(n\delta)$
across different values of $t_B$, we only need to generate
$O(n\delta)$ bits of entanglement spread.  This can be done with
$O(n\delta)$ extra qubits of communication, leading to an
asymptotically vanishing increase in the communication rate.  And
since the amount of entanglement generated depends on only $t_B$, this
does not require leaking any information to Bob that he will not
already have.  Alice first creates her half of the entanglement at the
same time as she is transformating $\ket{t_A}$ into $\ket{t_B,t_E}$.
Then she sends her half of the entanglement to Bob (after it has been
mixed with the input by $S_{V_{t_A,t_B,t_E}}$) along with her copy of
$\ket{t_B}$.  This ensures that Alice keeps no record of the amount of
entanglement she has created, while Bob is able to perform his part of
the entanglement-generation protocol.

Earlier versions of the quantum reverse Shannon theorem did not need to
mention this sublinear amount of entanglement spread because the extra
sublinear communication cost could be handled automatically by the protocols
used.  However, when we consider non-tensor power inputs in
\secref{gen-input} we will need to make a more explicit accounting of
the costs of entanglement spread.  Thus, the reason our ``warm-up'' is
more complicated than the previous proofs of the i.i.d.-source QRST is
that it already contains much of the complexity of the full proof.

%Indeed, we could relax the assumptions of flat spectra in
%\lemref{flat-qrst} to require only that $\rank \psi^A \leq D_A$,
%$\rank \psi^B\leq D_B$ and $\|\psi^E\| \leq 1/D_E$.

%Here the eigenvectors of $\psi^B$ and $\psi^E$ are built out of tensor
%products of the eigenbases of $\sigma^B$ and $\sigma^E$ respectively.
%Denote these eigenbases $\{\ket{b_1},\ldots,\ket{b_{d_B}}\}$ and
%$\{\ket{e_1},\ldots,\ket{e_{d_E}}\}$.

\subsection{A quantum theory of types}
\label{sec:q-types}

There is one further difficulty which arises when considering
non-tensor power inputs.  This problem can already been seen in the
case when the input to the channel is of the form
$\frac{1}{2}(\rho_1^{\ot n} + \rho_2^{\ot n})$.  If $\rho_1$ and
$\rho_2$ do not commute, then we cannot run the protocol of the
previous section without first estimating the eigenbases of $\rho_1$
and $\rho_2$.  Moreover, we need to perform this estimation in a
non-destructive way and then be able to uncompute our estimates of the
eigenbasis, as well as any intermediate calculations used.  Such
techniques have been used to perform quantum data compression of
$\rho^{\ot n}$ when $\rho$ is unknown~\cite{JP03, BHL06}.  However,
even for that much simpler problem they require delicate analysis.  We
believe that it is possible to prove the quantum reverse Shannon
theorem by carefully using state estimation in this manner, but
instead will present
a somewhat simpler proof that makes use of representation theory.

The area of representation theory we will use is known as Schur
duality (or Schur-Weyl duality).  It has also been used for data
compression of unknown tensor power
states~\cite{Hayashi:02b,Hayashi:02c,Hayashi:02e} and entanglement
concentration from tensor powers of unknown pure entangled
states~\cite{Hayashi:01b, Hayashi:02a}.  Some reviews of the role of
Schur duality in quantum information can be found in Chapters 5 and 6
of \cite{Har05} and Chapters 1 and 2 of \cite{matthias}.  A detailed
explanation of the mathematics behind Schur duality can also be found in
\cite{GW98}.  Our treatment will follow \cite{Har05}.  In
\secref{schur-basics}, we will explain how Schur duality can serve as
a quantum analogue of the classical method of types that we described
in \secref{types}.  Then in \secref{schur-channel} we will show this
can be applied to channels, allowing us to decompose $\cN^{\ot n}$
into a superposition of  flat channels. Finally, in \secref{schur-typical} we will use this
to describe quantum analogues of conditional types.  We will
use this to show that the atypical flat sub-channels involve only an
exponentially small amount of amplitude.

In \secref{gen-input}, we will use these tools to prove \thmref{qrst}.

\subsubsection{Schur duality and quantum states}
\label{sec:schur-basics}
This section will review the basics of Schur duality and will explain how
it can serve as a quantum analogue of the classical method of types.
Let $\cS_n$ denote the permutation group on $n$ objects and let
$\cU_d$ denote the $d$-dimensional unitary group.  Both groups have a
natural action on $(\bbC^d)^{\ot n}$.  For $u\in\cU_d$ define $\bQ(u)
= u^{\ot n}$ and for $s\in\cS_n$ define $\bP(s)$ to permute the $n$
systems according to $s$: namely,
\be \bP(s) = \sum_{i_1,\ldots,i_n \in [d]}
\ket{i_1,\ldots,i_n}\bra{i_{s(1)},\ldots,i_{s(n)}}.
\label{eq:perm-rep}\ee
 This convention
is chosen so that $\cP(s)$ is a representation\footnote{The product of
  permutations $s_1,s_2$ is defined by $(s_1 \cdot s_2)(i) =
  s_1(s_2(i))$.  Our definition in \eq{perm-rep} is chosen so that
  $\bP(s_1 \cdot s_2) = \bP(s_1)\bP(s_2)$.}
  These two
representations commute, and can be simultaneously decomposed into
irreducible representations (a.k.a. irreps).  We can also think of
$\bQ(u)\bP(s)$ as a reducible representation of $\cU_d \times \cS_n$.

Define $\cI_{d,n}$ to be the set of partitions of $n$ into $d$ parts:
that is
$\cI_{d,n} = \{\lambda=(\lambda_1,
\lambda_2,\dots,\lambda_d)\in\bbZ^d  :
\lambda_1 \geq \lambda _2 \geq
\cdots \geq \lambda_d \geq 0 \text{ and } \sum_{i=1}^d \lambda_i =
n\}.$
Note that $|\cI_{d,n}| \leq |\cT_{d,n}| \leq (n+1)^d =
\poly(n)$.  It turns out that $\cI_{d,n}$ labels the irreps of both
$\cU_d$ and $\cS_n$ that appear in the decompositions of $\bQ$ and
$\bP$.  Define these representation spaces to be $\cQ_\lambda$ and
$\cP_\lambda$ and define the corresponding representation matrices to
be $\bq_\lambda(u)$ and $\bp_\lambda(s)$.  Sometimes we write
$\cQ_\lambda^d$ or $\bq_\lambda^d$ to emphasize the $d$-dependence; no
such label is needed for $\cP_\lambda$ since $\lambda$ already
determines $n$.

Schur duality states that $(\bbC^d)^{\ot n}$ decomposes under the
simultaneous actions of $\bQ$ and $\bP$ as
\be (\bbC^d)^{\ot n} \cong \bigoplus_{\lambda\in\cI_{d,n}}
\cQ_\lambda^d \ot \cP_\lambda \label{eq:sch-space}\ee
This means that we can decompose $(\bbC^d)^{\ot n}$ into three
registers: an irrep label $\lambda$ which determines the actions of
$\cU_d$ and $\cS_n$, a $\cU_d$-irrep $\cQ_\lambda$ and an
$\cS_n$-irrep $\cP_\lambda$.  Since the dimension of $\cQ_\lambda$ and
$\cP_\lambda$ depends on $\lambda$, the registers are not in a strict
tensor product.  However, by padding the $\cQ_\lambda$ and
$\cP_\lambda$ registers we can treat the $\lambda$, $\cQ_\lambda$ and
$\cP_\lambda$ registers as being in a tensor product.

The isomorphism in \eq{sch-space} implies the existence of a unitary
transform $\Usch$ that maps $(\bbC^d)^{\ot n}$ to
$\bigoplus_{\lambda\in\cI_{d,n}}
\cQ_\lambda^d \ot \cP_\lambda$ in a way that commutes with the action
of $\cU_d$ and $\cS_n$.  Specifically we have that for any $u\in
\cU_d$ and any $s\in \cS_n$,
\be \Usch \bQ(U)\bP(s) \Usch^\dag =
\sum_{\lambda\in\cI_{d,n}}
\proj{\lambda} \ot \bq_\lambda^d(U) \ot \bp_\lambda(s).
\label{eq:sch-action}\ee

While we have described Schur duality in terms of the representation
theory of $\cS_n$ and the Lie group $\cU_d$, there exists a similar
relation between $\cS_n$ and the general linear group $GL_d$.  Indeed,
$\bq_\lambda(U)$ is a polynomial function of the entries of $U$ (of
degree $\sum_i \lambda_i$), and so can be extended to non-unitary and
even non-invertible arguments.  After doing so, one can show an analogue
of \eq{sch-action} for tensor power states (taking $U=\rho$ and $s=\id$)
\be \Usch \rho^{\ot n} \Usch^\dag
= \sum_{\lambda\in\cI_{d,n}}
\proj{\lambda} \ot \bq_\lambda^d(\rho) \ot I_{\cP_\lambda}.
\label{eq:rho-decomp}\ee

So far we have not had to describe in detail the structure of the
irreps of $\cU_d$ and $\cS_n$.  In fact, we will mostly not need to do
this in order to develop quantum analogues of the classical results
from \secref{types}.  Here, the correct analogue of a classical type
is in fact $\lambda$ together with $\cQ_\lambda$.  Classically, we
might imagine dividing a type $(t_1,\ldots,t_d)$ into a sorted list
$t_1^\downarrow \geq \cdots \geq t^\downarrow_d$ (analogous to
$\lambda$) and the $\cS_d$ permutation that maps $t^\downarrow$ into
$t$ (analogous to the $\cQ_\lambda$ register).  Quantumly, we will see
that for states of the form $\rho^{\ot n}$, the $\lambda$ register
carries information about the eigenvalues of $\rho$ and the
$\cQ_\lambda$ register is determined by the eigenbasis of $\rho$.

The main thing we will need to know about $\cQ_\lambda$ and
$\cP_\lambda$ is their dimension.  Roughly speaking, if $d$ is
constant then $|\cI_{d,n}|\leq \binom{n+d-1}{n}\leq \poly(n)$,
$\dim\cQ_\lambda \leq
\poly(n)$ and $\dim\cP_\lambda \approx\exp(n H(\bar\lambda))$.  For
completeness, we also state exact formulas for the dimensions of
$\cQ_\lambda$ and $\cP_\lambda$, although we will not need to use
them.  For $\lambda\in\cI_{d,n}$, define $\tilde{\lambda} := \lambda +
(d-1,d-2,\ldots,1,0)$.
Then the dimensions of $\cQ_\lambda^d$ and $\cP_\lambda$ are given
by~\cite{GW98}
\bea \dim \cQ_\lambda^d &=&\frac{\prod_{1\leq i<j\leq d}
(\tilde{\lambda}_i - \tilde{\lambda}_j)}{\prod_{m=1}^d m!}
\label{eq:cQ-dim}\\
\dim \cP_\lambda&=& \frac{n!}{\tilde{\lambda}_1! \tilde{\lambda}_2!
\cdots \tilde{\lambda}_d!}
\prod_{1\leq i<j\leq d}
(\tilde{\lambda}_i - \tilde{\lambda}_j)
\label{eq:cP-dim}\eea
It is straightforward to bound these by~\cite{Hayashi:02e,CM04}
\ba  & \dim \cQ_\lambda^d \leq (n+d)^{d(d-1)/2}
\label{eq:cQ-bound}\\
\binom{n}{\lambda} (n+d)^{-d(d-1)/2}  \leq &
\dim \cP_\lambda \leq  \binom{n}{\lambda}.
\label{eq:cP-bound1}\ea
Applying \eq{binom-bounds} to \eq{cP-bound1} yields the more useful
\bea \frac{\exp\L(nH(\bl)\R)}{(n+d)^{d(d+1)/2}} \leq \dim\cP_\lambda
\leq \exp\L(nH(\bl)\R).\label{eq:cP-bound2}\eea

To relate this to quantum states, let $\Pi_\lambda$ denote the
projector onto $\cQ_\lambda^d\ot \cP_\lambda \subset (\bbC^d)^{\ot
n}$.  Explicitly $\Pi_\lambda$ is given by
\be \Pi_\lambda = \Usch^\dag\L(\proj{\lambda}\ot I_{\cQ_\lambda^d} \ot
I_{\cP_\lambda}\R) \Usch.
\label{eq:Pi-lambda-def}\ee
From the bounds on $\dim\cQ_\lambda^d$ and
$\dim\cP_\lambda$ in \eqs{cQ-bound}{cP-bound2}, we obtain
\be \frac{\exp\L(nH(\bl)\R)}{(n+d)^{d(d+1)/2}} \leq \tr\Pi_\lambda
\leq \exp\L(nH(\bl)\R)(n+d)^{d(d-1)/2}
\label{eq:Pi-lambda-dim}\ee
As in the classical case, i.i.d.~states have a sharply peaked
distribution of $\lambda$ values.  Let $r=(r_1,\ldots,r_d)$ be the
eigenvalues of a state $\rho$, arranged such that $r_1\geq r_2\geq
\ldots$.  For $\mu\in\bbZ^d$, define $r^\mu = r_1^{\mu_1} \cdots
r_d^{\mu_d}$.  As explained in Section 6.2 of \cite{Har05}, one can
bound $\tr\Pi_\lambda \rho^{\ot n} = \tr \bq_\lambda^d(\rho)\cdot
\dim\cP_\lambda$ by
 \bmu \exp\L(-nD(\bl\|r)\R)(n+d)^{-d(d+1)/2} \\
\leq \tr\Pi_\lambda \rho^{\ot n} \\ \leq
\exp\L(-nD(\bl\|r)\R)(n+d)^{d(d-1)/2}
\label{eq:schur-proj}\emu
Similarly, we have
$\Pi_\lambda\rho^{\ot n} = \rho^{\ot n}\Pi_\lambda
= \Pi_\lambda \rho^{\ot n}\Pi_\lambda$ and
\be  \Pi_\lambda \rho^{\ot n}\Pi_\lambda \leq r^\lambda \Pi_\lambda
= \exp[-n(H(\bl) + D(\bl\|r))] \Pi_\lambda.\ee
 For some values of $\mu$, $r^\mu$ can be much smaller, so we cannot
express any useful lower bound on the eigenvalues of $\Pi_\lambda \rho^{\ot
n}\Pi_\lambda$, like we can with classical types. Of course, tracing
out $\cQ_\lambda^d$ gives us a maximally mixed state in $\cP_\lambda$,
and this is the quantum analogue of the fact that $p^{\ot n}(\cdot | t)$
is uniformly distributed over $T_t$.

We can also define the typical projector
\be \Pi_{r,\delta}^n =
\sum_{\lambda : \|\bl-r\|_1\leq \delta} \Pi_\lambda
%\\ =\Usch^\dag \L[\sum_{\lambda : \|\bl-r\|_1\leq \delta}
%\proj{\lambda} \ot I_{\cQ_\lambda^d} \ot I_{\cP_\lambda}\R] \Usch.
\ee
%where $\cB_{\delta}(r) := \{\bl : \|\bl-r\|_1\leq \delta\}$.
Using Pinsker's inequality, we find that
\be \tr \Pi_{r,\delta}^n \rho^{\ot n} \geq
1-\exp\L(-\frac{n\delta^2}{2}\R)(n+d)^{d(d+1)/2},
\label{eq:typ-proj}\ee
similar to the classical case.
The typical subspace is defined to be the support of the typical
projector.  Its dimension can be bounded (using
\eqs{fannes}{schur-proj}) by
\bmu \tr \Pi_{r,\delta}^n
\leq |\cI_{d,n}| \max_{\lambda: \|\bl-r\|_1\leq \delta}\tr\Pi_\lambda
\\ \leq (n+d)^{d(d+1)/2}
\exp(nH(r) + \eta(\delta) + n\delta\log d).\emu

\subsubsection{Decomposition of memoryless quantum channels}
\label{sec:schur-channel}

The point of introducing the Schur formalism is to decompose $\cN^{\ot
  n}$ (or more accurately, its isometric extension $U_\cN^{\ot n}$)
into a superposition of flat sub-channels.  This is accomplished by
splitting $A^n, B^n$ and $E^n$ each into $\lambda, \cQ_\lambda$ and
$\cP_\lambda$ subsystems labelled $\lambda_A, \cQ_{\lambda_A},
\cP_{\lambda_A}, \lambda_B$, etc.  Then the map from
$\cP_{\lambda_A} \ra \cP_{\lambda_B} \ot \cP_{\lambda_E}$ commutes
with the action of $\cS_n$ and as a result has the desired property of
being flat.

To prove this more rigorously, a general isometry from $A^n\ra B^nE^n$
can be written as a sum of terms of the form $\ket{\lambda_B,
  \lambda_E}\bra{\lambda_A} \ot \ket{q_B,q_E}\bra{q_A} \ot
P_{\lambda_A,\lambda_B,\lambda_E},$ where $\ket{q_A},\ket{q_B},
\ket{q_E}$ are basis states for the respective $\cQ_\lambda$ registers
and $P_{\lambda_A,\lambda_B,\lambda_E}$ is a map from $\cP_{\lambda_A}
\ra \cP_{\lambda_B} \ot \cP_{\lambda_E}$.

Since $U_\cN^{\ot n}$ commutes with the action of $\cS_n$, it follows
that each $P_{\lambda_A,\lambda_B,\lambda_E}$ must also commute with
the action of $\cS_n$.  Specifically, for any
$\lambda_A,\lambda_B,\lambda_E\in\cI_{d,n}$ (with
$d=\max(d_A,d_B,d_E)$) and
any $s\in\cS_n$, we have
\begin{multline*}
(\bp_{\lambda_B}(s) \ot \bp_{\lambda_E}(s))\,
P_{\lambda_A,\lambda_B,\lambda_E} \, \bp_{\lambda_A}(s)=
P_{\lambda_A,\lambda_B,\lambda_E}\\
\bp_{\lambda_A}(s)^\dag P_{\lambda_A,\lambda_B,\lambda_E}^\dag
P_{\lambda_A,\lambda_B,\lambda_E}  \bp_{\lambda_A}(s)=
P_{\lambda_A,\lambda_B,\lambda_E}^\dag
P_{\lambda_A,\lambda_B,\lambda_E}
\end{multline*}
By
Schur's Lemma
$P_{\lambda_A,\lambda_B,\lambda_E}^\dag
P_{\lambda_A,\lambda_B,\lambda_E}$ is proportional to the identity on
$\cP_{\lambda_A}$.  Therefore $P_{\lambda_A,\lambda_B,\lambda_E}$ is
proportional to an isometry.  Furthermore,
$P_{\lambda_A,\lambda_B,\lambda_E}$ maps the maximally mixed state
on $\cP_{\lambda_A}$ to a state proportional to
$P_{\lambda_A,\lambda_B,\lambda_E}
P_{\lambda_A,\lambda_B,\lambda_E}^\dag$.  This state commutes with
$\bp_{\lambda_B}(s) \ot \bp_{\lambda_E}(s)$ for all $s\in\cS_n$, and
so, if we again use Schur's Lemma, we find that the reduced states on
$\cP_{\lambda_B}$  and $\cP_{\lambda_E}$  are both maximally mixed.
Therefore $P_{\lambda_A,\lambda_B,\lambda_E}$ is proportional to a
flat isometry.

This is an example of a broader phenomenon.   For vector spaces $V_1,V_2$,
define $\Hom(V_1,V_2)$ to be
the space of linear maps from $V_1$ to $V_2$. Note that $\Hom(V_1,V_2) \cong
V_1^* \ot V_2$, and if $(\br_1, V_1), (\br_2,V_2)$ are representations
of a group $G$, then there is a representation $\br$ of $G$ on
$\Hom(V_1,V_2)$ given by $\br(g)T = \br_2(g) T \br_1(g^{-1})$.  For a
representation $(\br, V)$ the $G$-invariant subspace $V^G$ is defined
by
$$V^G := \{\ket\psi \in V : \br(g)\ket\psi = \ket\psi \; \forall g\in
G\}.$$
The space $\Hom(V_1,V_2)^G$ is precisely the set of linear operators
from $V_1$ to $V_2$ that commute with the action of $G$.  Using this
notation, Schur's Lemma is equivalent to the statement that if
$V_1,V_2$ are irreducible then $\Hom(V_1,V_2)^G$ is equal to $\{0\}$
if $V_1,V_2$ are inequivalent and is one-dimensional if $V_1,V_2$ are equivalent.

Using this language, we can observe that
$P_{\lambda_A,\lambda_B,\lambda_E}$ belongs to $\Hom(\cP_{\lambda_A},
\cP_{\lambda_B} \ot \cP_{\lambda_E})^{\cS_n}$, i.e. the set of
maps from $\cP_{\lambda_A}$ to $\cP_{\lambda_B} \ot \cP_{\lambda_E}$
that commute with $\cS_n$.  By the arguments in the paragraph before last, any
isometry in $\Hom(\cP_{\lambda_A},
\cP_{\lambda_B} \ot \cP_{\lambda_E})^{\cS_n}$ must also be a flat isometry.
There is a natural isomorphism from
$(\cP_{\lambda_A}^* \ot \cP_{\lambda_B} \ot \cP_{\lambda_E})^{\cS_n}$
into $\Hom(\cP_{\lambda_A},\cP_{\lambda_B} \ot
\cP_{\lambda_E})^{\cS_n}$.  We denote this isomorphism by $S$ (making
the $\lambda_A,\lambda_B,\lambda_E$-dependence implicit) and
normalize $S$ so that if $\ket\mu \in (\cP_{\lambda_A}^* \ot
\cP_{\lambda_B} \ot \cP_{\lambda_E})^{\cS_n}$ is a unit vector then
$S \ket\mu$ is a (flat) isometry.   Below we will offer an operational
interpretation of the $\ket\mu$ register.

To deal with the large numbers of registers, we now introduce some more
concise notation.
\begin{definition}\label{def:concise}
  Let $P_{\lambda_A}^{\lambda_B, \lambda_E} $ be an orthonormal basis
  for $(\cP_{\lambda_A}^*\ot \cP_{\lambda_B} \ot
  \cP_{\lambda_E})^{\cS_n}$.  We also let $T_A$ denote the set of
  pairs $(\lambda_A, q_A)$, where $\ket{q_A}$ runs over some fixed
  orthonormal basis of $\cQ_{\lambda_A}$, and similarly we define
  $T_B$ and $T_E$.
\end{definition}
Now we can represent $U_\cN^{\ot n}$ as
\ba U_\cN^{\ot n} &= \sum_{\substack{ \tau_A\in T_A \\
     \tau_B\in T_B \\ \tau_E\in T_E}}\sum_{\mu\in P_{\lambda_A}^{\lambda_B,
    \lambda_E}}
%P[\lambda_A;\lambda_B, \lambda_E]}
 \!\!\!\! [V_\cN^n]_{\tau_B,  \tau_E, \mu}^{\tau_A}
\ket{ \tau_B,\tau_E}\bra{\tau_A} \ot S\ket\mu
\label{eq:Vn-decomp}
\ea
This is depicted as a quantum circuit in \fig{normal-form}.

(We will not need to know anything more about the
representation-theoretic structure of  $P_{\lambda_A,\lambda_B,
  \lambda_E}$, but the interested reader can find a more detailed
description of this decomposition of $U_\cN^{\ot n}$ in Section 6.4 of
\cite{Har05}, where $S\ket\mu$ is related to the Clebsch-Gordan
transform over $\cS_n$.)

\begin{figure}[ht]
\centerline{\Qcircuit @=1em {
\lstick{\ket{ \tau_A}} & \multigate{2}{[V_\cN^n]} & \qw & \control\qwx[2]\qw
 & \rstick{\ket{ \tau_B}} \qw\\
 & \pureghost{[V_\cN^n]} & \qw & \control\qw &  \rstick{\ket{ \tau_E}} \qw\\
 & \pureghost{[V_\cN^n]} & \qw_<{\ket{\mu}} & \multigate{1}{S} &
 \rstick{\ket{p_B}} \qw\\
\lstick{\ket{p_A}} & \qw & \qw  & \ghost{S}
& \rstick{\ket{p_E}} \qw\\
}}
\caption{The quantum channel $U_\cN^{\ot n}$ is decomposed in the
Schur basis as in \eq{Vn-decomp}. Alice inputs an $n$ qudit state of
the form $\ket{ \tau_A}\ket{p_A}$ and the channel outputs
superpositions of $\ket{ \tau_B}\ket{p_B}$ for Bob and
$\ket{ \tau_E}\ket{p_E}$ for Eve.  The intermediate state
$\ket{\mu}$ belongs to
$(\cP_{\lambda_A}^* \ot \cP_{\lambda_B}\ot\cP_{\lambda_E})^{\cS_n}$.
The figure suppresses the implicit $U_\cN,n$-dependence of $S$, and
expresses the $\lambda_A, \lambda_B, \lambda_E$-dependence of $S$ by
the
control wires from the $\ket{\tau_B}$ and $\ket{\tau_E}$ registers.}
\label{fig:normal-form}
\end{figure}

We now have a situation largely parallel to the classical theory of
joint types with $\tau_A, \tau_B,\tau_E$ representing the quantum
analogues of types for systems $A$, $B$ and $E$.  Since $\tau_B, \tau_E,
\mu$ together describe the joint type of systems $BE$, we can think of
$\mu$ as representing the purely joint part of the type that is not
contained in either of the marginal types.  Further justifying the
analogy with classical types is the fact that all but $\poly(n)$
dimensions are described by the flat
isometries $P_{\lambda_A,\lambda_B,\lambda_E}$.  Next we need to
describe an analogue of jointly typical projectors, so that we can restrict our
attention to triples of $(\lambda_A, \lambda_B, \lambda_E)$ that
contribute non-negligible amounts of amplitude to $U_\cN^{\ot n}$.  In
the next section, we will argue that $[V_\cN^n]^{\tau_A}_{\tau_B, \tau_E,\mu}$ is
exponentially small unless $(\bl_A,\bl_B,\bl_E)$ correspond to the
possible spectra of marginals of some state $\psi^{RBE}$ that is
obtained by applying $U_\cN$ to a pure state on $RA$.

\subsubsection{Jointly typical projectors in the Schur basis}
\label{sec:schur-typical}

 In order for \eq{Vn-decomp} to be useful, we need to
control the possible triples $(\tau_A,\tau_B,\tau_E)$ that can have
non-negligible weight in the sum.  In fact, it will suffice to bound
which triples $(\lambda_A, \lambda_B, \lambda_E)$ appear, since these
determine the dimensions of the $\cP_{\lambda}$ registers and in turn
determine the dominant part of the communication cost.  For large
values of $n$, almost all of the weight will be contained in a small
set of {\em typical} triples of $(\lambda_A,\lambda_B,\lambda_E)$.
These triples are the quantum analogue of joint types from classical
information theory.

Let $\rho^A$ be an arbitrary channel input, and $\ket{\psi}^{ABE}=(I^A
\ot U_\cN^{A'\ra BE})\ket{\Phi_\rho}^{AA'}$ the purified channel
output.  Now define $R(\cN)$ to be set of $\psi^{ABE}$ that can be
generated in this manner.  Further define $\cT_\cN^*$ to be
$\{(r_A,r_B,r_E) : \exists \psi^{ABE}\in R(\cN) \text{~s.t.~}\!
r_A\!=\spec(\psi^A), r_B\!=\spec(\psi^B), r_E\!=\spec(\psi^E)\}$.
This set is simply the set of triples of spectra that can arise from
one use of the channel.  We will argue that it corresponds as well to
the set of $(\bl_A,\bl_B,\bl_E)$ onto which a channel's input and
output can be projected with little disturbance.
Let $T_{\cN,\delta}^n$ denote the set
\bmu \{ (\lambda_A, \lambda_B, \lambda_E) :
\exists (r_A, r_B, r_E) \in \cT_\cN^*, \\
\|\bl_A - r_A\|_1 + \|\bl_B - r_B\|_1 + \|\bl_E - r_E\|_1 \leq
\frac{\delta}{\log(d)}\}\emu
One difficulty in defining joint types is that applying the projector
$\Pi_{\lambda_A}$ to the input may not commute with applying
$\Pi_{\lambda_B}\ot \Pi_{\lambda_E}$ to the output.  Nevertheless,
the following lemma (first proven in Section 6.4.3 of \cite{Har05})
establishes a version of joint typicality that we can use.
\begin{lemma}[\cite{Har05}]\label{lem:joint-schur}
Let $d = \max(d_A, d_B, d_E)$.
For any state $\ket{\varphi}^{RA}$ with $\ket{\Psi} = (I \ot
U_\cN)^{\ot n}\ket\varphi^{\ot n}$,
\bmu \L\|\, \ket\Psi -\!\!\!
\sum_{(\lambda_A,\lambda_B,\lambda_E)\in T_{\cN,\delta}^n}
I\ot ((\Pi_{\lambda_B} \ot \Pi_{\lambda_E})U_\cN^{\ot n}
\Pi_{\lambda_A})\ket\varphi^{\ot n}\R\|
\\\leq n^{O(d^2)}\exp\L(-n\frac{\delta^2}{8\log^2(d)}\R).
\emu
\end{lemma}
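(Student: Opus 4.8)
The plan is to build the approximating vector by three successive typical-subspace projections---one on the channel input $A^n$ and one on each of the two outputs $B^n$ and $E^n$---and then to recognize the result as the partial sum over $T^n_{\cN,\delta}$. Write $\rho:=\varphi^A$ and set $r_A:=\spec(\rho)$, $r_B:=\spec(\cN(\rho))$, $r_E:=\spec(\hat\cN(\rho))$; the single-letter purified output $\psi^{ABE}=(I\ot U_\cN)\ket{\Phi_\rho}$ has exactly these three marginal spectra, so $(r_A,r_B,r_E)\in\cT_\cN^*$. Since every purification of $\rho$ gives the same $BE$ marginal, the input $\ket\varphi^{\ot n}$ has $A^n$-marginal $\rho^{\ot n}$, $\Psi^{B^n}=\cN(\rho)^{\ot n}$ and $\Psi^{E^n}=\hat\cN(\rho)^{\ot n}$---all genuinely i.i.d.---so the Schur typical-projector estimate \eq{typ-proj} applies directly to each. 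Now fix cutoffs $\eps_A,\eps_B,\eps_E>0$, all of order $\delta/\log d$, with $\eps_A+\eps_B+\eps_E\le\delta/\log d$, and put $\Pi^A:=\sum_{\lambda_A:\,\|\bl_A-r_A\|_1\le\eps_A}\Pi_{\lambda_A}$ and likewise $\Pi^B,\Pi^E$. The cutoff constraint is exactly what forces the product of the three index sets to lie inside $T^n_{\cN,\delta}$: when all three index conditions hold simultaneously, the triangle inequality gives $\|\bl_A-r_A\|_1+\|\bl_B-r_B\|_1+\|\bl_E-r_E\|_1\le\delta/\log d$, so $(\lambda_A,\lambda_B,\lambda_E)\in T^n_{\cN,\delta}$ with reference point $(r_A,r_B,r_E)$.

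Next I would telescope. Replacing $\ket\Psi=(I\ot U_\cN^{\ot n})\ket\varphi^{\ot n}$ by $(I\ot U_\cN^{\ot n})\Pi^A\ket\varphi^{\ot n}$ costs only $\|(I-\Pi^A)\ket\varphi^{\ot n}\|$, because $U_\cN^{\ot n}$ is an isometry; this is bounded via \eq{typ-proj} for $\rho^{\ot n}$. Then I would insert the output projectors $\Pi^B\ot I$ and $I\ot\Pi^E$. These commute with one another, acting on disjoint factors, but \emph{not} with $\Pi^A$ once $U_\cN^{\ot n}$ has been applied---this is the difficulty flagged just before the lemma. The fix is to measure each output-projection error against the clean state $\ket\Psi$ rather than against the partially-projected vector: since $(I-\Pi^B\ot I)$ has norm at most $1$ and $(I\ot U_\cN^{\ot n})$ is norm-preserving, $\|(I-\Pi^B\ot I)(I\ot U_\cN^{\ot n})\Pi^A\ket\varphi^{\ot n}\|\le\|(I-\Pi^B\ot I)\ket\Psi\|+\|(I-\Pi^A)\ket\varphi^{\ot n}\|$, where the first term is handled by \eq{typ-proj} for $\cN(\rho)^{\ot n}$ and the second has already been paid for. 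The same step for $\Pi^E$, followed by the triangle inequality, bounds $\|\ket\Psi-(\Pi^B\ot\Pi^E)(I\ot U_\cN^{\ot n})\Pi^A\ket\varphi^{\ot n}\|$ by a sum of $O(1)$ terms, each of the form $(n+d)^{d(d+1)/4}\exp(-\Omega(n\eps_\bullet^2))$. Finally, expanding $\Pi^A,\Pi^B,\Pi^E$ into their constituent $\Pi_\lambda$, this projected vector is precisely the partial sum in the statement; passing from the product index set to all of $T^n_{\cN,\delta}$ only subtracts further mass from the residual, so the bound survives.

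For the constants: each use of \eq{typ-proj} contributes a prefactor $(n+d)^{d(d+1)/2}$---tracing back to the dimension bounds \eq{schur-proj} and a union bound over the $|\cI_{d,n}|\le(n+1)^d$ values of $\lambda$---and there are only $O(1)$ such uses, which is where the $n^{O(d^2)}$ comes from. The exponent is produced by Pinsker's inequality \eq{pinsker}, which converts the $\ell_1$-cutoffs $\eps_\bullet$ into the relative-entropy decay in \eq{typ-proj}; with the cutoffs set at scale $\delta/\log d$, elementary bookkeeping gives the claimed $\exp(-n\delta^2/(8\log^2 d))$. The factor $\log d$ in the definition of $T^n_{\cN,\delta}$ plays no further role inside this proof---it only fixes the admissible cutoff size; the Fannes--Audenaert inequality \eq{fannes} enters only \emph{later}, when \lemref{joint-schur} is combined with \lemref{flat-qrst} and closeness of spectra must be turned into closeness of the dimensions $\exp(nH(\bl))$ that set the communication rate.

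I expect the main obstacle to be precisely this non-commutativity of $\Pi^A$ (on $A^n$) with $\Pi^B\ot\Pi^E$ (on $B^nE^n$): $U_\cN^{\ot n}$ does not diagonalize both in a common basis, so one cannot just multiply the three projectors and read off a spectral bound. The telescoping is arranged so that $U_\cN^{\ot n}$ is only ever commuted \emph{past the input projector}---trivial, since $\Pi^A$ acts before $U_\cN^{\ot n}$---while every output-projector error is compared to $\ket\Psi$ at the price of an additive, already-controlled penalty. The only other thing needing care is the combinatorial check that the three marginally-typical index sets multiply into a subset of $T^n_{\cN,\delta}$; this is exactly why $T^n_{\cN,\delta}$ is defined through the \emph{sum} of the three marginal trace distances to a single point of $\cT_\cN^*$, and why we require $\eps_A+\eps_B+\eps_E\le\delta/\log d$.
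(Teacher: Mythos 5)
Your argument is genuinely different from the paper's, and there are two things worth pointing out.

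First, the approaches diverge. You telescope three typical-subspace projectors $\Pi^A,\Pi^B,\Pi^E$, each controlled by \eq{typ-proj} applied to an i.i.d.~marginal, and then observe that the resulting product index set lies inside $T^n_{\cN,\delta}$. The paper's appendix proof does something else entirely: it bounds each atypical triple directly by symmetrizing the input state ($\varphi^A\mapsto\tilde\varphi^A$), using Schur's lemma to get the operator inequality $\Pi_{\lambda_A}\tilde\varphi^A\Pi_{\lambda_A}\le\Pi_{\lambda_A}/\dim\cP_{\lambda_A}$, and then comparing $\Pi_{\lambda_A}/\dim\cP_{\lambda_A}$ to a Haar-twirled tensor power $\bbE_U[(U\rho U^\dag)^{\ot n}]$ whose spectrum is controlled by $\bl_A$. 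That extra machinery is not decoration: the appendix statement (``restated for convenience'') is actually for an \emph{arbitrary} $\ket\varphi^{R^nA^n}$, not a tensor power, and this generality is what Section V needs, since there Alice's input is only assumed to be $\cS_n$-invariant. Your first sentence---``$\Psi^{B^n}=\cN(\rho)^{\ot n}$ and $\Psi^{E^n}=\hat\cN(\rho)^{\ot n}$, all genuinely i.i.d., so \eq{typ-proj} applies directly''---is exactly the step that breaks for a non-tensor-power $\ket\varphi^{R^nA^n}$. So your proof establishes the main-text version of the lemma verbatim, but not the version the paper actually proves and uses; the twirling argument is what buys the extra generality, and it is not a simplification you can skip if you need the $\cS_n$-invariant case.

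Second, even within the tensor-power setting, the final sentence (``passing from the product index set to all of $T^n_{\cN,\delta}$ only subtracts further mass from the residual'') is doing more work than it admits. It is true, but only because for a tensor-power input the vectors
$\ket{v_\tau}=(\Pi_{\lambda_B}\ot\Pi_{\lambda_E})U_\cN^{\ot n}\Pi_{\lambda_A}\ket\varphi^{\ot n}$
are mutually orthogonal across \emph{all} of $\tau=(\lambda_A,\lambda_B,\lambda_E)$, not merely across $(\lambda_B,\lambda_E)$. Orthogonality in $\lambda_A$ does not come from the $B^nE^n$ projectors---those are applied after $U_\cN^{\ot n}$ has scrambled $A^n$---but from the untouched reference register: since $\ket\varphi^{\ot n}$ is a tensor power, $\Pi_{\lambda_A}^{A^n}\ket\varphi^{\ot n}=\Pi_{\lambda_A}^{R^n}\ket\varphi^{\ot n}$, so $\ket{v_\tau}=(\Pi_{\lambda_A}^{R^n}\ot\Pi_{\lambda_B}^{B^n}\ot\Pi_{\lambda_E}^{E^n})\ket\Psi$ and the target sum over $T^n_{\cN,\delta}$ is literally $\Pi_T\ket\Psi$ for a projector $\Pi_T\ge\Pi_P$, from which monotonicity is immediate. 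You should make this explicit; without it the reader has no reason to believe that enlarging the index set cannot increase the error. (This identification also makes your telescoping cleaner, since it shows you were always bounding $\|\Pi_P^\perp\ket\Psi\|$.) The paper's proof sidesteps all of this by never forming a product index set in the first place: it sums $\|\ket{v_\tau}\|^2$ directly over the complement of $T^n_{\cN,\delta}$ via the triangle inequality, so no orthogonality is invoked.
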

For completeness, we include a proof in the appendix.

\subsection{Reduction to the flat spectrum case}
\label{sec:gen-input}

In this section we prove the coding theorem for the QRST.  The outline
of the proof is as follows:
\benum
\item We show that general inputs can be replaced by $\cS_n$-invariant
  inputs by using a sublinear amount of shared randomness (which can
  be obtained from any of the other resources used in the protocol).
\item We show that $\cS_n$-covariant channels (such as $\cN^{\ot n}$)
  decompose into a superposition of flat sub-channels.  This is based
  on \secref{schur-channel}.  The simulation of these flat
  sub-channels on maximally mixed inputs is described in \secref{flat-qrst}.
\item We show that atypical sub-channels can be ignored with
  negligible error (using \secref{schur-typical}).
\item We paste together simulations of different flat channels using
  entanglement spread (introduced in \secref{spread}).
\eenum

We now explain these components in more detail.  First, we show how it
is possible to assume without loss of generality that our inputs are
$\cS_n$-symmetric.  If we did not mind using a large amount of shared randomness, then using $\log(n!)$ rbits would allow Alice to apply a random permutation $\pi\in\cS_n$ to her inputs, and then for Alice to apply $\pi^{-1}$ to the Eve output and for Bob to apply $\pi^{-1}$ to his output.  In some scenarios, these shared rbits might be a free resource (e.g. when entanglement is unlimited), and their cost could be further reduced by observing that they are incoherently decoupled from the protocol (using the terminology of \cite{DHW05}), and thus can be safely reused.

However, in fact, it is possible for Alice and Bob to safely sample $\pi$ from a much smaller distribution.  The idea is that the protocol has $\eps$ error on an $\cS_n$-invariant input, which means that if the input is randomly permuted, then the average error will be $\eps$.  On the other hand, the diamond-norm error is never greater than 2.  Standard concentration-of-measure arguments can then be used to show that $O(\log (n/\eps))$ rbits suffice to reduce the error to $O(\eps)$.  This is detailed in \lemref{worst-to-avg-case}.

For the rest of this section, we simply assume that Alice is given
half of an $\cS_n$-invariant input $\ket{\varphi}^{R^nA^n}$.  Based on
\secref{schur-channel}, we can decompose the action of $U_\cN^{\ot n}$
into a map from $\tau_A$ to $\tau_B, \tau_E, \mu$ followed by a map from
$p_A,\mu$ to $p_B, p_E$.  The $\tau_B$ register has only $\poly(n)$
dimension, and can be transmitted uncompressed to Bob using $O(\log
n)$ qubits.  On the other hand, the map $P_\mu$ is flat, and therefore
can be compressed using \lemref{flat-qrst}.

To understand the costs of compressing $P_\mu$, we need to estimate
the dimensions of the $\cP_{\lambda_A},\cP_{\lambda_B},
\cP_{\lambda_E}$ registers.  In \secref{schur-basics}, we showed that
$\dim\cP_\lambda \approx \exp(nH(\bl))$ up to $\poly(n)$ factors.  So
the cost of simulating a flat map from $\cP_{\lambda_A}$ to
$\cP_{\lambda_B} \ot \cP_{\lambda_E}$ is $\frac{1}{2}n[H(\bl_A) + H(\bl_B)
- H(\bl_E)] + O(\log n)$ qubits and $\frac{1}{2}n[H(\bl_B) + H(\bl_E)
- H(\bl_A)]
+ O(\log n)$ ebits.

Next, we can relate these costs to entropic quantities.   Using
\lemref{joint-schur} from \secref{schur-typical}, it follows that we
need only consider the triples $(\bl_A, \bl_B, \bl_E)$ within
distance $\delta/\log(d)$ of a spectral triple $(r_A, r_B, r_E)$
corresponding to a possible channel output.  Therefore, the problem
of simulating $\cN_F^{\ot n}$ can be reduced to producing a
superposition of \bmu \L(\frac{1}{2}nI(R;B)_\rho + O(n\delta  + \log
n)\R)[q\ra q]
\\ +
\L(\frac{1}{2}nI(E;B)_\rho + O(n\delta+\log n)\R) [qq]
\label{eq:single-rho-cost}\emu
for all possible single-letter $\rho$ (i.e. $\rho$ that are inputs to
a single channel use).  If we take $\delta\ra
0$ as $n\ra \infty$ then this corresponds to an asymptotic rate of
\be \frac{1}{2}I(R;B)_\rho[q\ra q]  +
\frac{1}{2}I(E;B)_\rho [qq]
\label{eq:single-rho-asym}\ee
per channel use.
The resulting protocol is depicted in \fig{IIDQRST}.

\begin{figure*}[htbp]
\includegraphics[width=0.9\textwidth]{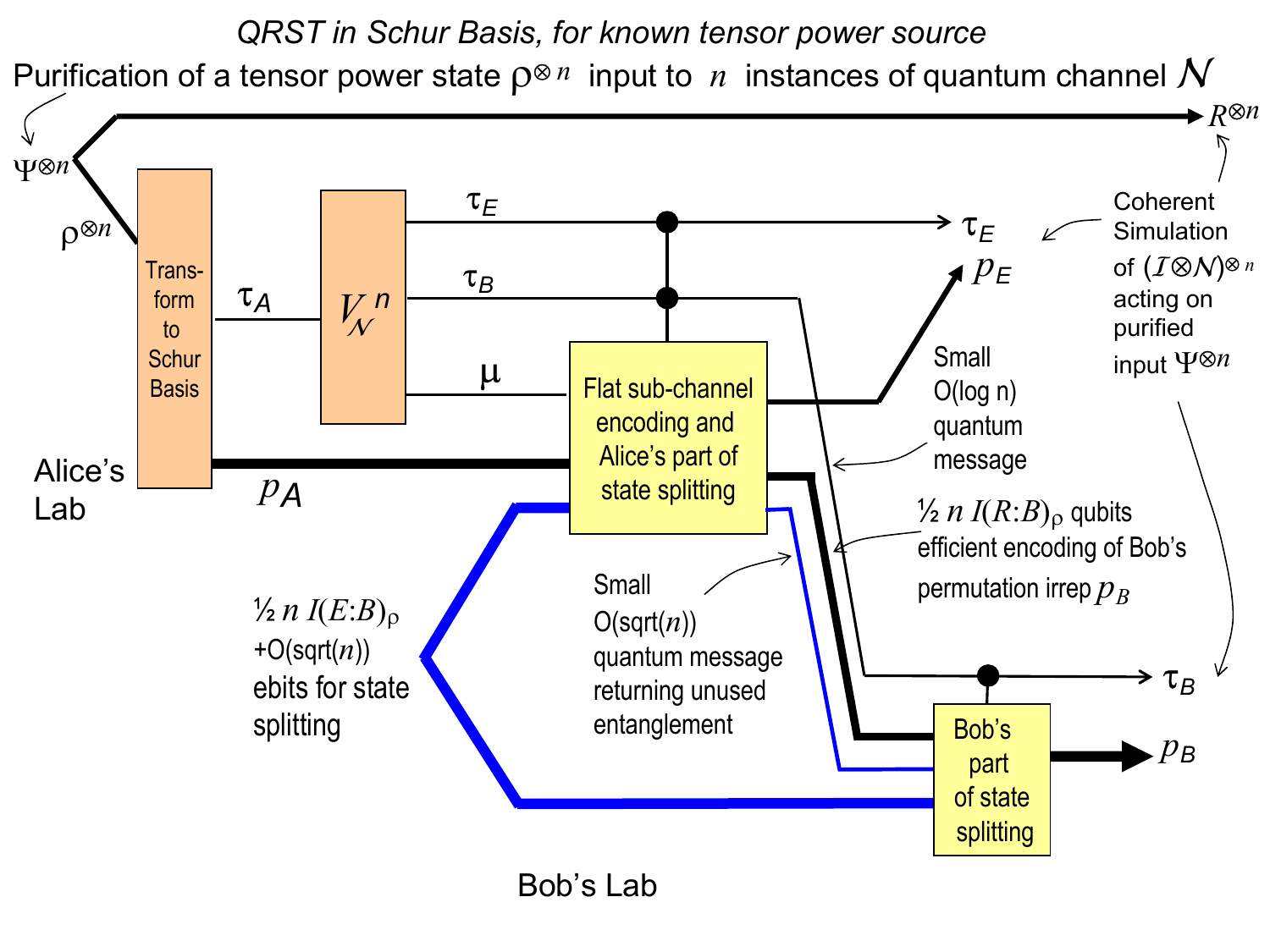}
\caption{Quantum protocol for quantum reverse Shannon theorem on a known tensor power source. Alice transforms the tensor
power input into the Schur representation, comprising a small $\tau$ register containing the quantum type
and a large $p$ register containing the permutation irrep. These registers, together with a slight
($O(\sqrt n)$) excess of halves of ebits shared with Bob, are coherently transformed into
about $\half n I(R;B)$ qubits worth of flat sub-channel codes representing Bob's $p$ register,
which Bob decodes with the help of the other halves of the shared ebits and the small $\tau_B$
register sent from Alice.  Alice also returns the ($O(\sqrt n)$) unused halves of ebits,
allowing them to be coherently destroyed. The remaining registers $\tau_E$ and $p_E$,
representing Eve's share of the output, remain with Alice, as required for a quantum feedback
simulation of the channel $\cN^{\ot n}$.  By discarding them into the environment, one obtains a (not necessarily efficient)
non-feedback simulation.}
\label{fig:IIDQRST}
\end{figure*}

Finally, if our input is not a known tensor power source, then
producing \eq{single-rho-cost} in superposition may require
entanglement spread.  Suppose that $\alpha \geq \beta$ and
$$\beta\; \geqclo\;  \frac{1}{2}I(R;B)_\rho[q \ra q] + \frac{1}{2}I(E;B)_\rho
[qq]$$ for all $\rho$.   Then we can prepare $\beta$ from $\alpha$
and then use $\beta$ to produce the resources needed to simulate
$\<\cN_F:\rho\>$ in superposition across all $\rho$ (or equivalently
across all $\tau_A$ in the input). This can be done using extra
forward communication (in which case the protocol still
qualitatively resembles \fig{IIDQRST}, but the $O(\sqrt{n})$ message
with extra entanglement becomes $\Omega(n)$ qubits), using an
embezzling state (as depicted in \fig{EmbezQRST}) or using backward
communication (as depicted in \fig{BackCommQRST}).  The protocol with
backwards communication appears to require a temporary shuttling of the small
$\tau_B$ register from Alice to Bob and back before finally sending
it to Bob; otherwise backward communication is used to coherently
reduce entanglement the same way that forward communication is.

\begin{figure*}[htbp]
\includegraphics[width=0.9\textwidth]{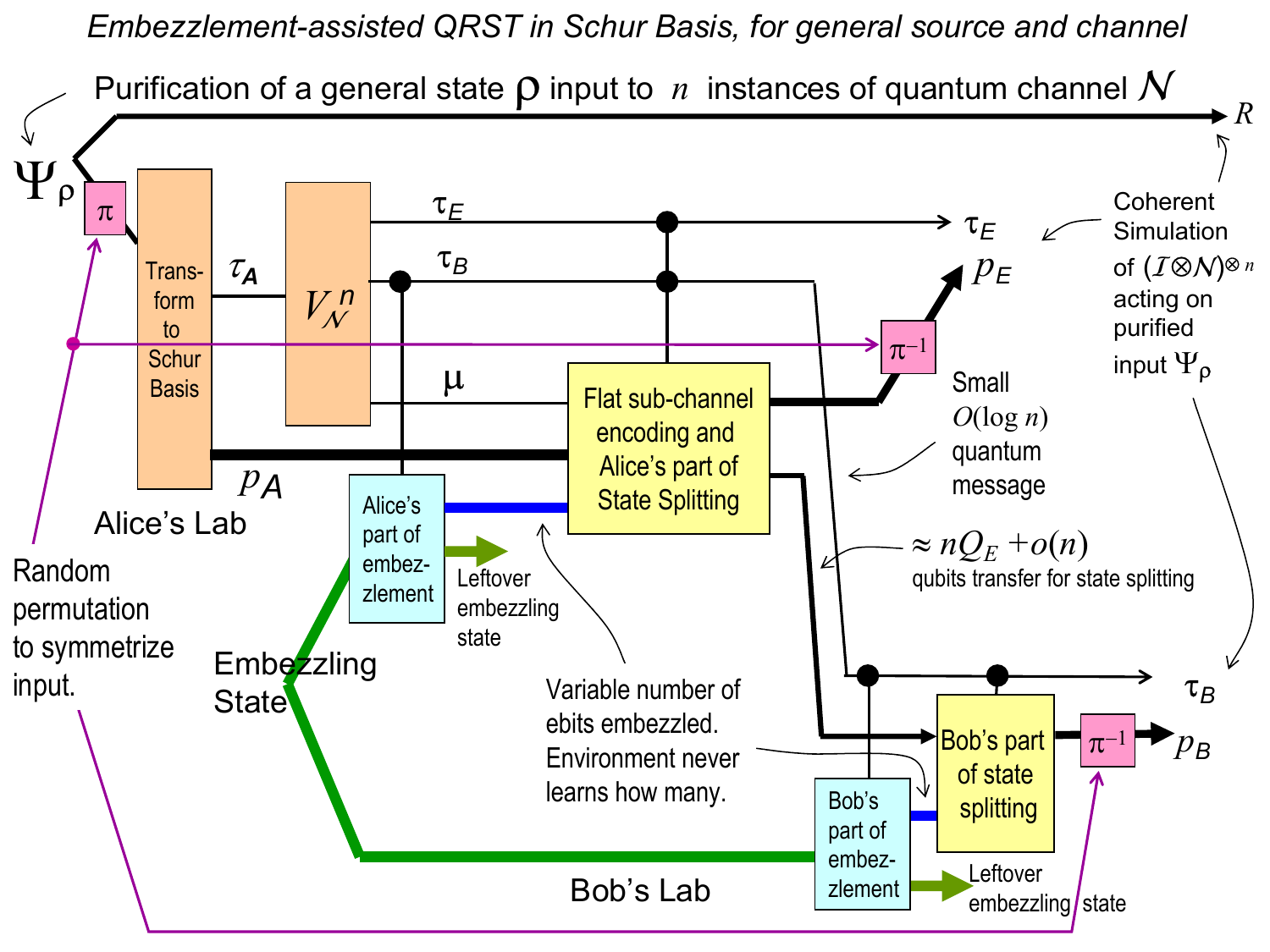}
\caption{QRST on a general input using an entanglement-embezzling
state (green). Alice first applies a randomizing permutation $\pi$
to the inputs to $n$ instances of her quantum channel, using
information shared with Bob (magenta), thereby rendering the overall
input approximately permutation-symmetric. She then uses the
$\tau_B$ register to embezzle the correct superposition of (possibly
very) different amounts of entanglement needed by her sub-channel
encoder, leaving a negligibly degraded embezzling state behind. At
the receiving end (lower right) Bob performs his half of the
embezzlement, coherently decodes the sub-channel codes, and undoes
the random permutation. The shared randomness needed for the initial
randomizing permutation can also be obtained from the embezzling
state, and in any case can be made sublinear in $n$, as shown in
Lemma \ref{lem:worst-to-avg-case}. } \label{fig:EmbezQRST}
\end{figure*}

\begin{figure*}[htbp]
\includegraphics[width=0.9\textwidth]{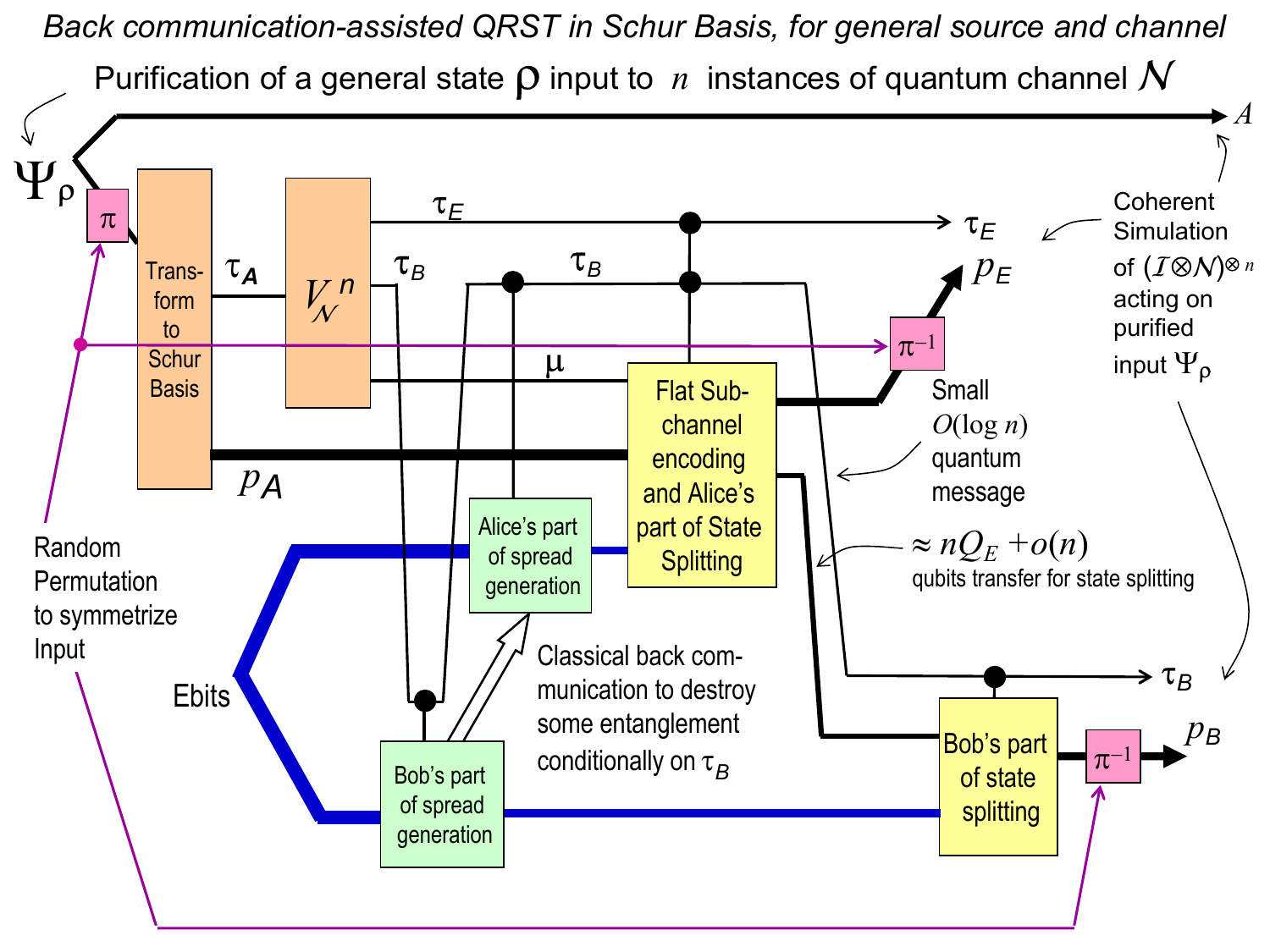}
\caption{QRST using classical back communication. Here the requisite spread is generated by
starting with a large amount of ordinary (i.e. maximal) entanglement, then using back communication and
the $\tau_B$ register to coherently burn off some of it. This requires the $\tau_B$
register to make a round trip from Alice to Bob then back again, before finally returning
to Bob, who needs to be holding it at the end.  Other aspects of the protocol are as in the
embezzlement-assisted implementation of the preceding figure.}
\label{fig:BackCommQRST}
\end{figure*}

When we do not need to simulate feedback, the main difference is that
we can split the $E$ register into a part for Alice ($E_A$) and a part
for Bob ($E_B$).  Additionally, this splitting is not restricted to be
i.i.d., although the corresponding ``additivity'' question here
remains open.
 That is, for any $n\geq 1$ and any $V:E^n\ra E_A E_B$,
simulating the action of $\cN^{\ot n}$ can be achieved by simulating
$V\circ \cN_F^{\ot n}$.  Here Alice gets the output $E_A$ and Bob gets
the output $B^n E_B$.  Moreover, we are in some cases able to break the
superpositions between different $\tau_A$.  If feedback is not required,
then we can assume without loss of generality that Alice has measured $\tau_E$, estimated
$\hat{\cN}(\rho)$ to within $O(n^{-1/2})$ accuracy~\cite{keyl06} and
communicated the resulting estimate to Bob using $o(n)$
communication.

However, in some cases (including an example we will describe in the
next section), $\hat{\cN}(\rho)$ does not uniquely determine
$\cN(\rho)$, and thereby determine the rate of entanglement needed.
In this case, it will suffice to prepare a superposition of entanglement
corresponding to any source in $(\hat{\cN}^{\ot n})^{-1}(\omega)$ for
each $\omega\in\text{range}(\hat{\cN}^{\ot n})$.  This yields the
communication cost claimed in \thmref{qrst}.

We conclude with a rigorous proof that low average-case error can be
turned into low worst-case error, allowing the permutation $\pi$ in
Figs. \ref{fig:EmbezQRST} and \ref{fig:BackCommQRST} to be largely
derandomized, reducing its shared randomness cost to sublinear in
$n$.
\begin{lemma}\label{lem:worst-to-avg-case}
Let $V^{A\ra BE}$ be an isometry that represents an ideal protocol and $\tilde V^{A \ra BE}$ its approximate realization.  Suppose that we have an average-case fidelity guarantee of the form
\be \bra{\Phi_{D_A}}^{RA}(I \ot \tilde V^\dag V)\ket{\Phi_{D_A}}^{RA} \geq 1-\eps
\label{eq:avg-case}.\ee
Let $\mu$ be a distribution over $\cU_{D_A}$ such that $\bbE_{U\sim \mu} U \rho U^\dag = I/D_A$ for any density matrix $\rho$.  If $U_1,\ldots,U_m$ are drawn i.i.d. from $\mu$, then with probability $\geq 1-D_A(4/e)^{-m\eps/2}$, for any $\ket{\psi}$,
\be \frac{1}{m}\sum_{i=1}^m |\bra{\psi}(I \ot U_i^\dag \tilde V^\dag V U_i)\ket{\psi}|^2 \geq 1-6\eps.
\label{eq:worst-case}\ee
\end{lemma}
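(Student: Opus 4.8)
The plan is to recast the claim as a lower bound on the smallest eigenvalue of an average of i.i.d.\ random Hermitian matrices, and then invoke the matrix Laplace transform (matrix Chernoff) method; this is what produces a failure probability with only the \emph{linear} dimension factor $D_A$, rather than the exponential-in-$D_A$ factor a naive $\eps$-net over all $\ket\psi$ would cost. First I would pass to the contraction $M:=\tilde V^\dag V$ on $A$, which has $\|M\|\leq 1$. Writing $H:=\frac12(M+M^\dag)$ for its Hermitian part, we have $-I\preceq H\preceq I$, and the hypothesis \eqref{eq:avg-case} says precisely $\frac1{D_A}\Tr M\geq 1-\eps$, hence $\frac1{D_A}\Tr H\geq 1-\eps$. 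For any unit vector $\ket\psi^{RA}$ with reduced state $\rho:=\psi^A$, a Schmidt-decomposition computation gives $\bra\psi(I\otimes U_i^\dag M U_i)\ket\psi=\Tr(M\,U_i\rho U_i^\dag)$, whose real part is $\Tr(U_i^\dag H U_i\,\rho)$.

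The key reduction is this: the functional $\rho\mapsto\frac1m\sum_{i}\Tr(U_i^\dag H U_i\,\rho)=\Tr(G\rho)$, with $G:=\frac1m\sum_i U_i^\dag H U_i$, is linear on the set of density operators, so its minimum is attained at a pure state and equals $\lambda_{\min}(G)$. Thus it suffices to prove $\lambda_{\min}(G)\geq 1-2\eps$ with the stated probability. Here the depolarizing hypothesis on $\mu$ enters: tracing the identity $\bbE_{U\sim\mu}U\rho U^\dag=I/D_A$ against an arbitrary operator shows $\bbE_{U\sim\mu}U^\dag Y U=(\Tr Y)\,I/D_A$ for every $Y$, whence $\bbE\,U_i^\dag H U_i=\tfrac1{D_A}(\Tr H)\,I\succeq(1-\eps)I$, while $-I\preceq U_i^\dag H U_i\preceq I$ for each $i$.

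Now apply the matrix Chernoff bound to $Z_i:=I-U_i^\dag H U_i$: these are i.i.d.\ and satisfy $0\preceq Z_i\preceq 2I$ with $\bigl\|\bbE\sum_i Z_i\bigr\|=m\bigl(1-\tfrac1{D_A}\Tr H\bigr)\leq m\eps$. The matrix Laplace transform bound (applied to $Z_i/2$) gives $\Pr[\lambda_{\max}(\sum_i Z_i)\geq 2m\eps]\leq D_A\,e^{\,t-w}(w/t)^{t}$ with $t=m\eps$ and $w=\tfrac12\|\bbE\sum_i Z_i\|\leq m\eps/2$; since that RHS is increasing in $w$ on $(0,t)$, it is at most $D_A\,e^{m\eps/2}2^{-m\eps}=D_A(4/e)^{-m\eps/2}$. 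On the complementary event, $\lambda_{\min}(G)=1-\lambda_{\max}\bigl(\tfrac1m\sum_i Z_i\bigr)\geq 1-2\eps$, so $\frac1m\sum_i\mathrm{Re}\,\bra\psi(I\otimes U_i^\dag M U_i)\ket\psi=\Tr(G\,\psi^A)\geq 1-2\eps$ simultaneously for every $\ket\psi$. Finally, writing $a_i:=\bra\psi(I\otimes U_i^\dag M U_i)\ket\psi$ with $|a_i|\leq 1$, we have $|a_i|^2\geq(\mathrm{Re}\,a_i)^2\geq 2\,\mathrm{Re}\,a_i-1$, so $\frac1m\sum_i|a_i|^2\geq 2(1-2\eps)-1=1-4\eps\geq 1-6\eps$, which is \eqref{eq:worst-case}.

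The main obstacle is precisely obtaining ``for all $\ket\psi$'' with only the $D_A$ prefactor: a scalar Chernoff bound (which handles one fixed $\rho$) followed by an $\eps$-net would cost $\exp(\Theta(D_A))$ states. The convexity argument that collapses the worst case to $\lambda_{\min}(G)$ is what makes a genuine matrix tail bound applicable, and matching the exact constant $(4/e)^{-m\eps/2}$ forces one to use the absolute Laplace-transform form of the bound rather than a multiplicative-deviation form, since here the relevant mean $\|\bbE\sum_i Z_i\|$ can be far smaller than the deviation $2m\eps$. A secondary point worth checking carefully is that the hypothesis is genuinely a statement about $\frac1{D_A}\Tr M$ (a Schmidt/maximally-entangled identity) and that $\psi^A$ is an arbitrary density operator on $A$, so that no restriction (e.g.\ symmetry) on $\ket\psi$ is needed.
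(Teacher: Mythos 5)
Your proof is correct and, at its core, is the same argument the paper gives: conjugate an operator built from $\tilde V^\dagger V$ by the random $U_i$, appeal to the twirling hypothesis so the mean is a known multiple of $I$, apply an operator (matrix) Chernoff bound to get a uniform operator-norm guarantee with only a $D_A$ prefactor, and finish with an elementary inequality to convert the linear (real-part) bound into the quadratic fidelity bound. Where you differ is in carefulness rather than in strategy, and in each instance your version is tighter. The paper works directly with $\Delta := I - \tilde V^\dagger V$ and writes $\|\Delta\|_1 = \tr\Delta$ and applies the operator Chernoff bound to $U_i\Delta U_i^\dagger$ — but $\tilde V^\dagger V$ is only a contraction, not a priori Hermitian, so these steps are implicitly using the Hermitian part; your explicit passage to $H = \tfrac12(M + M^\dagger)$, and the observation that the quantity entering \eqref{eq:worst-case} through $\mathrm{Re}\,a_i$ only involves $H$, makes this rigorous (and also corrects the inconsequential $U_i\Delta U_i^\dagger$ vs.\ $U_i^\dagger\Delta U_i$ mismatch). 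You also make explicit the reduction ``for all $\ket\psi$'' $\Leftrightarrow$ $\lambda_{\min}(G)$ via linearity in $\psi^A$, which the paper leaves implicit. Your extraction of the constant $(4/e)^{-m\eps/2}$ via the Laplace-transform form of the bound with $t=m\eps$, $w\le m\eps/2$ is exactly what is needed and is not spelled out in the paper. Finally, the paper invokes Cauchy--Schwarz to pass from $\|\bar\Delta\|_\infty\le 3\eps$ to $(1-3\eps)^2\ge 1-6\eps$, whereas you use $|a|^2\ge 2\,\mathrm{Re}\,a-1$ to reach $1-4\eps$ from a one-sided $\lambda_{\min}(G)\ge 1-2\eps$; both yield the stated $1-6\eps$, yours with a bit of slack to spare. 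In short: same approach, but your write-up closes the small informalities in the published proof.
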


\begin{proof}
Let $\Delta := I- \tilde V^\dag V$.  Observe that $\|\Delta\|_\infty\leq 2$ and that \eq{avg-case} implies that $\|\Delta\|_1 = \tr\Delta \leq \eps D_A$.  Define
\ba \Delta_0 & := \bbE_{U\sim \mu}[U\Delta U^\dag] = \frac{\tr\Delta}{D_A}I  \leq \eps I\\
\bar\Delta &:= \frac{1}{m}\sum_{i=1}^m U_i \Delta U_i^\dag ,\ea
where $U_1,\ldots,U_m$ are drawn i.i.d.~from $\mu$.
By applying the operator Chernoff bound~\cite{AW02,Tropp-LD}, we find that with probability $\geq 1 - D_A (4/e)^{-m\eps/2}$, we have $\|\bar\Delta-\Delta_0\|_\infty \leq 2\eps$.  In this case, $\|\bar\Delta\|_\infty \leq 3\eps$.   \eq{worst-case} follows by Cauchy-Schwarz.
\end{proof}

\lemref{worst-to-avg-case} can be applied separately to each Schur subspace, with $D_A := \dim\cP_{\lambda_a}$.  Thus, a union bound multiplies the probability of a bad choice of permutations by $n^{O(d_A)}$ and we always have $D_A \leq O(n\log d_A)$.

\subsubsection*{Inefficiencies and errors}
Here we briefly tabulate the various sources of inefficiency and error
in our simulation protocols for quantum channels.  We will consider
allowing an inefficiency of $O(n\delta)$ in each step of the protocol
and will analyze the resulting errors.

\benum
\item In \lemref{flat-qrst}, an extra communication rate of $O(n\delta)$
  means that the splitting step incurs an error of $\exp(-n\delta)$.
\item When restricting to typical triples, our definition of
  $T_{\cN,\delta}^n$ was chosen so that entropic quantities such as
  $H(R^n)+H(B^n)-H(E^n)$ would change by $\leq n\delta + o(n)$.  According
  to \lemref{joint-schur}, this results in error
  $\exp(-n\delta^2/8\log^2(d))$.  This will turn out to be the
  dominant error term; accordingly, we define
  $\delta'=\delta^2/8\log^2(d)$.
\item Suppose Alice and Bob use $\log(m)$ rbits to permute the channel inputs, and unpermute the channel outputs.  Then \lemref{worst-to-avg-case} implies that the error multiplies by only a constant factor if we take $m=O(n\delta + \log(n))$.
\item To achieve an error of $\exp(-n\delta)$ using an embezzling
  state, we need to take it to have $n\exp(n\delta)$ qubits.  This is
  exorbitant, but in some secenarios, such as our proof of the strong
  converse in the next section, the size of the entangled ancilla that
  we use is irrelevant.
\eenum

To summarize, our error scales as $\exp(-n\delta^2/8\log^2(d) + o(n))$.

\subsection{Converses and strong converses}
\label{sec:converses}
In information theory, a ``converse'' is the statement of asymptotic
resource optimality of a coding theorem (which is often called
``direct part''). A ``strong converse'' is a statement of the form
that with too little resources the error parameter in any protocol
approaches $1$ asymptotically.
In channel simulations, the first and foremost resource is forward
communication, but other resources of interest are the amount of
entanglement and specifically the amount of entanglement spread.

Here, we first show that as with the classical reverse Shannon theorem, the
existence of a coding theorem (this time for entanglement-assisted
capacity~\cite{BSST01}) means that no better simulation is possible.
%,at least not for IID inputs.
Indeed, such matching coding theorems
generally give us {\em strong} converses, implying that attempting to
simulate a channel at rate $C_E-\delta$ or lower results
in an error $\geq 1-\exp(-n\delta')$ for some
$\delta'>0$.  At the same time, they give us strong converses for
coding theorems, proving that attempting to code at a rate
$C_E+\delta$ results in the probability of successful decoding being
$\leq \exp(-n\delta')$, again for some $\delta'$ depending on $\delta$.
Second, we use arguments from the theory of entanglement spread
(cf. \secref{spread}) to show that our simulations for non-IID inputs
do require either embezzling states, or -- if only maximally
entangled states are available -- the use of extra communication
(which may be forward or backward directed), to create entanglement
spread.
%cannot be improved, even though in some cases they require
%communication rates that are higher than those of any known channel
%capacity.

\subsubsection{Strong converse for forward communication}
The general principle behind these strong converses is based on the
fact that $m$ forward cbits, assisted by arbitrary
back communication and entanglement,
can transmit $m+k$ bits only with success probability $\leq 2^{-k}$.
The proof is folklore.\footnote{Here is a sketch of the proof.  Suppose a protocol exists that achieves
  success probability $q$ on a randomly chosen $m+k$-bit input.
  Modify this protocol so that the bits transmitted from Alice to Bob
  are replaced by random bits that Bob generates locally.  We can
  think of this as Bob guessing Alice's input.  This modified protocol
  can be simulated locally by Bob and corresponds to him drawing from
  a fixed distribution independent of Alice's
  input.   On a random $m+k$-bit input, this must have success
  probability $2^{-m-k}$.  Our bound on the original protocol also
  means that this has success probability $\geq q
  2^{-m}$, since Bob has probability $2^{-m}$ of correctly guessing Alice's $m$
  transmitted bits.  Thus we obtain $q\leq 2^{-k}$.}
We call this principle {\em the guessing bound}; it is also sometimes
referred to as ``causality.''
To apply the guessing bound, suppose we
have a coding theorem that allows us to use $\cN^{\ot n}$ (perhaps
also with auxiliary resources, such as shared entanglement) to send
$n(C-\delta)$ bits with success probability $1-\eps_{n,\delta}$.
[Typically, $\eps_{n,\delta}$ will be of the form $\exp(-O(n\delta^2))$.]
Now assume that there exists a
simulation of $\cN^{\ot n}$ (using any auxiliary resources that are
not capable of forward communication) that uses $n(C-\delta')$ bits of
communication and achieves error $\eps_{n,\delta'}'$.
If $\delta'>\delta$, then the guessing bound implies that
\be
  \eps_{n,\delta} + \eps_{n,\delta'}' \geq 1-2^{-n(\delta'-\delta)}.
  \label{eq:mutual-converse}
\ee
Thus coding theorems constrain possible simulation theorems. Vice
versa, by the same logic, suppose we had a simulation of $\cN^{\ot n}$
using $n(C+\delta)$ cbits of forward communication plus additional
resources and error $\eps_{n,\delta}$, and consider a coding
of $n(C+\delta')$ cbits into $n$ uses of $\cN$ and auxiliary
resources, achieving error probability $\eps_{n,\delta'}'$.
Then as before, for $\delta'>\delta$,
\be
  \eps_{n,\delta} + \eps_{n,\delta'}' \geq 1-2^{-n(\delta'-\delta)}.
  \label{eq:mutual-converse-2}
\ee
For the purposes of this argument, any auxiliary resources are
permitted as long as they are consistent with the guessing bound.
In particular, embezzling states of unlimited size are allowed,
and so is backwards quantum communication, and in this way we
can also establish whatever type of entangled state we need.

In this case, the
arguments of the last section established that an inefficiency of
$\delta$ in our channel simulation (i.e. spending $n(C_E+\delta)$
bits) allows errors to be bounded by
$\leq \exp(-n\delta^2/8\log^2(d) + o(n))$.  Similarly, it is known
that $n(C_E-\delta)$  bits can be sent through $\cN^{\ot n}$ with
error $\leq \exp(-n\delta^2/8\log^2(d))$ if we are allowed a
sufficient rate of ebits. This establishes that $C_E(\cN)$
is the optimal cbit rate for simulation, and of communication, in
the strong converse sense, even if arbitrary entangled states and
back communication are for free.
Previously this was known to hold only when considering product-state
inputs~\cite{ON99,Winter99} or restricted classes of
channels~\cite{converse09,WWY13}.  Recently an alternate proof of the
entanglement-assisted strong converse has also been given based on a
more direct argument involving completely bounded norms~\cite{GW13}.
The fact that our strong converse also applies
in the setting where free back communication is allowed from Bob
to Alice is perhaps surprising given that back communication
is known to increase the classical capacity in the unassisted
case~\cite{SS09} (although not in the assisted
case~\cite{Bowen-feedback}).  One limitation of our strong converses is
that they only apply
when the entanglement-assisted capacity is exceeded, whereas
\cite{ON99,Winter99,converse09,WWY13} addressed the Holevo capacity or the
ordinary classical capacity.

While the above argument applies to arbitrary use of the channel
to communicate (and allows arbitrary input states in the simulation),
we can also establish such a strong converse in
the case of a known IID~input. Here it is not only known~\cite{DHW05}
that $\<\cN_F:\rho\> \geq
\frac{1}{2}I(R;B)[q\ra q] + \frac{1}{2}I(B;E)[qq]$, but the
corresponding protocol can be shown to have error bounded by
$2^{-n\delta'}$.  Thus, suppose a simulation existed for
$\<\cN:\rho\>$  that used
$\frac{1}{2}(I(R;B)-\delta)[q\ra q]$ and an unlimited amount of
entanglement and back communication to achieve fidelity $f$.  Then
combining this simulation
with teleportation and our coding protocol would give a method for
using cbits at rate $I(R;B)-\delta$ together with entanglement to
simulate cbits at rate $I(R;B)-\delta/2$ with fidelity $\geq f-
2^{-n\delta'}$ for some $\delta'>0$.  By causality, any such
simulation must have fidelity $\leq 2^{-n\delta/2}$, and thus we must
have $f \leq 2^{-n\delta/2} + 2^{-n\delta'}$.

\subsubsection{Converses for the use of entanglement and back communication, based on spread}
Here we have to distinguish between the channel simulation with
and without coherent feedback.

The case \emph{with} coherent feedback
is easier to handle as it places more stringent constraints on the
protocol, and so the bounds are easier to prove.
Thus we begin with this case, which corresponds to part (d) of
\thmref{qrst}.

We shall argue that entanglement spread is necessary. In fact,
we will show a larger communication cost (forward plus backward)
if the only entangled resource consists of  ebits (i.e. maximally
entangled states).
Recall that the simulation theorem for feedback channels
uses communication at rate
\be \max_{\rho_1}H(B)_{\rho_1} + \max_{\rho_2}
(H(R)-H(E))_{\rho_2} = C_E(\cN) + \deficit(\cN).
\label{eq:spread-converse-states}\ee  In what follows, we
will omit $\cN$ from our notation.  We will show that
this rate is optimal by constructing an input on which $U_\cN^{\ot n}$ will
create $\approx n(C_E + \deficit)$ spread.

For $i=1,2$, let $\rho_i$ be the states from
\eq{spread-converse-states} and let  $\ket{\psi_i}$ be a purification of $\rho_i^{\ot n}$.  Let
$$\ket\Psi = \frac{\ket{1}^A \ket{1}^B\ket{\psi_1}^{A'A}\ket{\psi_2}^{AB} +
\ket{2}^A \ket{2}^B \ket{\psi_2}^{A'B}}{\sqrt{2}}.$$
Here we use $A$ repeatedly to indicate registers under Alice's
control, $B$ to indicate registers owned by Bob, and $A'$ for a
register controlled by Alice that will be input to $U_\cN^{\ot
  n}$.  We omit describing the $\ket 0$ registers that should pad
Alice and Bob's registers so that each branch of the superposition has
the same number of qubits.
 Let $\ket{\varphi_i}^{RBE} = (U_\cN^{A'\ra BE} \ot I^R)^{\ot
  n}\ket{\psi_i}^{A'R}$.
Then,
\begin{multline}\ket{\Theta} := U_\cN^{\ot n}\ket{\Psi}
\\=
\frac{\ket{1}^A \ket{1}^B\ket{\varphi_1}^{ABE}\ket{\psi_2}^{AB} +
\ket{2}^A \ket{2}^B \ket{\varphi_2}^{BBE}}{\sqrt{2}}.
\end{multline}
Here, $E$ is again a register controlled by Alice and we observe that
$\ket{\varphi_2}$ has two out of its three registers controlled by Bob.

We argue that $\approx n(C_E + \deficit)$ spread has been created by
applying $U_\cN^{\ot n}$.  First, observe that $\ket\Psi$ is locally
equivalent to $\frac{\ket{1,1}+\ket{2,2}}{\sqrt{2}} \ot
\ket{\psi_2}^{AB}$, which has $O(\sqrt{n})$ spread.  More precisely,
$\Delta_\eps(\psi_2^A) \leq O(\sqrt{n\log(1/\eps)}\log(d))$, and so
$\ket\Psi$ can be prepared with error $\eps$ using this amount of
communication~\cite{LP99}.   Next, we argue that $\ket{\Theta}$ has a
large amount of spread.  The part attached to the $\ket{1,1}$ register
has entanglement roughly equal to  $n(H(B)_{\rho_1} + H(R)_{\rho_2})$
and the part attached to the $\ket{2,2}$ register has entanglement
roughly equal to $nH(E)_{\rho_2}$.  Since these registers are
combinations of i.i.d. states, one can prove (c.f. Theorem 13 of
\cite{HW02}, or Proposition 6 of \cite{HL02}) that for any
$\eps<1/2$, $\Delta_\eps(\Theta^A) \geq n(H(B)_{\rho_1} +
H(R)_{\rho_2} - H(E)_{\rho_2}) - O(\sqrt{n})$.  We conclude from
\thmref{spread} that simulating $U_\cN^{\ot n}$ to error lower than a
sufficiently small constant (such as $10^{-4}$) using unlimited ebits requires communication $\geq n(C_E(\cN)+\deficit(\cN)-o(1))$.
We suspect, but do not prove, that a tighter analysis could prove this
lower bound for all $\eps<1/2$. Note that our statement is not a
strong converse in the usual sense (which would demand a proof of
our bound for all $\eps<1$) but that it still establishes a
jump of the error at the optimal rate.

\medskip
When we consider simulations without feedback, we no longer have
additivity, and we are able only to establish
regularized coding theorems and weak converses.
The zero-entanglement limit is discussed
in \cite{Hayashi:EoP} and the low-entanglement regime (part (b) of
\thmref{qrst}) follows similar
lines.  The main idea is that if only coherent resources (such as
qubits and ebits) are used, then the state of the environment is
entirely comprised of what Alice and Bob discard.  Let $E_A$
(resp.~$E_B$) denote the system that Alice (resp.~Bob) discards.

Let $\cP$ denote the simulation of $\cN^{\ot n}$ constructed by the
protocol.  By the above arguments, $\cP$ has an isometric extension
$U_\cP^{A^n\ra B^nE_AE_B}$, just as $\cN^{A\ra B}$ has isometric
extension $U_\cN^{A\ra BE}$.  The fact that the simulation is
successful means that $\|\cP - \cN^{\ot n}\|_\diamond \leq \eps$.
We now make use of a generalization of Uhlmann's theorem~\cite{KSW06}
to show that
\be \|U_P - V^{E^n\ra E_AE_B}\circ U_\cN^{\ot n}\|_\diamond \leq
\sqrt\eps
\label{eq:iso-close}\ee
for some isometry $V$.

For part (b) of \thmref{qrst}, this allows us to reduce the converse
to that for part (a).  We obtain \eqs{q-low-iid}{qe-low-iid} from
Fannes' inequality.
Before discarding $E_B$, Bob's total state $B^n
E_B$ is within $\eps$ of a state on $q+e$ qubits, and thus has
$H(B^nE_B) \leq q+e  +O(n\eps)$.  Similarly, Bob has received only $q$
qubits, so we must have $\frac{1}{2}I(R^n;B^nE_B) \leq q + O(n\eps)$.

For part (e), \eq{iso-close} allows us to reduce the converse to the
converse for part (d).  Again this is because the ability to simulate
$\cN^{\ot n}$ without preserving $E$ is equivalent to the ability to
simulate $V\cdot U_\cN^{\ot n}$ for {\em some} choice of $V^{E^n\ra E_AE_B}$.

\subsubsection{The clueless Eve channel}\label{sec:clueless}
We conclude our discussion of converses with an explicit example of
a channel that requires more communication to simulate with ebits than with embezzling states [part (e) of \thmref{qrst}]. This
channel is designed so that different inputs create different
amounts of entropy for the receiver, but without leaking information
about this to the environment. Hence, we call it the ``clueless Eve
channel.''

The channel $\cN_d$ maps $d+1 \ra d+1$ dimensions.  We define it in
terms of its isometric extension as follows: \be U_{\cN_d}  =
\ket{\Phi_d}^{BE}\bra{0}^A + \sum_{i=1}^d \ket{0}^B\ket{i}^E
\bra{i}^A ,\label{eq:clueless}\ee where $\ket{\Phi_d} =
\frac{1}{\sqrt{d}}\sum_{i=1}^d \ket{i,i}$. In other words $\ket{0}$
is mapped to the maximally mixed state (over dimensions $1,\ldots,
d$) for Bob, while $\ket{i}$ is mapped to the $\ket{0}$ state for
$1\leq i \leq d$. One can show that $C_E(\cN_d)=2Q_E(\cN_d)=1$
independent of $d$ using convexity and symmetry
arguments\footnote{$C_E$ is given by the maximum of $I(A;B)$ over
  inputs $\rho$.  Due to the structure of the channel, it is invariant
under the map $\rho\ra U\rho U^\dag$ for any $U$ satisfying $U\ket 0 =
\ket 0$.  Since $I(A:B)$ is concave in the input density matrix
$\rho$, it follows that it can be maximized by $\rho$ that commutes
with all such $U$.  The resulting states have the form $p\proj 0 +
(1-p)(\sum_{i=1}^d \proj i / d)$ and the resulting one-parameter
maximization problem is easily seen to be equivalent to determining
the entanglement-assisted capacity of a noiseless classical bit
(i.e. totally dephasing) channel.} along the lines of \cite{cortese-covariant,Holevo-covariant}.  However,
the following argument will show that on some (non-tensor-power)
inputs, the channel's ebit-assisted simulation cost, even for a
non-feedback simulation, strictly exceeds its entanglement-assisted
capacity. (This may be contrasted with the case of the amplitude
damping channel considered earlier in \fig{ampl}, where the gap
between ebit-assisted simulation cost and $C_E$ is present only for
feedback simulation). To see qualitatively why standard ebits
are an insufficient entanglement resource to efficiently simulate
this channel, consider the purified non-tensor-power input \be
\ket{\Psi}^{RA^n}:= \frac{ \ket{0^n}^R\ket{0^n}^{A^n} +
\ket{\Phi_{d^n}}^{RA^n}}{\sqrt 2} \label{eq:tough-input}\ee to $n$
uses of the channel.  As usual, $R$ is a reference system and $A^n$
is sent to $B^nE^n$ by $U_\cN^{\ot n}$.
\begin{figure}[htbp]
\includegraphics[width=3.5in]{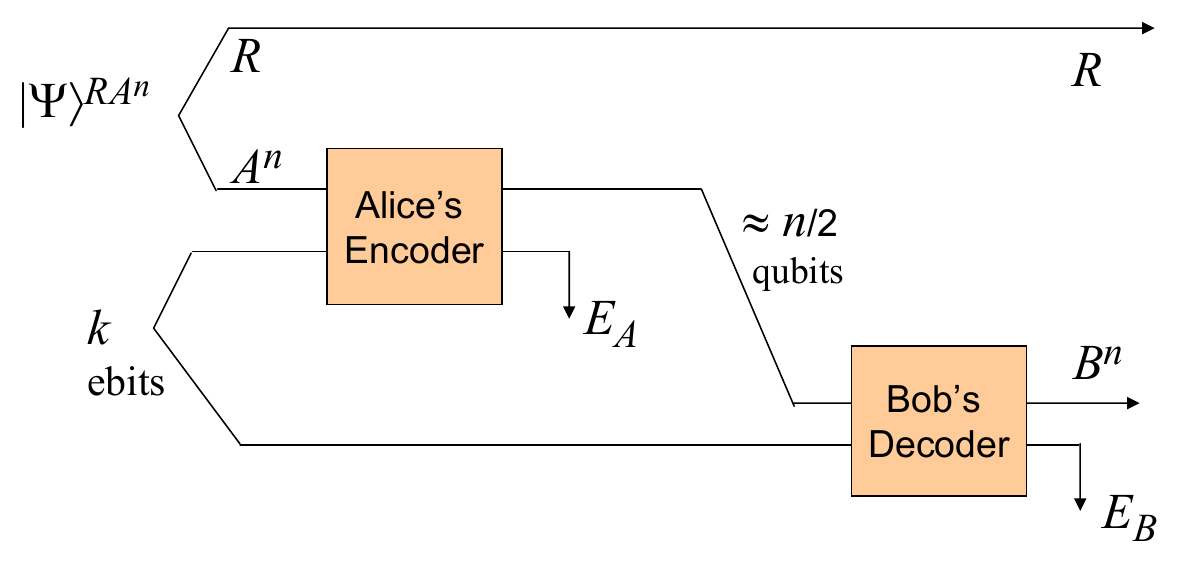}
\caption{A would-be simulation of the clueless Eve channel $\cN_d$
on the non-tensor-power source $\ket{\Psi^{RA^n}}$, using around
$nQ_E=n/2$ qubits of forward communication and $k$ ordinary ebits as
the entanglement resource, deposits different amounts of entropy in
Alice's local environment $E_A$ depending on which term in
$\Psi^{RA^n}$ is acted upon, thereby decohering the superposition
and spoiling the simulation.} \label{fig:clueless}
\end{figure}

In \fig{clueless} Alice's encoder, assisted by some number $k$ of
ebits, transforms the $A^n$ part of this input into a supposedly
small ($\approx n/2$ qubit) quantum message sent to Bob and a
residual environment system $E_A$ retained by Alice. By conservation
of entropy, if $A^n\!=\!0$, then Alice's environment $E_A$ will have
entropy at most $k+n/2$, whereas if $A^n\!\neq\!0$ it will be left
in a different state with entropy at least $k+n\log(d) -n/2$.
Because (as will be shown in the following theorem) these two states
become close to orthogonal for large $d$, Alice's
environment will gain information about which term in the
superposition $\Psi^{RA^n}$ was present, and consequently will
decohere the superposition, which a faithful simulation of the
channel would not have done.  Carrying through this argument more
precisely, we have:

\begin{theorem}
\label{thm:clueless-eve-spread} Let $\cP$ be a protocol using $q$
qubits of communication (total, in either direction) and $k$ ebits. If $\|\cP-\cN_d^{\ot n}\|_\diamond \leq \eps$ then $q \geq
\frac{1}{4}n\log d -1-8\sqrt{\eps}n\log d $.
\end{theorem}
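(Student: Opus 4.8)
The plan is to run the would-be simulation $\cP$ on the two-term input $\ket\Psi$ of \eq{tough-input} and show that faithfulness forces Alice's discarded system to look the same whether the input is the $\ket{0^n}^R\ket{0^n}^{A^n}$ term or the $\ket{\Phi_{d^n}}^{RA^n}$ term --- this is exactly the ``clueless Eve'' property built into $\cN_d$ --- whereas a protocol with small $q$ necessarily makes these two discarded states have wildly different entropies. First I would pass to isometries: since $\cP$ uses only qubits and ebits it has an isometric dilation $U_\cP^{A^n\to B^nE_AE_B}$, with $E_A$ the part Alice discards and $E_B$ the part Bob discards, and by the generalised Uhlmann theorem underlying \eq{iso-close}, $\|\cP-\cN_d^{\ot n}\|_\diamond\le\eps$ gives $\|U_\cP - V\circ U_{\cN_d}^{\ot n}\|_\diamond\le\sqrt\eps$ for some isometry $V\colon E^n\to E_AE_B$. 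Writing $\ket{\psi_0}=\ket{0^n}^R\ket{0^n}^{A^n}$, $\ket{\psi_1}=\ket{\Phi_{d^n}}^{RA^n}$ and $\ket{\tilde\Theta_j}=U_\cP\ket{\psi_j}$, linearity gives $U_\cP\ket\Psi=\tfrac1{\sqrt2}(\ket{\tilde\Theta_0}+\ket{\tilde\Theta_1})$, and each $\ket{\tilde\Theta_j}$ is within $O(\sqrt\eps)$ in Euclidean norm (hence its marginals within $O(\sqrt\eps)$ in trace norm) of $(I^{RB^n}\ot V)$ applied to the corresponding term of $\ket\Theta=U_{\cN_d}^{\ot n}\ket\Psi$.

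Next I would evaluate $\ket\Theta$ explicitly: the $\ket{0^n}$ term maps to $\ket{0^n}^R\ot\ket{\Phi_d}^{\ot n}$ on $B^nE^n$, and the $\ket{\Phi_{d^n}}$ term to $\ket{0^n}^{B^n}\ot\ket{\Phi_{d^n}}$ on $RE^n$. The structural facts I need are: (i) the two terms are orthogonal on $R$ and on $B^n$; (ii) Bob's reduced output is the maximally mixed state $(I_d/d)^{\ot n}$ (entropy $n\log d$) in term $0$ but the pure state $\ket{0^n}$ in term $1$; and (iii) the reduced state on $E^n$ is $I_{d^n}/d^n$ in \emph{both} terms, so after $V$ the reduced state on $E_AE_B$, and its restriction to $E_A$, is $V(I_{d^n}/d^n)V^\dag$ (resp.\ its partial trace) in both terms. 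Combining (iii) with the previous paragraph gives the crux: $\tilde\Theta_0^{E_A}$ and $\tilde\Theta_1^{E_A}$ are $O(\sqrt\eps)$-close in trace norm --- a faithful simulation must make Alice's local environment independent of which term of $\ket\Psi$ is being acted on.

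The communication bound then comes from estimating $H(\tilde\Theta_j^{E_A})$ in terms of the resources. In term $0$, $\ket{\psi_0}$ is pure and uncorrelated with $R$, so anything Alice discards that is entangled with Bob's side is either one of her $k$ retained ebit halves or was created by the $q$ qubits of communication; hence $H(\tilde\Theta_0^{E_A})\le k+q$. In term $1$, $A^n$ is maximally entangled with $R$, and by (iii) this degree-$d^n$ entanglement ends up inside $E_AE_B$ with Eve's marginal unchanged; any part of it deposited in Bob's discarded register $E_B$ had to be sent there, so Alice retains at least $n\log d-q$ of it in addition to her $k$ ebit halves, giving $H(\tilde\Theta_1^{E_A})\ge k+n\log d-q$. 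Feeding $\|\tilde\Theta_0^{E_A}-\tilde\Theta_1^{E_A}\|_1\le O(\sqrt\eps)$ into the Fannes--Audenaert inequality \eq{fannes} (with $\dim E_A=d^{O(n)}$, so the slack is $O(\sqrt\eps\,n\log d)$) makes the $k$'s cancel and yields $q\gtrsim c\,n\log d-O(\sqrt\eps\,n\log d)$; tracking the constant, and the fact that the ``spread'' Alice must bridge between her two term-environments can be supplied by forward or backward qubits at the rates recorded in \eq{spread-caps} while the $k$ flat ebits supply none of it, produces the stated $q\ge\tfrac14 n\log d-1-8\sqrt\eps\,n\log d$.

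The step I expect to be the main obstacle is making the accounting of the third paragraph rigorous without loss of generality: the lower bound $H(\tilde\Theta_1^{E_A})\ge k+n\log d-q$ needs a clean argument that Bob's discarded $E_B$ shares entanglement with the reference only in proportion to the communication it receives, and one must control $\dim E_A$ so the Fannes slack is genuinely $O(\sqrt\eps\,n\log d)$ rather than tied to an arbitrary ancilla dimension. The cleanest way to organise this is to phrase everything through entanglement spread: the initial resources --- $\ket\Psi$ together with $k$ maximally entangled pairs --- carry no spread across the cut separating Alice from Bob-and-reference, so all spread appearing in $U_\cP\ket\Psi$ across that cut is generated by the $q$ qubits, and one applies \thmref{spread} together with a lower bound of order $\tfrac12 n\log d$ on the spread of $\ket\Theta$ across that cut, in the spirit of the feedback converse of \secref{converses}; the smoothing relation $\eps=\delta^8/4$ in \thmref{spread} and the crude estimate $\Delta_\eps\le H_0-H_\infty$ are precisely what generate the factor $\tfrac14$ and the coefficient $8\sqrt\eps$.
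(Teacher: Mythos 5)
Your opening moves (isometric dilation, computing $\ket\Theta$ explicitly, and spotting that $\Theta_V^{E_A}=\varphi_V^{E_A}$ is identical in both branches because the channel's Eve marginal is the same on both terms) are correct and agree with the paper. But the two quantitative steps that are supposed to produce the bound both fail, and for the same underlying reason: you have put the reference system $R$ on Bob's side of the cut, whereas the argument only works with $R$ on Alice's side.

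Concretely, in your third paragraph the bound $H(\tilde\Theta_0^{E_A})\le k+q$ is fine (in branch $0$ the reference is pure, so $H(E_A)=H(RB^nE_B)=H(B^nE_B)$, which starts at $k$ and changes by at most $q$). However, $H(\tilde\Theta_1^{E_A})\ge k+n\log d-q$ has the wrong sign on $k$: the ebit halves are shared, not concentrated with Alice, and the correct estimate from purity and subadditivity is $H(\tilde\Theta_1^{E_A})=H(RB^nE_B)\ge H(R)-H(B^nE_B)\ge n\log d-(k+q)$. Feeding these into Fannes then gives $k+q\gtrsim n\log d-k-q$, i.e.\ $k+q\gtrsim\tfrac12 n\log d$, in which the $k$'s \emph{add} rather than cancel. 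Since $k$ is arbitrary (the whole point of the clueless Eve channel is to show that ebits alone are useless here), this gives no bound on $q$ at all.

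The spread-based repair in your last paragraph suffers from the same misplacement of $R$. Across your cut (``Alice vs Bob-and-reference'') the initial state $\ket\Psi$ already carries $\approx n\log d$ of spread, since $\Psi^{A^n}\approx\tfrac12\proj{0^n}+\tfrac12 I/d^n$ has one eigenvalue near $1/2$ and $d^n$ eigenvalues near $1/(2d^n)$; so no communication is needed to create spread across that cut, and in fact $\Theta_V^{E_A}=\varphi_V^{E_A}$ can be made flat by choosing $V$ to dump $E^n$ onto $E_B$, so the target spread there is not even lower-bounded by $\Omega(n\log d)$. The paper instead works across the (Alice $+$ reference) vs Bob cut and considers $\Delta_0(\tilde\Theta^{RE_A})$: initially Bob holds only maximally entangled halves (zero spread), and $\Theta_V^{RE_A}=\tfrac12(\proj{0}^R\ot\varphi_V^{E_A}+\varphi_V^{RE_A})$ has a bimodal spectrum --- half with eigenvalues $\alpha_i/2$ coming from $\varphi_V^{E_A}$ and half with $\beta_i/2$ matching $\varphi_V^{E_B}$ --- and that asymmetry is what forces the spread. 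Controlling this under the $\eps$-perturbation then requires the explicit smoothed state $\hat\varphi$ and the Alicki-Fannes inequality precisely because $\Delta_0$ is not Lipschitz and $\dim E_A$ is unbounded, neither of which Fannes-Audenaert on $H(E_A)$ handles.
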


\begin{IEEEproof}
We begin by introducing some notation.  The proof depends on the
assumed closeness between $\cP$ and $\cN_d^{\ot n}$ on the
non-tensor-power source $\Psi^{RA^n}$.  Applying $U_\cN^{\ot n}$ to
the $A^n$ part of $\Psi^{RA^n}$ and using the fact that
$\ket{\Phi_d}^{\ot n} = \ket{\Phi_{d^n}}$ we obtain the (ideal)
state \be \ket\Theta^{RB^nE^n} :=
\frac{\ket{0}^{R}\ket{\Phi_{d^n}}^{B^nE^n}
 +\ket{0^n}^{B^n}\ket{\Phi_{d^n}}^{RE^n}}{\sqrt 2}\ee
that would result from the operation of the channel. We now purify
the simulating protocol $\cP$ in a canonical way: all non-unitary
operations are replaced by isometries and discarding subsystems, and
each subsystem that Alice (resp. Bob) discards is added to a
register called $E_A$ (resp. $E_B$).  We can think of $E_A, E_B$ as
local environments. Let $U_\cP$ denote the resulting purification
and define the actual protocol output to be \be
\ket{\tilde\Theta}^{RB^nE_AE_B} := (I \ot U_\cP)\ket{\Psi} .\ee To
simulate $\cN_d^{\ot n}$ on input $\ket\Psi$ we do not need to
approximate $\ket\Theta$, but it suffices to approximate the $RB^n$
part of the state.  By Uhlmann's theorem, this is equivalent to the
claim that there exists an isometry $V:E^n\ra E_AE_B$ such that
\be\left|\braket{\tilde\Theta}{\Theta_V}\right|^2\geq 1-\eps,
\label{eq:tilde-fidelity}\ee
 where $\ket{\Theta_V}^{RB^nE_AE_B} := (I^{RB^n} \ot V^{E^n\ra E_AE_B})\ket{\Theta}$.

To prove the lower bound, we will argue that either $\ket{\Theta_V}$ has high spread for any $V$, or it has high mutual information between $R$ and $B^nE_B$.  Either way, we obtain a lower bound on the communication required to approximately create it.  We first sketch the idea of why this should be true.
Let $V\ket{\Phi_{d^n}}^{XE^n} := \ket{\varphi_V}^{XE_AE_B}$, where $X$
can be either $R$ or $B^n$.  Then
\ba \ket{\Theta_V}^{RB^nE_AE_B} & =
\frac{\ket{0^n}^R \ket{\varphi_V}^{B^nE_AE_B} + \ket{0^n}^{B^n}\ket{\varphi_V}^{RE_AE_B}}{\sqrt{2}}
\nonumber\\\nonumber
\Theta_V^{RE_A} &= \frac{\proj{0}^{R} \ot \varphi_V^{E_A} +
 \varphi_V^{RE_A}}{2}.
 \ea
To understand the spectrum of $\Theta_V^{RE_A}$, observe that
$\proj{0}\ot \varphi_V^{E_A}$ and $\varphi_V^{RE_A}$ have orthogonal
support.   Therefore, if $\varphi_V^{E_A}$ and $\varphi_V^{RE_A}$ have
spectrum
$\alpha=(\alpha_1\ldots,\alpha_a)$ and $\beta=(\beta_1,\ldots,\beta_b)$
respectively, then $\Theta_V^{RE_A}$ has eigenvalues
$$\frac{\alpha_1}{2},\ldots,\frac{\alpha_a}{2},
\frac{\beta_1}{2},\ldots,\frac{\beta_b}{2}.$$
Note also that $\varphi_V^{RE_A}$ has the same spectrum as $\varphi_V^{E_B}$. Thus
\begin{subequations}\label{eq:theta-varphi}
\begin{multline}
1+ \max(H_{0,\eps}(\varphi_V^{E_A}),H_{0,\eps}(\varphi_V^{E_B}))
\\ \geq H_{0,\eps}(\Theta_V^{RE_A})
\\ \geq \max(H_{0,2\eps}(\varphi_V^{E_A}),H_{0,2\eps}(\varphi_V^{E_B}))
 \label{eq:theta-varphi-0}
 \end{multline}
\begin{multline}
1+ \min(H_{\infty,\eps}(\varphi_V^{E_A}),H_{\infty,\eps}(\varphi_V^{E_B}))
\\ \leq H_{\infty,\eps}(\Theta_V^{RE_A}) \\
\leq 1+ \min(H_{\infty,2\eps}(\varphi_V^{E_A}),H_{\infty,2\eps}(\varphi_V^{E_B}))\label{eq:theta-varphi-inf}.
 \end{multline}
\end{subequations}

Pretend for a moment that $\eps=0$.  In that case
$H_0(\Theta_V^{RE_A}) \geq H_0(\varphi_V^{E_A}) \geq
H(\varphi_V^{E_A}) = H(\tilde\Theta^{E_A})$. Combining this with the
fact that $H_\infty \leq S$, we have that $\Delta_0(\Theta_V^{RE_A})
\geq H(\tilde\Theta^{E_A}) - H(\tilde\Theta^{RE_A}) =
-H(R|E_A)_{\tilde\Theta} = H(R)_{\tilde\Theta} -
I(R;B^nE_B)_{\tilde\Theta}$.  Rearranging we have that the sum of
the spread ($\Delta_0(\Theta_V^{RE_A})$) and the mutual information
($I(R;B^nE_B)_{\tilde\Theta}$) is at least $H(R)_{\tilde\Theta} = 1
+ \frac{1}{2}n\log d$.  Since spread and mutual information are both
$\leq 2q$, we obtain the desired result.

The difficulty in extending this argument to the $\eps>0$ case is (a)
that spread can vary dramatically under small perturbations in the
state (as observed even in situations as simple as entanglement
dilution~\cite{LP99, HW02, HL02}), and (b) that the dimensions of $E_A,E_B$ are unbounded, and so Fannes' inequality is difficult to apply.  The second difficulty is easiest to address: we will use a variant of Fannes' inequality known as the Alicki-Fannes inequality, which bounds the variation of $H(R|E_A)$ using only $|R|$ and not $|E_A|$.
\begin{lemma}[Alicki-Fannes inequality~\cite{AF04}]\label{lem:AF}
If $\eps:=\frac{1}{2}\|\rho^{XY}-\sigma^{XY}\|_1 < 1/2$ then
\be |H(X|Y)_\rho - H(X|Y)_\sigma| \leq 8\eps\log|X|  + 2H_2(2\eps)
\label{eq:AF-ineq}\ee
\end{lemma}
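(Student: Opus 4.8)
The plan is to give the standard interpolation argument behind the Alicki--Fannes inequality (as in~\cite{AF04}, in its sharpened form). Its crucial feature, compared with a naive application of the Fannes--Audenaert inequality \eq{fannes} to $H(XY)$ and $H(Y)$ separately, is that it never estimates $H(Y)$ by itself, and so produces a bound depending on $|X|$ but not on $|Y|$. Two soft properties of the conditional entropy are all that is needed: that it is bounded, $|H(X|Y)_\omega|\le\log|X|$ for every state $\omega$ on $XY$, and that it is both concave and ``almost concave'' in the state, i.e. $\sum_i p_iH(X|Y)_{\rho_i}\le H(X|Y)_{\sum_i p_i\rho_i}\le \sum_i p_iH(X|Y)_{\rho_i}+H(\{p_i\})$.

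First I would build the interpolating state. Assume $\eps>0$ (otherwise there is nothing to prove) and Jordan--Hahn decompose $\rho-\sigma=\eps\,(\tau_+-\tau_-)$ with $\tau_\pm\ge 0$ of orthogonal support and unit trace, which is possible exactly because $\rho-\sigma$ is traceless with $\tfrac12\|\rho-\sigma\|_1=\eps$. Then $\omega:=\rho+\eps\tau_-=\sigma+\eps\tau_+$ is a positive operator of trace $1+\eps$, so $\hat\omega:=\omega/(1+\eps)$ is a genuine state carrying the two convex decompositions $\hat\omega=\tfrac1{1+\eps}\rho+\tfrac{\eps}{1+\eps}\tau_-=\tfrac1{1+\eps}\sigma+\tfrac{\eps}{1+\eps}\tau_+$. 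Bounding $H(X|Y)_{\hat\omega}$ from above through the first decomposition (almost concavity, with defect $H_2(\tfrac{\eps}{1+\eps})$) and from below through the second (concavity), then subtracting, multiplying through by $1+\eps$, and using $H(X|Y)_{\tau_-}-H(X|Y)_{\tau_+}\le 2\log|X|$, gives $H(X|Y)_\sigma-H(X|Y)_\rho\le 2\eps\log|X|+(1+\eps)H_2(\tfrac{\eps}{1+\eps})$; exchanging the roles of $\rho$ and $\sigma$ gives the matching bound, hence
\be |H(X|Y)_\rho-H(X|Y)_\sigma|\ \le\ 2\eps\log|X|+(1+\eps)H_2\!\big(\tfrac{\eps}{1+\eps}\big). \ee
Since $\eps<1/2$, generously loosening the constants (and, in the range $\eps>\tfrac14$, simply using the trivial bound $|H(X|Y)_\rho-H(X|Y)_\sigma|\le 2\log|X|\le 8\eps\log|X|$) yields the stated form $8\eps\log|X|+2H_2(2\eps)$ of \eq{AF-ineq}.

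The one step that needs real care is the ``almost concavity'' of $H(X|Y)$ with a \emph{binary}-entropy defect rather than a $\log$-dimension defect. I would prove it by writing $H(X|Y)=H(XY)-H(Y)$, adjoining a classical flag register $I$ recording which $\rho_i$ was drawn into $\tilde\rho^{XYI}=\sum_i p_i\,\rho_i^{XY}\otimes\proj{i}^I$, and combining ordinary subadditivity-with-defect $H(XY)_{\sum_i p_i\rho_i}\le H(I)+\sum_i p_iH(XY)_{\rho_i}$ with concavity $H(Y)_{\sum_i p_i\rho_i}\ge\sum_i p_iH(Y)_{\rho_i}$; the concavity of $H(X|Y)$ itself is $H(X|Y)_{\rho^{XY}}-H(X|YI)_{\tilde\rho}=I(X;I|Y)\ge 0$ by strong subadditivity, with $H(X|YI)_{\tilde\rho}=\sum_i p_iH(X|Y)_{\rho_i}$ because $I$ is classical. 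Everything else is routine algebra; the point to keep in view throughout is that the only dimension-dependent inequality invoked, $|H(X|Y)|\le\log|X|$, is uniform in the conditioning system, which is precisely why no $\log|Y|$ enters the final estimate and what distinguishes this argument from a direct use of \eq{fannes}.
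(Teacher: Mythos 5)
Your argument is correct, but note that the paper does not actually prove this lemma --- it simply cites~\cite{AF04} --- so there is no in-paper proof to compare against. What you have written is essentially the sharpened ``interpolation'' form of the Alicki--Fannes bound: the state $\hat\omega$ straddling $\rho$ and $\sigma$ via the Jordan--Hahn pieces $\tau_\pm$, together with concavity and almost-concavity of $H(X|Y)$ (both correctly reduced to strong subadditivity and the estimate $H(\sum_i p_i\rho_i)\le H(\{p_i\})+\sum_i p_iH(\rho_i)$), delivers the stronger bound $2\eps\log|X|+(1+\eps)H_2\!\left(\frac{\eps}{1+\eps}\right)$, from which the paper's $8\eps\log|X|+2H_2(2\eps)$ follows, and this is a cleaner and quantitatively tighter route than the constants a naive two-term Fannes--Audenaert estimate would produce. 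The only step you leave informal is the final ``generous loosening'': to land on the stated constants you should actually verify $(1+\eps)H_2\!\left(\frac{\eps}{1+\eps}\right)\le 2H_2(2\eps)$ for $\eps\in(0,1/4]$ --- the identity $(1+\eps)H_2\!\left(\frac{\eps}{1+\eps}\right)=-\eps\log\eps+(1+\eps)\log(1+\eps)$ reduces this to an elementary monotonicity check --- and in the $\eps>1/4$ branch note that $8\eps\log|X|\ge2\log|X|$ tacitly assumes $|X|\ge2$, the case $|X|=1$ being trivial since both conditional entropies vanish. Neither point affects the validity of the idea, but both deserve a line before the proof could be quoted.
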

To address the unbounded Lipschitz constant of $\Delta_0$, we will need to look more carefully at how $\Theta_V$ and $\tilde\Theta$ are related.
%First, we use \eq{theta-varphi} to obtain \be \Delta_\eps(\Theta_V^{RE_A}) \geq H_{0,2\eps}(\varphi_V^{E_A}) - H_{\infty,2\eps}(\varphi_V^{E_B}) \ee
First, we replace $\varphi_V$ with a low-spread approximation.   From \eq{S-opt}, we obtain nonnegative operators $M_{A,0}, M_{A,\infty}, M_{B_0}, M_{B,\infty}$ whose largest eigenvalues are $\leq 1$ and that satisfy
\bit
\item $1-2\eps = \tr M_{A,0}\varphi^{E_A} = \tr M_{A,\infty}\varphi^{E_A}$
\item $1-2\eps = \tr M_{B,0}\varphi^{E_B} = \tr M_{B,\infty}\varphi^{E_B}$
\item $M_{A,0}, M_{A,\infty}, \varphi^{E_A}$ all commute
\item $M_{B,0}, M_{B,\infty}, \varphi^{E_B}$ all commute
\item $H_0(M_{A,0}\varphi^{E_A}) = H_{0,2\eps}(\varphi^{E_A})$
\item $H_\infty(M_{A,\infty}\varphi^{E_A}) = H_{\infty,2\eps}(\varphi^{E_A})$
\item $H_0(M_{B,0}\varphi^{E_B}) = H_{0,2\eps}(\varphi^{E_B})$
\item $H_\infty(M_{B,\infty}\varphi^{E_B}) = H_{\infty,2\eps}(\varphi^{E_B})$
\eit
We can now define
\be \ket{\hat\varphi} := \gamma^{-1/2}(I \ot \sqrt{M_{A,0}M_{A_\infty}} \ot
\sqrt{M_{B,0}M_{B_\infty}})\ket{\varphi_V}, \ee
where $\gamma\geq 1-8\eps$ is a normalizing constant, chosen so that $\braket{\hat\varphi}{\hat\varphi}=1$.
For ease of calculations, we will choose $\eps\leq 1/16$, so that $\log(1-8\eps) \geq -1$.
    Then
\begin{subequations}\label{eq:varphi-hat-0}
\ba H_0(\hat\varphi^{E_A}) &\leq H_{0,2\eps}(\varphi_V^{E_A}) \\
H_0(\hat\varphi^{E_B}) &\leq H_{0,2\eps}(\varphi_V^{E_B})
\ea\end{subequations}\vspace{-5mm}
\begin{subequations}\label{eq:varphi-hat-inf}\ba
H_{\infty}(\hat\varphi^{E_A}) &\geq H_{\infty,2\eps}(\varphi_V^{E_A}) + 1\\
H_{\infty}(\hat\varphi^{E_B}) &\geq H_{\infty,2\eps}(\varphi_V^{E_B}) + 1
\ea
\end{subequations}

Now we use $\ket{\hat\varphi}$ to define
\be \ket{\hat\Theta}^{RB^nE_AE_B} := \frac{\ket{0^n}^R\ket{\hat\varphi}^{B^nE_AE_B} +
\ket{0^n}^{B^n}\ket{\hat\varphi}^{RE_AE_B}}{\sqrt{2}}.\ee
Observe that $\braket{\Theta_V}{\hat\Theta} = \braket{\varphi_V}{\hat\varphi}=\sqrt{\gamma}\geq \sqrt{1-8\eps}$, implying
$\frac{1}{2}\|\Theta_V - \hat\Theta\|_1 \leq \sqrt{8\eps}$.  Combined with \eq{tilde-fidelity}, we obtain
\be \frac{1}{2}\|\tilde\Theta - \hat\Theta\|_1 \leq 4\sqrt\eps.
\label{eq:theta-dbl-approx}\ee
The advantage of $\hat\Theta$ is that it has exactly the same structure as $\Theta_V$, but with $\ket{\varphi_V}$ replaced with $\ket{\hat\varphi}$. Thus it similarly satisfies
\be \hat\Theta^{E_A} = \hat\varphi^{E_A} \quad\text{and}\quad
\hat\Theta^{E_B} = \hat\varphi^{E_B}
\label{eq:hat-theta-varphi},\ee
and thus \eq{theta-varphi} still holds when $\Theta_V$ is replaced with $\hat\Theta$ and $\varphi_V$ is replaced with $\hat\varphi$.

We now conclude with a traditional chain of entropic inequalities, with each step labeled by its justification:
\begin{subequations}
\bas
2q  \geq &\Delta_0(\tilde\Theta^{RE_A}) \\
 \geq& \Delta_\eps(\Theta^{RE_A})) & \text{\lemref{spread-normalized}} \\
 \geq& H_{0,\eps}(\Theta^{RE_A}) - H_{\infty,\eps}(\Theta^{RE_A}) & \text{\eq{spread-vs-S}}\\
 \geq &\max(H_{0,2\eps}(\varphi_V^{E_A}),H_{0,2\eps}(\varphi_V^{E_B}))
 & \text{\eq{theta-varphi-0}} \\
 & -\min(H_{\infty,2\eps}(\varphi_V^{E_A}),H_{\infty,2\eps}(\varphi_V^{E_B})) -1
 & \text{\eq{theta-varphi-inf}} \\
 \geq &\max(H_{0}(\hat\varphi_V^{E_A}),H_{0}(\hat\varphi_V^{E_B}))
 & \text{\eq{varphi-hat-0}} \\
 & -\min(H_{\infty}(\hat\varphi_V^{E_A}),H_{\infty}(\hat\varphi_V^{E_B})) - 3
 & \text{\eq{varphi-hat-inf}} \\
 \geq& H_{0}(\hat\varphi^{E_A}) - H_{\infty}(\hat\Theta^{RE_A}) -3& \text{\eq{theta-varphi-inf}}\\
 =& H_{0}(\hat\Theta^{E_A}) - H_{\infty}(\hat\Theta^{RE_A}) -3& \text{\eq{hat-theta-varphi}}\\
\geq& H(\hat\Theta^{E_A}) - H(\hat\Theta^{RE_A}) -3& \hspace{-2mm}H_0\geq S \geq H_\infty \\
=& -H(R|E_A)_{\hat\Theta} -3\\
\geq& -H(R|E_A)_{\tilde\Theta} - 32\sqrt\eps n\log d - 5 & \text{\lemref{AF}} \\
=& -H(R|E_A)_{\tilde\Theta} - \delta & \hspace{-1cm}\delta:=32\sqrt\eps n\log d + 5 \\
=& H(R)_{\tilde\Theta} - I(R;E_A)_{\tilde\Theta} - \delta \\
=& \L(1+\frac{1}{2}n\log d\R) - I(R;E_A)_{\tilde\Theta} - \delta \\
\geq& \L(1+\frac{1}{2}n\log d\R) - 2q-\delta \eas
\end{subequations}

\end{IEEEproof}

\section{Conclusion}
We conclude by summarizing the operational and technical consequences
of our work, as well as some open problems.

Operationally, we establish necessary and sufficient amounts of standard noiseless resources
for simulation of discrete memoryless quantum channels, including classical DMCs as a special
case.  As is usual in Shannon theory, simulations become efficient and faithful only in the
limit of large block size, even in cases where the simulation capacity is given by a
single-letter formula. We consider both ordinary and feedback simulations, a feedback
simulation being one in which the simulating sender coherently retains what the simulated
channel would have discarded into its environment. We consider simulations on both tensor
power sources (the quantum generalization of classical IID sources) and general sources,
which may be correlated or entangled over the multiple inputs, a distinction that becomes
important for quantum channels.  We also establish conditions for asymptotic equivalence
among channels, that is conditions under which channels can simulate one another
efficiently and reversibly, so that the capacity for channel $\cM$ to simulate $\cN$ is
the reciprocal of that for performing the simulation in the opposite direction.  Such
equivalences generally hold only in the presence of some combination of auxiliary resources,
which by themselves would have no capacity for channel simulation.  In each case, an unlimited
supply of the auxiliary resources enables asymptotically reversible cross-simulation. For cross-simulations among classical channels, shared randomness is a
necessary and sufficient auxiliary resource. For quantum channels on tensor power sources,
ordinary shared entanglement is necessary and sufficient. For quantum channels on general
sources, more general entangled states (``entanglement-embezzling states'') or combinations
of resources, such as entanglement and classical back-communication, are required.  Finally,
in many cases of interest, we quantify the loss of efficiency and reversibility when an auxiliary resource
is insufficient or absent. In this respect, we feel that our
Theorem~\ref{thm:clueless-eve-spread}
is not giving a tight bound, due to an imperfect proof technique.
One problem is that mutual information and spread are not placed on a common
footing, as they are in the case when simulating an isometry (feedback case).

On the technical side, we can now understand quantum simulations of quantum channels in
terms of three key ingredients:
\benum
\item {\em State splitting} (also known as the reverse of state
  merging~\cite{HOW05,ADHW06}) in which a {\em known} tripartite state
  $\Psi^{ABC}$ begins with $C$ held by Alice and ends with $C$ held by
  Bob.  Note that this is a coherent version of measurement
  compression~\cite{Winter:POVM}, upon which early QRST proofs were based.
\item {\em Entanglement spread}, which measures how far a state is
  from maximally entangled on some subspace~\cite{Har-spread}, and turns out to be
  necessary when protocols requiring different numbers of ebits
  need to be executed in superposition.
\item {\em Dividing the environment between Alice and Bob}, which
  starts with the
  ``Church of the Larger Hilbert Space'' principle that mixed states
  have purifications, and proceeds to the observation that in a
  protocol using only noiseless resources any simulated environment
  must WLOG be divided between the sender and receiver.  This form of
  the idea first appeared in \cite{purification} and is necessary to
  understand the low-entanglement versions of the QRST.
\eenum
These concepts were known to the quantum information theory community separately in
various contexts, but find their common use in the QRST.

A number of interesting open questions remain.  On the
technical side, we observe that for
classical channels, \lemref{flat-crst} gives low error in the worst
case, but for quantum channels, \lemref{flat-qrst} only gives
average-case bounds.  While this can be addressed by using shared
randomness catalytically (and thereby without increasing the overall
cost of the protocol), a more direct proof would be preferable.

More ambitiously, we observe that the classical and quantum reverse
Shannon theorems are incomparable because the assistance of shared
entanglement is stronger than the assistance of shared randomness even
for purely classical channels (cf the discussion at the end of
\secref{qrst-statement}).  This is in contrast to the fact that
Shannon's noisy coding theorem can be viewed as a special case of the
entanglement-assisted capacity theorem.  It would be desirable to have
a single theorem that stated the cost of simulating a channel given
the assistance of an arbitrary rate of randomness and entanglement.
Some encouraging progress in this direction is given by \cite{WHBH12, BRW13},
which shows that for QC channels (i.e. quantum input, classical
output) shared randomness can be used in place of shared
entanglement.  Another direction for generalization is to consider
simulations that use side information, along the lines of \cite{YD09,WDHW13}.

There are also new questions about additivity and regularization that
arise when considering low-entanglement simulations of quantum
channels, most of which are completely open.  For example, the
zero-entanglement point on the tradeoff curve corresponds to the
entanglement of purification~\cite{purification} whose additivity
properties are still open (but see \cite{EoP-doubt} for recent work
suggesting that it is not additive).

\section{Acknowledgments} We wish to acknowledge helpful discussions
with Paul Cuff, Patrick Hayden, Jonathan Oppenheim, Graeme Smith, John
Smolin and Mark Wilde.

\appendix
%\section{Proofs from \secref{schur-typical}}\label{app:typical-proofs}

In this appendix, we prove \lemref{joint-schur} and \lemref{spread-normalized}.

First, we prove \lemref{joint-schur}, restated below for
convenience.  We follow the proof of Section 6.4.3 of \cite{Har05},
but simplify and streamline the arguments at the cost of proving a
less general claim.

\begin{replemma}{lem:joint-schur}
Let $d = \max(d_A, d_B, d_E)$.
For any state $\ket{\varphi}^{R^nA^n}$ with $\ket{\Psi} = (I \ot
U_\cN)^{\ot n}\ket\varphi$,
\bmu \L\|\, \ket\Psi -
\sum_{(\lambda_A,\lambda_B,\lambda_E)\in T_{\cN,\delta}^n}
I\ot ((\Pi_{\lambda_B} \ot \Pi_{\lambda_E})U_\cN^{\ot n}
\Pi_{\lambda_A})\ket\varphi\R\|_1
\\\leq n^{O(d^2)}\exp\L(-n\frac{\delta^2}{8\log^2(d)}\R).
\label{eq:typicality}\emu
\end{replemma}

\begin{IEEEproof}
By the triangle inequality, the LHS of \eq{typicality} is
$$\leq \sum_{(\lambda_A,\lambda_B,\lambda_E)\not\in T_{\cN,\delta}^n}
\| (I\ot (\Pi_{\lambda_B} \ot \Pi_{\lambda_E})U_\cN^{\ot n}
\Pi_{\lambda_A})\ket\varphi\|_1 .$$
We now consider a particular triple $(\lambda_A, \lambda_B, \lambda_E)\not\in
T_{\cN,\delta}^n$.  Let
\ba \eps & = \| (I\ot (\Pi_{\lambda_B} \ot
\Pi_{\lambda_E})U_\cN^{\ot n} \Pi_{\lambda_A})\ket\varphi\|_1 \\
& = \tr (\Pi_{\lambda_B} \ot \Pi_{\lambda_E})
U_\cN^{\ot n} \Pi_{\lambda_A} \varphi^A \Pi_{\lambda_A}
(U_\cN^\dag)^{\ot n}
\label{eq:typ-err-1}
\ea

In this last step, we observe that all of the terms commute with
collective permutations except for $\varphi^A$.  Thus, \eq{typ-err-1}
is unchanged if we replace $\varphi^A$ with its symmetrized version,
$\tilde\varphi^A := \frac{1}{n!} \sum_{\pi\in \cS_n} \pi \varphi^A
\pi^{-1}$.  Next, observe that
$$\Pi_{\lambda_A} \tilde\varphi^A \Pi_{\lambda}
= %w_{\lambda_A}
\proj{\lambda_A} \ot \sigma \ot
\frac{I_{\cP_{\lambda_A}}}{\dim \cP_{\lambda_A}},$$
where %$w_\lambda = \tr \Pi_{\lambda_A}\varphi^A$ and
$\sigma$ is some
(subnormalized) density matrix on $\cQ_{\lambda_A}^{d_A}$.  This implies
that
\be \Pi_{\lambda_A} \tilde\varphi^A \Pi_{\lambda}
\leq %w_{\lambda_A} %\dim \cQ_{\lambda_A}^{d_A}
\proj{\lambda_A} \ot I_{\cQ_{\lambda_A}^{d_A}}
%\frac{I_{\cQ_{\lambda_A}^{d_A}}}{\dim \cQ_{\lambda_A}^{d_A}}
\ot \frac{I_{\cP_{\lambda_A}}}{\dim \cP_{\lambda_A}}
= \frac{\Pi_{\lambda_A}}{\dim \cP_{\lambda_A}}
\label{eq:typ-oper-ineq}\ee

Next, define the single-system density matrix $\rho = \sum_{i=1}^{d_A}
\bar\lambda_{A,i} \proj i$.  By \eq{schur-proj}, we have
$$\tr \Pi_{\lambda_A} \rho^{\ot n} \geq
(n+d)^{-d(d+1)/2}.$$
Thus, if we twirl $\rho^{\ot n}$, we find that
\bmu \frac{\Pi_{\lambda_A}}{\dim \cP_{\lambda_A}}
\leq \bbE_{U\in \cU_{d_A}}[(U\rho U^\dag)^{\ot n}]
\cdot (n+d)^{d(d+1)/2} \dim \cQ_{\lambda_A}^{d_A}
\\ \leq
\bbE_{U\in \cU_{d_A}}[(U\rho U^\dag)^{\ot n}](n+d)^{d^2},
\emu
where in the second step we have used \eq{cQ-bound}.
Combining this equation with \eq{typ-oper-ineq}, we obtain the operator
inequality
$$ \Pi_{\lambda_A} \tilde\varphi^A \Pi_{\lambda} \leq
%w_{\lambda_A}
\bbE_{U\in \cU_{d_A}}[(U\rho U^\dag)^{\ot n}](n+d)^{d^2}.$$
Let $r_B, r_E$ be the spectra respectively of the $B$ and $E$ parts of
$U_\cN U \rho U^\dag U_\cN^\dag$.  Then by the definition of
$T_{\cN,\delta}^n$ we have that $\|r_B-\bl_B\|_1 + \|r_E -
\bl_E\|_1 > \delta/\log(d)$.  Thus, at least one of these distances
must be $>\delta/2\log(d)$. By Pinsker's inequality it follows that
 either $D(\bl_B\|r_B)\geq \delta^2/8\log^2(d)$ or
$D(\bl_E\|r_E)\geq \delta^2/8\log^2(d)$.
This in turn means we can bound
\ba \eps &\leq %w_{\lambda_A}
 (n+d)^{d^2}
\tr (\Pi_{\lambda_B} \ot \Pi_{\lambda_E})
(U_\cN U\rho U^\dag U_\cN^\dag)^{\ot n}
\non\\
& \leq %w_{\lambda_A}
(n+d)^{d(3d-1)/2}\exp(-n
\max(D(\bl_B\|r_B),D(\bl_E\|r_E)))
\non\\ & \leq %w_{\lambda_A}
(n+d)^{d(3d-1)/2}\exp\L(-n\frac{\delta^2}{8\log^2(d)}\R)
\ea
Finally, we sum over all $(\lambda_A,\lambda_B,\lambda_E)\not\in
T_{\cN,\delta}^n$ to upper-bound the LHS of \eq{typicality} by
$|\cI_{d,n}|^3(n+d)^{d(3d-1)/2}\exp(-n\delta^2/8\log^2(d))
\leq (n+d)^{\frac{d(3d+5)}{2}}\exp(-n\delta^2/8\log^2(d))$.
\end{IEEEproof}

\begin{replemma}{lem:spread-normalized}
\begin{multline} \max(0,\Delta_\eps(\rho))
\\= \min\{\Delta_0(\sigma) : \frac{1}{2}\|\rho-\sigma\|_1\leq \eps, 0\leq \sigma, \tr\sigma=1\}
\label{eq:spread-equiv}
\end{multline}
\end{replemma}
\begin{IEEEproof}
If $\Delta_\eps(\rho)=\delta$ then by definition there exists $\sigma$ satisfying $\Delta_0(\sigma)=\delta$, $0\leq\sigma\leq \rho$, $\rho\sigma=\sigma\rho$ and $\tr\sigma=1-\eps$, which implies that $\|\rho-\sigma\|_1 = \tr(\rho-\sigma)=\eps$.  Let the  nonzero eigenvalues of $\sigma$ be $s_1\geq \cdots \geq s_d >0$.  Then $ds_1=2^\delta$ and $\sum_{i=1}^d s_i=1-\eps$.  We can add up to $2^\delta-(1-\eps)$ weight to these eigenvalues while keeping them all $\leq s_1$.   Thus, if
\be \eps \leq 2^\delta-(1-\eps),
\label{eq:eps-condition}\ee
 then we can add $\eps$ weight to $\sigma$, thus obtaining a normalized state, without increasing its $\Delta_0$.  Call the resulting density matrix $\omega$.  Then $\Delta_0(\omega)=\delta$ and $\|\omega-\rho\|_1 \leq 2\eps$ by the triangle inequality.   This is possible whenever \eq{eps-condition} holds, or equivalently, whenever $\delta\geq 0$.

 If $\delta<0$, then we cannot create a normalized state without increasing the spread, since any normalized state has $\Delta_0 \geq 0$.  Instead we will take $\omega$ to be the maximally mixed state on $\supp\sigma$.  Note that $\sigma\leq \omega$, since $s_1 = 2^\delta / d < 1/d$.  Thus $\|\omega-\sigma\|_1 = \tr(\omega-\sigma)=\eps$ and we again have $\|\omega-\rho\|_1 \leq 2\eps$.

 This establishes that the RHS of \eq{spread-equiv} is $\leq$ the LHS.  To show the other direction, suppose that there exists a normalized $\omega$ satisfying $\frac{1}{2}\|\rho-\omega\|_1\leq \eps$ and $\Delta_0(\omega)=\delta$.  Then we can write $\rho-\omega=A-B$ where $A,B\geq 0$ and $\tr A=\tr B=\eps$.  Define $\sigma=\rho-A=\omega-B$.  Then $\tr\sigma=1-\eps$, $0\leq \sigma\leq \rho$ and $\sigma\leq \omega$, implying $\Delta_0(\sigma)\leq \Delta_0(\omega)=\delta$.
\end{IEEEproof}

\end{document}